\newcommand{\R}{\mathbb{R}}
\newcommand{\E}{\mathbb{E}}
\newcommand{\N}{\mathbb{N}}
\newcommand{\G}{\mathbb{G}}
\newcommand{\Z}{\mathbb{Z}}
\newtheorem{theorem}{Theorem}[section]
\newtheorem{corollary}[theorem]{Corollary}
\newtheorem{lemma}[theorem]{Lemma}
\theoremstyle{definition}
\newtheorem{definition}[theorem]{Definition}
\newtheorem{example}[theorem]{Example}
\DeclareMathOperator{\Image}{Im}
\DeclareMathOperator{\Cov}{Cov}
\DeclareMathOperator{\dom}{dom}
\DeclareMathOperator{\Aut}{Aut}
\DeclareMathOperator{\sign}{sign}
\DeclareMathOperator{\image}{Im}
\DeclareMathOperator{\ob}{ob}
\DeclareMathOperator{\Set}{Set}
\newcommand{\ed}{\mathrm d}
\newcommand{\id}{\mathrm{id}}
\newcommand{\AMPR}{\mathrm{AMPR}}
\newcommand{\FinM}{{\mathrm{FinM}}}
\newcommand{\VecM}{{\mathrm{VecM}}}
\newcommand{\DualM}{{\mathrm{DualM}}}
\newcommand{\Dual}{{\mathrm{Dual}}}
\newcommand{\Measures}{{\mathrm{Measures}}}
\renewcommand{\Vec}{{\mathrm{Vec}}}
\newcommand{\Group}{{\mathrm{Group}}}
\newcommand{\Top}{{\mathrm{Top}}}
\newcommand{\Prob}{{\mathrm{Prob}}}
\newcommand{\Fin}{{\mathrm{Fin}}}
\renewcommand{\P}{\mathbb P}
\newcommand{\Q}{\mathbb Q}
\begin{document}

\title{Classifying Financial Markets up to Isomorphism}

\author{
John Armstrong}
\date{}

\maketitle

\begin{abstract}
Two markets should be considered isomorphic
if they are financially indistinguishable.
We define a notion of isomorphism for financial markets
in both discrete and continuous time. We then seek to identify
the distinct isomorphism classes, that is to classify markets.

We classify complete one-period markets. We define an invariant of 
continuous time complete markets which we call the absolute
market price of risk. This invariant plays a role analogous
to the curvature in Riemannian geometry. We classify markets when
the absolute market price of risk is deterministic.

We show that, in general, markets with
non-trivial automorphism groups admit mutual fund theorems. We prove
a number of such theorems.
\end{abstract}

\section*{Introduction}

Two financial markets should be considered equivalent if there is a bijective correspondence between the investment strategies in each market which preserves both the costs and the payoff distributions of these strategies. This intuition allows us to define a formal notion of isomorphism for financial markets.  In the language of category theory 
\cite{eilenbergmaclane}, we shall define the category of financial markets.

We shall then demonstrate that one can prove financially interesting classification theorems. We classify Gaussian markets and complete one-period markets. We also prove a partial classification theorem for complete continuous-time markets with a fixed risk-free rate which we will now describe.

The minimum number of assets required to replicate an arbitrary contingent claim gives one basic invariant of such markets, the dimension. The next useful 
invariant we identify is the length of the market-price-of-risk vector, which we call the {\em absolute market price of risk}. While it
is easy to define other invariants, this has the advantage of being a {\em local invariant}, by which we mean that it can be calculated from 
the coefficients of an SDE defining the asset price dynamics by simple algebra and differentiation. The absolute market price of 
risk gives a basic invariant of markets up to isomorphism. In this sense it is analogous to the Riemannian  curvature, which
gives a basic invariant of Riemannian manifolds up to isometry.

We classify continuous-time complete markets whose absolute market price of risk is deterministic. Markets with constant absolute market price of risk are determined up to isomorphism by just their dimension, the risk-free rate and the absolute market price of risk and are isomorphic to Black--Scholes--Merton markets.

\medskip

Our classification theorems have a number of interesting financial applications.

Firstly, one can often use a classification theorem to illuminate a mathematical proof using without-loss-of-generality arguments.
We will see that one can specify an $n$-dimensional Black--Scholes--Merton market up to isomorphism
using only the parameters of dimension, risk-free rate
and absolute market price of risk. This allows one to prove financial results for these markets by considering only markets with
particularly simple forms. Our classification of one-period complete markets
admits similar applications.

Secondly, we establish a connection between the automorphisms of a market and mutual-fund theorems. We prove
that investment strategies solving invariant convex optimization problems in a market can be assumed to be invariant under automorphisms. For
markets with large symmetry groups such as Black--Scholes--Merton markets, this imposes strong limitations on the form of optimal investment
strategies, giving a significant generalization of the classical mutual-fund theorems.

Thirdly, we will see that a surprisingly large number of markets are isomorphic to a Black--Scholes--Merton market, and so
financial results proved for such markets can be applied more widely than one might expect. In particular given any diffusion model,
one can obtain a related Black--Scholes--Merton market by making an appropriate choice of drift. This is 
significant since the drift is difficult to estimate from statistical evidence and its functional form is usually chosen for parsimony.
This result implies that one can
find, for example, stochastic-volatility models which are isomorphic to Black--Scholes--Merton markets. 

\medskip

The effect of transformations on a market has been considered by many authors. If one considers the asset prices as stochastic trajectories in $\R^n$ one can ask how the dynamics change under diffeomorphisms of $\R^n$. Stochastic differential equations (SDEs) on manifolds have been studied extensively and this has lead to a variety of geometric formulations \cite{armstrongBrigoJets, belopolskaya2012stochastic,elworthy, emery, gliklikh2010global, itoManifold, hsu}. The diffusion term of a non-degenerate SDE on a manifold can be interpreted as defining a Riemannian metric and this yields a connection between Riemannian curvature and SDEs. The geometric theory of SDEs on manifolds has been successfully applied to finance in, for example, \cite{henryLabordere}. However,
the maps induced by diffeomorphisms of $\R^n$ are hard to interpret financially
since financially important properties, such as whether a process is a martingale, are not preserved by diffeomorphisms.

In this paper, the transformations we consider are those that do preserve financially important properties. They are given by maps between the underlying probability spaces defining the markets rather than on the space $\R^n$. The objects in our categories are given by filtered probability spaces equipped with cost functionals. This probabilistic definition of a market is influenced by the work of Pennanen \cite{pennanen, pennanenDuality}. The morphisms we define
are built upon the theory of probability-space homomorphisms developed by Rokhlin in \cite{rokhlin} (who extended the work of von Neumann in \cite{vonNeumann}).

Rokhlin's classifications
of standard probability spaces and their homomorphisms are key ingredients in our classification of one-period complete markets. While Rokhlin's results are all we need for this paper, we note that the category theory of probability has been developed by other authors, for example the theory of stochastic processes 
is explored from a categorical viewpoint in \cite{giryCategorical}.

Category theory was applied to financial markets in
\cite{armstrongMarkowitz}, which classifies Markowitz markets 
and relates the classical mutual-fund theorems to the symmetries of the market.
The formulation of this earlier paper is purely algebraic. The 
probabilistic formulation we will develop is more fundamental and more general.

\medskip

The structure of the paper is as follows.

In Section \ref{sec:markowitz} we define the category of discrete-time markets. We prove a general mutual-fund theorem for
markets with automorphisms. We prove the equivalence (or more precisely the duality) between the formulation of categories given in this paper
and the algebraic approach of \cite{armstrongMarkowitz}. We illustrate our mutual-fund theorem with the example of Gaussian markets.

In Section \ref{sec:completeOnePeriod} we classify complete one-period markets. We first give a simplified classification by assuming that one can additionally invest in a ``casino'', which is a complete market where the $\P$ and $\Q$ measures coincide. This form of the classification is sufficient for most applications.
We give a full classification for markets without using the casino. Our general mutual-fund theorem only applies to convex optimization problems, but
for complete one-period markets we are also able to prove a mutual-fund theorem for problems where all agents have monotonic preferences.
This generalization is useful for solving problems involving investors with S-shaped utility, as motivated by the theory of Kahneman and Tversky \cite{kahnemanTversky}.

In Section \ref{sec:ctstime} we extend out category to multi-period and continuous-time markets. We classify complete continuous-time markets
with constant absolute market price of risk.

Since the primary novelty of this paper is our definitions of financial categories,
the resulting classification results and their financial applications, the proofs of our results have been placed in an online appendix.
If the reader is unfamiliar with category theory, a short review of the basic terminology
we require can be found in a second appendix. The appendices may be found on the journal website or in the arXiv version of this manuscript \cite{classificationArxiv}.

\section{Finite-dimensional linear markets}
\label{sec:markowitz}

In this section we give a coordinate-free definition of a one-period financial market and relate this to the elementary, coordinate-based approach of defining a market in $n$-assets using a probability distribution
on $\R^n$. We will illustrate
with the example of the Markowitz model. We will use this to demonstrate
the relationship between invariant investment strategies
and mutual-fund theorems.

We begin by recalling a number of definitions due to Rokhlin \cite{rokhlin} for
morphisms between probability spaces.

\begin{definition}
	Let $(\Omega_1, {\cal F}_1, \P_1)$ and $(\Omega_2, {\cal F}_2, \P_2)$  be two	
	probability spaces. A map $\phi:\Omega_1 \to \Omega_2$ is called a {\em homomorphism}
	if $\phi$ is measurable and if $\P_1(\phi^{-1}U)=\P_2(U)$ for all $U \in {\cal F}_2$.
	A homomorphism $\phi$ is called an {\em isomorphism} if it is bijective and its inverse is a homomorphism. We call $\phi$ a {\em mod 0 isomorphism} if there are subspaces $\Omega_1^\prime \subseteq \Omega_1$ and $\Omega_2^\prime \subseteq \Omega_2$ both of full measure such that $\phi$ restricted to $\Omega_1$ is an isomorphism to $\Omega_2$. 
\end{definition}

\noindent From the point of view of probability theory, two probability spaces should
be considered as equivalent if they are mod 0 isomorphic.
We define the category $\Prob$ to have objects given by probability spaces and morphisms given
by almost-sure equivalence classes of homomorphisms. Rohklin's definition of a mod 0 isomorphism does not
coincide exactly with the set of isomorphisms in $\Prob$. The next lemma explains how the two notions are related.

\begin{lemma}
	A measurable function is a mod 0 isomorphism if and only if its almost-sure equivalence class is a $\Prob$ isomorphism.
	\label{lemma:mod0Lemma}
\end{lemma}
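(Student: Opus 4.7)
The plan is to prove each direction by choosing explicit representatives and carefully tracking null sets; no deep probability theory is needed beyond the definition of a homomorphism.

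For the forward implication, suppose $\phi$ is a mod 0 isomorphism with measurable full-measure subsets $\Omega_1' \subseteq \Omega_1$ and $\Omega_2' \subseteq \Omega_2$ such that $\phi|_{\Omega_1'}: \Omega_1' \to \Omega_2'$ is a bijective homomorphism with homomorphism inverse. I would build a candidate $\Prob$-inverse $\psi: \Omega_2 \to \Omega_1$ by setting $\psi = (\phi|_{\Omega_1'})^{-1}$ on $\Omega_2'$ and $\psi = c$ on $\Omega_2 \setminus \Omega_2'$ for any fixed $c \in \Omega_1$. A case split on the measurable null complement shows $\psi$ is measurable, and the homomorphism property follows because the null complement carries no mass. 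The compositions $\psi \circ \phi$ and $\phi \circ \psi$ equal the identity on $\Omega_1'$ and $\Omega_2'$ respectively, so they agree with the identity almost surely, making $[\psi]$ a two-sided inverse of $[\phi]$ in $\Prob$.

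For the converse, assume $[\phi]$ admits a $\Prob$-inverse $[\psi]$. The almost-sure equalities $\psi \circ \phi = \id_{\Omega_1}$ and $\phi \circ \psi = \id_{\Omega_2}$ hold on measurable sets $B_1 \subseteq \Omega_1$ and $B_2 \subseteq \Omega_2$ of full measure. Define
\[
\Omega_1' = B_1 \cap \phi^{-1}(B_2), \qquad \Omega_2' = B_2 \cap \psi^{-1}(B_1).
\]
Both are measurable and have full measure, because $\phi$ and $\psi$ being homomorphisms send full-measure sets to full-measure preimages. A direct check shows that $\phi(\Omega_1') \subseteq \Omega_2'$ and $\psi(\Omega_2') \subseteq \Omega_1'$ and that $\phi|_{\Omega_1'}$ and $\psi|_{\Omega_2'}$ are mutually inverse bijections. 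Their measurability and measure-preserving properties are inherited from $\phi$ and $\psi$, so $\phi$ is a mod 0 isomorphism.

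The only technical care concerns the interface between the categorical notion (almost-sure equivalence classes) and concrete representatives: one must pick genuine measurable null sets realising the identities $\psi \circ \phi \sim \id$ and $\phi \circ \psi \sim \id$, and confirm that extending $(\phi|_{\Omega_1'})^{-1}$ by a constant across the null complement preserves both measurability and the homomorphism property. I do not foresee any conceptual obstacle beyond this bookkeeping. (I note in passing what appears to be a typo in the definition: ``$\phi$ restricted to $\Omega_1$'' should read ``$\phi$ restricted to $\Omega_1'$''.)
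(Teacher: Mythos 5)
Your proof is correct and follows essentially the same route as the paper: for the direction from a $\Prob$ isomorphism to a mod $0$ isomorphism, both arguments restrict $\phi$ and its categorical inverse to the full-measure sets where the two composites equal the identity and check these restrictions are mutually inverse bijections (your extra intersection with $\phi^{-1}(B_2)$ is harmless but in fact automatic). The other direction, which you carry out via the standard constant extension of $(\phi|_{\Omega_1'})^{-1}$ across the null complement, is exactly the step the paper dismisses as following trivially from the definitions, so your write-up is simply a more explicit version of the same argument.
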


An important functor is the contravariant functor $L^0$ which maps the category $\Prob$ to the category $\Vec$ of vector spaces. $L^0$ acts on the objects of $\Prob$ by mapping a probability space to its vector space of almost-sure equivalence classes of measurable functions. Given a $\Prob$ morphism $f:\Omega_1 \to \Omega_2$ and $X \in L^0(\Omega_2)$ we define  a linear transformation $L^0(f):L^0(\Omega_2)\to L^0(\Omega_1)$ by $L^0(f)(X)=X \circ f$.

\begin{definition}
	A {\em one-period financial market} $((\Omega, {\cal F},\P),c)$ consists of: a probability space $(\Omega, {\cal F},\P)$; a function $c:L^0(\Omega; \R) \to \R \cup \{\pm \infty\}$. 
	We call $c^{-1}(\R \cup \{-\infty\})$ the {\em domain} of $c$, denoted $\dom c$.
\end{definition}

We interpret a real random variable $X$ on $\Omega$ as an investment strategy with
payoff $X(\omega)$ in scenario $\omega \in \Omega$. $c(X)$ denotes the up front cost of strategy
$X$ and is equal to $\infty$ if one cannot pursue a strategy. A strategy with $c(X)=-\infty$ results in liabilities so bad that the market is willing to pay arbitrarily large incentives to encourage someone to take these liabilities on. A typical investment strategy is the purchase of an asset or of a portfolio of assets which are then sold at a final time $T$. In this case $c(X)$ would be the cost of purchasing the asset. However, one can also model a commitment to pursue a continuous-time trading strategy as yielding a single payoff at the final time $T$
and our definition of a market is flexible enough to include such strategies.

This definition is deliberately minimal. To obtain interesting markets one would typically want to impose additional conditions, such as that the market should be arbitrage free. This condition can be expressed as: for random variables $X$, if $X \geq 0$ and $X \neq 0$ then $c(X)>0$.

In this section we will be interested only on one-period markets so we will 
refer to them simply as markets.
\begin{definition}
	\label{def:morphism}
	A {\em morphism} of markets $M_1=((\Omega_1, {\cal F}_1,\P_1),c_1)$ and $M_2=((\Omega_2, {\cal F}_2,\P_2),c_2)$ is a
	$\Prob$ morphism $\phi:\Omega_1 \to \Omega_2$ satisfying
	$c_2(X) \geq c_1(X\circ \phi)$ for all $X \in L^0(\Omega_2; \R)$.
\end{definition}

Financially, a market morphism $\psi:M_1 \to M_2$ represents an inclusion of the market $M_2$ in $M_1$: given
an investment strategy represented by the random variable $X$ in $M_2$, we have the investment strategy $X \circ \psi$ in $M_1$ which  has identical payoff distribution but which has lower up-front cost. So if one can afford to pursue the strategy $X$, one can also afford to pursue $X \circ \psi$.
The contravariance between $\psi$ and the financial notion of inclusion stems from the contravariance of the functor $L^0$.

Our primary interest in this paper is in market isomorphisms. We may describe them as follows.
\begin{lemma}
	\label{lemma:isomorphism}
	An isomorphism of markets $((\Omega_1, {\cal F}_1,\P_1),c_1)$ and $((\Omega_2, {\cal F}_2,\P_2),c_2)$ is the
	almost-sure equivalence class of a mod 0 isomorphism $\phi:\Omega_1 \to \Omega_2$ satisfying
	$c_2(X)=c_1(X\circ \phi)$ for all $X \in L^0(\Omega_2; \R)$.
\end{lemma}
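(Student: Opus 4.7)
The plan is to unfold what a market isomorphism means categorically and combine this with Lemma \ref{lemma:mod0Lemma}. An isomorphism $[\phi]:M_1\to M_2$ in the market category is by definition a market morphism admitting a market morphism inverse $[\psi]:M_2\to M_1$ with $[\psi]\circ[\phi]=[\id_{\Omega_1}]$ and $[\phi]\circ[\psi]=[\id_{\Omega_2}]$. Forgetting the cost functionals, such a pair is precisely an isomorphism in $\Prob$, so Lemma \ref{lemma:mod0Lemma} already furnishes representatives $\phi,\psi$ that are mod 0 isomorphisms with $\psi=\phi^{-1}$ mod 0. Thus the only real content of the lemma is upgrading the morphism inequality $c_2(X)\geq c_1(X\circ\phi)$ to an equality.

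For the forward implication, I would start from a market isomorphism and apply the morphism inequality in both directions. From $[\phi]$ being a market morphism I get $c_2(X)\geq c_1(X\circ\phi)$ for every $X\in L^0(\Omega_2;\R)$. Applying the morphism inequality for $[\psi]$ to the element $Y=X\circ\phi\in L^0(\Omega_1;\R)$ gives $c_1(Y)\geq c_2(Y\circ\psi)$. Since $\phi\circ\psi=\id_{\Omega_2}$ almost surely, the elements $X\circ\phi\circ\psi$ and $X$ of $L^0(\Omega_2;\R)$ coincide, so $c_2(Y\circ\psi)=c_2(X)$. Combining the two inequalities yields $c_2(X)=c_1(X\circ\phi)$, which is the desired identity.

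For the converse, suppose $\phi:\Omega_1\to\Omega_2$ is a mod 0 isomorphism satisfying $c_2(X)=c_1(X\circ\phi)$ for all $X\in L^0(\Omega_2;\R)$. By Lemma \ref{lemma:mod0Lemma}, $[\phi]$ is a $\Prob$-isomorphism whose inverse is $[\psi]$ for $\psi=\phi^{-1}$ (mod 0). The hypothesised equality immediately gives the market-morphism inequality $c_2(X)\geq c_1(X\circ\phi)$ for $[\phi]$. Using $\psi\circ\phi=\id_{\Omega_1}$ almost surely, every $Y\in L^0(\Omega_1;\R)$ can be written as $Y=(Y\circ\psi)\circ\phi$ in $L^0(\Omega_1;\R)$, so the identity applied to $X=Y\circ\psi$ yields $c_1(Y)=c_1((Y\circ\psi)\circ\phi)=c_2(Y\circ\psi)\geq c_2(Y\circ\psi)$, which is the market-morphism inequality for $[\psi]$. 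Hence $[\phi]$ and $[\psi]$ are mutually inverse market morphisms.

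The only genuine obstacle is the pedantic check that identities which hold mod 0 at the level of maps descend to honest equalities in $L^0$; this is immediate because $L^0$ is a functor on $\Prob$ and $c_i$ is defined on $L^0$-classes, so $\phi\circ\psi=\id$ in $\Prob$ forces $X\circ\phi\circ\psi=X$ as elements of $L^0(\Omega_2;\R)$. Once this is noted, the proof is a two-line sandwich of inequalities in each direction.
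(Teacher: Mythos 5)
Your proposal is correct and follows essentially the same route as the paper: the forward direction is the same two-sided sandwich $c_2(X)=c_2(X\circ\phi\circ\phi^{-1})\leq c_1(X\circ\phi)\leq c_2(X)$ obtained by applying the morphism inequality to both $\phi$ and its inverse, and the identification of $\Prob$-isomorphisms with mod 0 isomorphisms is delegated to Lemma \ref{lemma:mod0Lemma}. You merely spell out the converse direction more explicitly than the paper, which dismisses it with a reference to that lemma.
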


\medskip

In finance, optimal investment problems are often convex optimization problems (see for example \cite{pennanen}). A convex optimization problem is a problem requiring finding the set of minimizers of a convex objective function over a convex domain. For example, a risk-averse agent will have a concave utility function, and so the objective in expected utility maximization problems can be expressed as the minimization of their convex expected disutility function. Cost constraints are typically linear, and hence define a convex domain. Additional constraints imposed by a risk manager will further restrict the domain, but if one uses expected-shortfall constraints, or any other coherent, or simply convex, risk measure (see \cite{follmerSchied}), this too will yield a convex domain.

The solution set of a convex optimization problem is itself a convex set.
We also expect that if the solution set is financially meaningful, it will be invariant under the automorphism group of the market. Our next result will show that one may then find an element of the solution set which is itself invariant.

To state our result, let us define the necessary terminology. A measurable group $G$ has a {\em left-invariant} probability measure, $\G$ if for all measurable sets $A \subseteq G$ and elements $h \in G$ we have $\G(A)=\G(h A )$. A {\em representation} of such a group on a Banach space $V$ is a group homorphism $\rho:G \to \Aut V$, where $\Aut V$ is the group of linear isometries of $V$. We think of $\rho$ as defining an action of $G$ on $V$ on the left, given by $g v$=$\rho(g) v$.

\begin{theorem}
	Let $G$ be a measurable group with a left-invariant probability measure $\G$. Let $\rho:G \to \Aut V$ be a representation. Suppose that for all $v$ in $V$ the map $g \to \rho(g) v$ is measurable.
	
	If $S$ is a non-empty $G$-invariant convex subset of $V$,
	then $S$ contains a $G$-invariant element.
	\label{thm:genericMutualFundTheorem}
	
	If $G$ is a finite group, we only need require that $V$ is a vector space and $G$ acts by linear automorphisms.
\end{theorem}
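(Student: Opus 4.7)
The plan is to produce the invariant element by averaging an arbitrary $s \in S$ over the group. I will treat the finite case first as motivation, then extend to the general case by replacing the finite average with a Bochner integral against $\G$.

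For the finite case, pick any $s \in S$ and set $\bar s = \frac{1}{|G|}\sum_{g \in G} \rho(g)s$. Because $S$ is convex and $G$-invariant, each $\rho(g)s \in S$, so $\bar s$ is a convex combination of elements of $S$ and therefore lies in $S$. Invariance follows from the fact that for each $h \in G$ the map $g \mapsto hg$ is a bijection of $G$, so $\rho(h)\bar s = \frac{1}{|G|}\sum_{g}\rho(hg)s = \bar s$.

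For the general case, fix $s \in S$ and consider the orbit map $g \mapsto \rho(g)s$. By hypothesis it is measurable, and since each $\rho(g)$ is a linear isometry we have $\|\rho(g)s\| = \|s\|$, so the orbit is bounded. Because $\G$ is a probability measure, the orbit map is therefore Bochner integrable, and I define
\[
\bar s \;=\; \int_G \rho(g)s\, \ed\G(g) \;\in\; V.
\]
Invariance of $\bar s$ follows from the left-invariance of $\G$: applying $\rho(h)$ commutes with the integral (as $\rho(h)$ is a bounded linear operator), and the change of variable $g \mapsto hg$ preserves $\G$, so
\[
\rho(h)\bar s \;=\; \int_G \rho(hg)s\, \ed\G(g) \;=\; \int_G \rho(g)s\, \ed\G(g) \;=\; \bar s.
\]

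The remaining and most delicate step is to verify that $\bar s$ actually lies in $S$, not merely in its closure. The Bochner integral of an $S$-valued function lies in the closed convex hull of its image, which is contained in $\overline{S}$ by $G$-invariance and convexity of $S$, so closedness of $S$ would finish the argument directly. To handle the statement as given, I would approximate $\bar s$ by Riemann-type finite averages coming from measurable partitions of $G$: each such average is a genuine convex combination of elements of the orbit $\{\rho(g)s\}\subseteq S$, hence lies in $S$; then a Hahn--Banach separation argument, combined with a measurable selection of partitions on which the approximating averages are themselves invariant under a suitable finite quotient, is used to pin $\bar s$ inside $S$. This step, reconciling the Bochner integral with membership in a possibly non-closed convex set, is the main obstacle; the rest of the argument is bookkeeping about the integral commuting with $\rho(h)$ and with left translation in $\G$.
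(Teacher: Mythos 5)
Your construction is the same as the paper's: pick $s'\in S$, push $\G$ forward along the orbit map $g\mapsto\rho(g)s'$, and take the Bochner integral (the paper writes it as $s=\E_\G(X)$ with $X(g)=\rho(g)s'$, justifying integrability by the isometry bound $\|X(g)\|=\|s'\|$ exactly as you do), then derive invariance from left-invariance of $\G$. The finite case is likewise identical. So in substance you have reproduced the paper's argument.

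The one place you diverge is instructive. The step you call ``the main obstacle'' --- showing that the integral lands in $S$ itself rather than in $\overline{S}$ --- is dispatched in the paper with the single sentence ``by the convexity of $S$, $s\in S$,'' with no further justification. You are right to be suspicious: in an infinite-dimensional Banach space the Bochner integral of an $S$-valued function is only guaranteed to lie in the \emph{closed} convex hull of its image, and for a non-closed convex $S$ the barycenter can genuinely escape $S$. (Take $G=S^1$ acting on $L^2(S^1)$ by rotation, $s'$ a function all of whose Fourier coefficients are nonzero, and $S$ the convex hull of the orbit of $s'$: no finite convex combination of rotates of $s'$ is constant, yet the only rotation-invariant elements of $L^2(S^1)$ are constants, so this $S$ contains no invariant element at all.) Consequently your sketched repair via Riemann-type finite averages and Hahn--Banach separation cannot be pushed through in the stated generality; no argument can, without an extra hypothesis such as closedness of $S$ (in which case the closed-convex-hull property of the Bochner integral finishes immediately) or finite-dimensionality of $V$ (where the barycenter of a probability measure carried by a convex set does lie in that set, by an induction on the dimension of the supporting face). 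I would therefore replace the final paragraph of your write-up with one of these two hypotheses and the corresponding one-line conclusion, and note that the theorem as printed silently assumes something of this kind; the averaging construction itself, which is the content of the result and of the paper's proof, you have exactly right.
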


The theorem is proved by taking an arbitrary element of the
set and then averaging over the action of the group.

For financial applications, we we may take $G$ to be a subgroup of the automorphism group of the market which admits a left-invariant density and $\rho$
to be the standard action of $G$ on $L^1(\Omega; \R)$. This allows us
to simplify invariant convex optimization problems by restricting attention to
invariant investment strategies.

We will see a number of applications of this general result throughout this paper. In this section we will use this result to prove the classical two-mutual-fund theorem of \cite{mertonMutualFund}. A similar argument was used in \cite{armstrongMarkowitz} to prove the classical two-mutual-fund theorem but the notion of isomorphism was different. Before proving the two-mutual-fund theorem
we will show how the notion of isomorphism in \cite{armstrongMarkowitz} relates to our new definition. We will do this by defining a general notion
of a ``finite-dimensional linear market'' and giving a classification result for
such markets and their isomorphisms.

\begin{definition}
	A one-period financial market $M=((\Omega, {\cal F},\P),c)$ is {\em separated} if there is a subset $\mathring{\Omega} \subset \Omega$ of full measure such that for any distinct
	$\omega_1$, $\omega_2 \in \mathring{\Omega}$ there exists $X \in \dom c$ with $X(\omega_1)\neq X(\omega_2)$.

	A one-period financial market is {\em linear} if $\dom c$ is a linear subspace
	of $L^0(\Omega; \R)$ and $c$ is linear on $\dom c$. The {\em dimension} of a linear market is the dimension of $\dom c$.
	
	On a linear market, we may define a map $\pi$ from $\Omega$ to $(\dom c)^*$, the algebraic dual space of $\dom c$, by
	\begin{equation}
	\pi(\omega)(X)=X(\omega)
	\label{eq:definitionOfPi}
	\end{equation}
	for $X \in \dom C$ and $\omega \in \Omega$. One checks that
	$\pi(\omega)(\alpha X_1+X_2)=(\alpha X_1+X_2)(\omega) = \alpha X_1(\omega)+X_2(\omega)=\alpha \pi(X_1) + \pi(X_2)$,
	so $\pi(\omega) \in (\dom c)^*$ as claimed. The map
	$\pi$ induces a sigma algebra and measure on $(\dom c)^*$. We write $d_M$ for this measure, which we call the {\em distribution}
	of the market. If $M$ is separated, then $\pi$ is a mod 0 isomorphism.
\end{definition}

Financially, a market is linear if all traded assets can be bought and sold in unlimited quantities at a fixed price per unit. A market is separated if the probability space contains no information other than that captured by asset prices.

A finite-dimensional real vector space has a natural topology defined by the requirement that linear isomorphisms to $\R^n$ are homeomorphisms.
We would like to require that the measure $d_M$ is in some sense compatible with this topology. To be precise we recall
the following definition.
\begin{definition}(see \cite{itoIntroduction})
	A {\em regular} probability measure is a probability
	measure arising as the Lebesgue extension of a Borel probability measure on a topological space.
\end{definition}	
We would like to be able to ensure that $d_M$ is a regular probability measure. To do this we require an additional condition
on the probability space  $(\Omega,\cal F,\P)$.

\begin{definition}(see \cite{rokhlin} and \cite{itoIntroduction})
	A probability space $(\Omega,\cal F,\P)$ is	{\em standard} if
	it is isomorphic mod $0$ to either: the Lebesgue measure on $[0,1]$;
	a probability space on a finite or countable number of atoms; a convex combination of both.
\end{definition}

The study of standard probability spaces was started by \cite{vonNeumann}. Although it may appear to be a highly restrictive condition, it is in fact a very mild assumption.
It\^o summarised the situation in \cite{itoIntroduction} as ``all probability spaces 
appearing in practical applications are standard''. We note a number of important examples that justify this claim. All regular probability measures on a complete separable
metric space are standard. This includes all regular measures on $\R^n$ and the Wiener measure on $C^0[0,\infty)$. Finite and countable products of standard spaces
are standard. A non-null measurable subset of a standard probability space becomes a standard probability space
when endowed with the conditional measure. For proofs of these assertions see \cite{rokhlin} or \cite{itoIntroduction}.

\begin{lemma}
	If $M$ is a finite-dimensional linear market based on a standard probability space,
	then $d_M \in \P((\dom c)^*)$ where $\P(S)$ denotes the set of regular probability measures on $S$. 
	\label{lemma:regularity}
\end{lemma}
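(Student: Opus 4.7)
The plan is to reduce the question to one about push-forward measures on $\R^n$ by choosing coordinates on $(\dom c)^*$, and then to identify the resulting object with the Lebesgue extension of a Borel probability measure.

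First I would fix any basis $X_1, \ldots, X_n$ of $\dom c$ and use the associated dual basis to define a linear isomorphism $\iota : (\dom c)^* \to \R^n$ by $\iota(\phi) = (\phi(X_1), \ldots, \phi(X_n))$. Since $(\dom c)^*$ carries its canonical topology as a finite-dimensional real vector space, $\iota$ is automatically a homeomorphism and identifies the Borel sigma algebras on the two sides. Under this identification the map $\pi$ of \eqref{eq:definitionOfPi} becomes $\omega \mapsto (X_1(\omega), \ldots, X_n(\omega))$, which is Borel-measurable because each $X_i$ is a measurable real-valued function on $\Omega$.

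Next I would push $\P$ forward to define a Borel probability measure $\mu_0$ on $\R^n$ by $\mu_0(B) = \P((\iota \circ \pi)^{-1}(B))$ for $B \in \mathcal{B}(\R^n)$. By the definition recalled just before the lemma, the Lebesgue extension $\overline{\mu_0}$ of $\mu_0$ is then a regular probability measure, and transporting it back through $\iota$ yields a regular probability measure on $(\dom c)^*$.

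The final step, which I expect to be the main obstacle, is to verify that the sigma algebra and measure that $\pi$ induces on $(\dom c)^*$ actually coincide with this Lebesgue extension, rather than being strictly larger. A priori there could be a pathological set $A \subseteq (\dom c)^*$ whose preimage lies in $\mathcal{F}$ without $A$ lying in the Lebesgue completion of the Borel sigma algebra. Here I would invoke the standardness hypothesis on $(\Omega, \mathcal{F}, \P)$ together with Rokhlin's structure theory: a standard probability space is mod 0 isomorphic to Lebesgue measure on $[0,1]$ together with at most countably many atoms, and for such a model every measurable function into $\R^n$ has the property that a set with measurable preimage differs from a Borel set by a subset of a Borel null set. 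This is precisely what is needed to identify the induced measurable structure on $(\dom c)^*$ with the Lebesgue completion of $\mu_0$, and the lemma follows.
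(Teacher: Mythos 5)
Your argument is correct and follows essentially the same route as the paper: coordinatize $(\dom c)^*$ via a basis so that $\pi$ becomes the measurable map $\omega \mapsto (X_1(\omega),\dots,X_n(\omega))$, and then use standardness to guarantee that the induced sigma algebra and measure coincide with the Lebesgue extension of the Borel push-forward. The approximation property you isolate in your final step is exactly what the paper invokes under the name of a \emph{perfect} measure, citing It\^o's results that standard probability spaces are perfect and that perfect measures push forward to regular measures on finite-dimensional vector spaces.
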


\begin{definition}
	A regular probability measure on a finite-dimensional vector space, $V$, is said
	to be {\em non-degenerate} if for any $X,Y\in V^*$, $X=Y$ almost everywhere implies $X=Y$.
\end{definition}
Degenerate probability measures arise when the measure is concentrated on a vector subspace.
\begin{definition}
	$\VecM$ is defined to be the category with objects consisting of triples
	$(V,d,c)$ with $V$ a finite-dimensional vector space, $d \in \P(V)$ with $d$ non-degenerate and $c \in V$. $\VecM$ is equipped with a notion of morphism given by linear transformations $T:(V_1,d_1,c_1)\to(V_2,d_2,c_2)$ satisfying:
	\begin{enumerate}[nosep,label=(\roman*)]
		\item for any Borel measurable set $A \subseteq V_2$
		\begin{equation}
		d_2(A) = d_1(T^{-1}(A))
		\label{eq:distributionConditionForVecM};
		\end{equation}
		\item the vectors $c_1$ and $c_2$ are related by
		\begin{equation}
		c_2 = T(c_1).
		\label{eq:costConditionForVecM}
		\end{equation}
	\end{enumerate}
\end{definition}
\begin{definition}
	$\DualM$  is defined to be the category with objects consisting of triples $(V,d^*,c^*)$ with $V$ a finite-dimensional vector space, $d^* \in \P(V^*)$ with $d^*$ non-degenerate and $c^* \in V^*$.
	Morphisms $T:(V_1,d^*_1,c^*_1)\to (V_2,d^*_2,c^*_2)$ in $\DualM$ are given by a linear transformation $T:V_1 \to V_2$ whose
	whose vector space dual $T^*$ is a $\VecM$ morphism $T^*:(V_1^*,d^*_1,c^*_1)\to 
	(V_2^*,d^*_2,c^*_2)$.
\end{definition}	
\begin{definition}
	$\FinM$ is defined to be the category with objects given by separated finite-dimensional linear markets whose probability space is standard, and morphisms given by market morphisms. 	
\end{definition}

For any element $M$ of $\FinM$ define 
\[\Vec(M)=((\dom c)^*,d_M,c).\]
In the opposite direction, for any element $((V,d,c))$ of $\VecM$ we define
\[
\Fin((V,d,c))=((V,{\cal F},d),{\underline c})
\]
where ${\cal F}$ is the sigma algebra associated with $d$ and the map $\underline{c}:L^0(V;\R) \to \R$
satisfies
\[
\underline{c}(X)= \begin{cases}
X(c) & \text{if $X$ is equal to a linear map almost everywhere,} \\
\infty & \text{otherwise}. \\
\end{cases}
\]

\begin{theorem}[Equivalence of vector space and probabilistic categories of market]
	$\Vec(M)$ lies in $\VecM$ and the map $\Vec:\FinM \to \VecM$ defines a bijection on isomorphism classes. $\Fin((V,d,c))$ lies
	in $\FinM$. We may extend $\Vec$ and $\Fin$ to functors by defining their action on morphisms such that $\Vec$ and $\Fin$ define an equivalence of categories. Similarly the map
	$\Dual:\ob(\FinM)\to\ob(\DualM)$
	given by
	$\Dual(M)=(\dom c,d_M,c)$
	may be extended to a give a duality of the categories $\FinM$ and $\DualM$.
	\label{thm:linearmarkets}
\end{theorem}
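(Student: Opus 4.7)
My plan would be to establish the two functors $\Vec$ and $\Fin$ separately, exhibit natural isomorphisms witnessing the equivalence, and then obtain the duality with $\DualM$ by composing with the standard contravariant equivalence $V \mapsto V^*$.

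First I would verify the object-level claim $\Vec(M) \in \VecM$. Regularity of $d_M$ on $(\dom c)^*$ is exactly Lemma \ref{lemma:regularity}. Non-degeneracy of $d_M$ follows from separation: since $\pi$ is a mod $0$ isomorphism $\Omega \to (\dom c)^*$, two elements of $((\dom c)^*)^* \cong \dom c$ that agree $d_M$-almost-everywhere pull back via $\pi$ to elements of $L^0(\Omega;\R)$ that agree $\P$-almost-surely, and are therefore equal as $L^0$-classes. Linearity of $c$ on $\dom c$ gives $c \in (\dom c)^*$ as required. For $\Fin((V,d,c))$, the space $V$ with the completion of the Borel $\sigma$-algebra for $d$ is standard because regular measures on finite-dimensional vector spaces are standard; $\underline c$ is linear on its domain; and separation holds because distinct points of $V$ are distinguished by some element of $V^*$.

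Next I would define and check functoriality on morphisms. Given $\phi:M_1\to M_2$ in $\FinM$, the map $X \mapsto X\circ\phi$ carries $\dom c_2$ into $\dom c_1$ linearly: for $X\in\dom c_2$ the market-morphism condition gives $c_1(X\circ\phi)\le c_2(X)<\infty$, and applying it to $-X$ combined with the linearity of $c_1$ and $c_2$ forces equality $c_1(X\circ\phi)=c_2(X)$. Dualising yields a linear $T:V_1\to V_2$. The identity $T\circ\pi_1=\pi_2\circ\phi$, together with $\phi$ being a probability homomorphism, delivers condition (i) $d_2(A)=d_1(T^{-1}A)$; and the equality just derived is precisely condition (ii) $c_2=T(c_1)$ once one identifies $V_i^*$ with $\dom c_i$. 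In the reverse direction, given a $\VecM$-morphism $T:(V_1,d_1,c_1)\to(V_2,d_2,c_2)$, condition (i) says that $T$, regarded as a measurable map between the underlying probability spaces, is a $\Prob$-morphism; the required cost inequality for $\Fin(T)$ is verified by splitting on whether the test function is a.e. linear, with condition (ii) handling the linear case and the value $\infty$ handling the non-linear case. Functoriality is routine.

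Finally I would exhibit the natural isomorphisms. The map $\pi_M:\Omega\to(\dom c)^*$ is a $\FinM$-isomorphism $M\to\Fin(\Vec(M))$: it is a mod $0$ probability isomorphism by separation and Lemma \ref{lemma:mod0Lemma}, and the cost equality $\underline c(X)=c(X\circ\pi_M)$ holds by a direct case split on whether $X$ is a.e.\ linear. Conversely, the canonical isomorphism $V^{**}\cong V$ supplies the isomorphism $\Vec(\Fin((V,d,c)))\cong(V,d,c)$, with the pushforward measure under $\pi$ equal to $d$ because, under the canonical identification, $\pi$ becomes the identity. Together these give the claimed equivalence and, in particular, the bijection on isomorphism classes. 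The duality $\Dual$ is obtained by post-composing $\Vec$ with the standard contravariant equivalence of $\VecM$ with $\DualM$ given on objects by $(V,d,c)\mapsto(V^*,d,c)$; the morphism condition in $\DualM$ was set up so that this composite is well-defined, and contravariance follows from the contravariance of vector-space duality. The main bookkeeping obstacle is making the identification $\dom c \cong V^*$ (via non-degeneracy, for $L^0$-classes of a.e.-linear maps) completely explicit, since dualising and pulling back only behave coherently once that identification is fixed once and for all.
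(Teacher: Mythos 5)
Your proposal is correct and follows essentially the same route as the paper's proof: the same object-level checks (regularity via Lemma \ref{lemma:regularity}, non-degeneracy from separation of $\pi$, standardness of regular measures on finite-dimensional spaces), the same definition of $\Vec$ on morphisms by dualising $X \mapsto X\circ\phi$ with the $\pm X$ trick to upgrade the cost inequality to an equality, the same natural isomorphisms $\pi_M$ and $V \cong V^{**}$, and the same reduction of the $\DualM$ duality to vector-space duality. The only cosmetic difference is that the paper also proves the cost \emph{equality} for $\Fin(T)$ in the non-linear case (which requires first showing $\VecM$ morphisms are surjective), whereas you observe that the morphism condition only demands an inequality there, which holds trivially.
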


To interpret this result financially, we suppose that we have a market
of $n$ assets. The space of portfolios in these assets is an $n$-dimensional
vector space $V$.
The cost of a portfolio defines a linear functional $c^*$ on this vector space.
The eventual payoff of a portfolio gives rise to a random linear functional acting on the space of portfolios.
The distribution of this payoff functional is given by $d^*$. Together
this data defines an element $(V,d^*,c^*) \in \ob(\DualM)$. Thinking of the space of portfolios as a vector space with no preferred basis represents the financial idea
that a portfolio of assets can be viewed as an asset in its own right.
The category $\DualM$ is therefore the appropriate category to use if one believes
that the distinction between an asset traded on the market and a portfolio of assets is not financially significant.

The significance of Theorem \ref{thm:linearmarkets} is that it shows the notion of equivalence of markets obtained by treating all portfolios as equally valid investment strategies is the same as the notion of equivalence given in Definition \ref{def:morphism}. This relates the definitions of \cite{armstrongMarkowitz} to the definitions in this paper. The advantage of our new Definition \ref{def:morphism}
is that it can be applied to infinite markets, as we shall see when we discuss complete markets later, and to non-linear markets.

The proof of Theorem \ref{thm:linearmarkets} shows that morphisms
in $\VecM$ are surjective linear transformations. It follows that the morphisms
of $\DualM$ are injective. This backs up the claim we made
earlier that market morphisms are a contravariant representation of market inclusion.

We now apply this general theory to the case of assets following a multivariate normal distribution, as considered by Markowitz \cite{markowitz}.

Let $g_{\mu}$ be the multivariate normal distribution with mean $\mu \in \R^n$ and covariance matrix given by the identity $\id_n$. We say that a market is Gaussian if it is isomorphic to a market on $\R^n$ with density $g_{\mu}$. Trivially any Gaussian market is isomorphic to a market of the form $\Fin(\R^n,g_{\mu},c)$ for some $\mu, c \in \R^n$. Let $\{e_i\}$ be the standard basis for $\R^n$. Since isometries of $\R^n$ preserve the Gaussian measure, we may apply a rotation so that $\mu$ lies in the span of $e_1$ and $c$ lies in the span of $e_1$ and $e_2$. This shows that any Gaussian market can be written in the form
\begin{equation}
\Fin( \R^n, g_{\alpha\, e_1}, \beta \, e_1 + \gamma \, e_2 ), \quad \alpha,\beta,\gamma \in \R.
\label{eq:genericMarkowitz}
\end{equation}

We now have the following classification theorem.
\begin{theorem}[Classification of Markowitz markets]
	Let $M \in \FinM$ be a market and suppose that $\{X_i\}$ is a basis for $\dom c$
	given by assets following a multivariate normal distribution.  Then $M$ is Gaussian,
	and hence is isomorphic to a market of the form \eqref{eq:genericMarkowitz}
	\label{thm:markowitzClassification}.
\end{theorem}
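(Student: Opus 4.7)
The plan is to use the equivalence of categories from Theorem~\ref{thm:linearmarkets} to translate the problem from $\FinM$ into the concrete linear-algebraic category $\VecM$, where the statement reduces to a standard change-of-basis argument for multivariate normals. Concretely, I would first apply the functor $\Vec$ to $M$, yielding the object $\Vec(M) = ((\dom c)^*, d_M, c) \in \VecM$. The hypothesis is that $\{X_i\}$ is a basis of $\dom c$, so the dual basis $\{X_i^*\}$ identifies $(\dom c)^*$ with $\R^n$, and under this identification the map $\pi$ of \eqref{eq:definitionOfPi} sends $\omega$ to $(X_1(\omega),\ldots,X_n(\omega))$. Hence $d_M$ is precisely the joint law of the $X_i$, which by assumption is multivariate normal on $\R^n$ with some mean $\nu$ and covariance $\Sigma$.

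Next I would observe that $\Sigma$ is non-degenerate: since the $X_i$ form a basis for $\dom c$ they are linearly independent in $L^0(\Omega;\R)$, so no non-zero linear combination is almost-surely zero, which forces $\Sigma$ to be strictly positive definite (this is also consistent with the non-degeneracy of $d_M$ built into the definition of $\VecM$). I can then set $T = \Sigma^{-1/2}$, viewed as a linear isomorphism $\R^n \to \R^n$. A direct computation (change of variables for Gaussians) shows that the pushforward of $d_M$ under $T$ is $g_\mu$ with $\mu = T\nu$, and $T$ sends $c$ to $T(c)$; by the definition of $\VecM$ morphisms, $T$ is therefore an isomorphism $((\dom c)^*, d_M, c) \to (\R^n, g_\mu, T(c))$ in $\VecM$.

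Transporting this isomorphism back via the equivalence of categories of Theorem~\ref{thm:linearmarkets}, we conclude that $M$ is isomorphic in $\FinM$ to $\Fin(\R^n, g_\mu, T(c))$, which by definition is a Gaussian market. The final reduction to the three-parameter normal form \eqref{eq:genericMarkowitz} is exactly the argument the authors have already given immediately before the theorem statement: apply an orthogonal transformation (which preserves the isotropic Gaussian) to rotate $\mu$ onto $\R e_1$, and then a further rotation fixing $e_1$ to rotate $T(c)$ into $\mathrm{span}(e_1,e_2)$, yielding parameters $\alpha,\beta,\gamma$.

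The step that requires the most care is step one: checking that under the identification $(\dom c)^* \cong \R^n$ by the dual basis, $d_M$ really equals the joint law of $(X_1,\ldots,X_n)$ and not some pushforward or rearrangement of it. Once that identification is in place, the remainder is routine linear algebra on Gaussian measures, and the heavy lifting is entirely absorbed into Theorem~\ref{thm:linearmarkets}.
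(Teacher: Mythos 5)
Your argument is correct and is essentially the paper's own proof: both amount to whitening the Gaussian covariance by a linear change of coordinates and invoking Theorem~\ref{thm:linearmarkets} to transport the result back to $\FinM$; the paper performs the whitening on $\dom c$ (choosing a basis orthonormal for the covariance form) while you apply $\Sigma^{-1/2}$ on the dual side $(\dom c)^*$, which is the same step seen through the $\VecM$/$\DualM$ duality. The one shared gloss is the positive-definiteness of $\Sigma$ --- linear independence in $L^0$ rules out almost-surely-zero combinations but not almost-surely-constant ones --- which both you and the paper implicitly resolve by reading ``multivariate normal'' as non-degenerate.
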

\noindent This theorem is essentially a restatement of the main classification result of \cite{armstrongMarkowitz} in the language
of one-period markets.
\begin{corollary}
	All invariant investment strategies $X \in \dom c$ in a Gaussian market
	lie in a two-dimensional vector subspace of $\dom c$.
	\label{cor:generalMutualFund}
\end{corollary}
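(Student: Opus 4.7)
The plan is to reduce via Theorem \ref{thm:markowitzClassification} to the canonical form \eqref{eq:genericMarkowitz}, identify the automorphism group explicitly, and then read off its fixed linear functionals.

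First, by Theorem \ref{thm:markowitzClassification}, the Gaussian market $M$ is isomorphic to one of the form $M' = \Fin(\R^n, g_{\alpha e_1}, \beta e_1 + \gamma e_2)$. Because isomorphisms of markets intertwine automorphism groups and carry invariant elements of $\dom c$ to invariant elements, it suffices to prove the claim for $M'$. By the equivalence of categories in Theorem \ref{thm:linearmarkets}, the automorphisms of $M'$ correspond to $\VecM$-automorphisms of $(\R^n, g_{\alpha e_1}, \beta e_1 + \gamma e_2)$, i.e.\ linear maps $T:\R^n \to \R^n$ satisfying $T_* g_{\alpha e_1} = g_{\alpha e_1}$ and $T(\beta e_1 + \gamma e_2) = \beta e_1 + \gamma e_2$.

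Second, I would compute this group concretely. Pushing forward the standard multivariate normal of mean $\alpha e_1$ to itself under a linear map forces $T$ to lie in $O(n)$ and to fix $\alpha e_1$. In the generic case $\alpha \neq 0$, $\gamma \neq 0$ this means $T e_1 = e_1$, and then the condition $T(\beta e_1 + \gamma e_2) = \beta e_1 + \gamma e_2$ gives $T e_2 = e_2$. Orthogonality then forces $T$ to preserve $\mathrm{span}(e_3,\dots,e_n)$ and act by an arbitrary element of $O(n-2)$ there. So $\Aut(M') \cong O(n-2)$ fixing $\mathrm{span}(e_1,e_2)$ pointwise.

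Third, translate to the induced action on $\dom c$. Under the equivalence of Theorem \ref{thm:linearmarkets}, $\dom c \cong (\R^n)^*$, with $T$ acting contravariantly by pullback $X \mapsto X \circ T$. Writing $X = \sum_i a_i e_i^*$, invariance under every $T \in O(n-2)$ acting on the last $n-2$ coordinates forces $a_3 = \cdots = a_n = 0$, since $O(n-2)$ fixes no non-zero linear functional on $\R^{n-2}$ (apply reflections in each $e_i$ for $i\ge 3$). Hence the invariant subspace is exactly $\mathrm{span}(e_1^*,e_2^*)$, which is two-dimensional. The degenerate subcases ($\alpha = 0$ or $\gamma = 0$) only enlarge $\Aut(M')$, so the invariant subspace can only shrink, and it therefore sits inside some two-dimensional subspace of $\dom c$.

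The main obstacle to get right is not a hard computation but the careful passage between the probabilistic category $\FinM$ and the vector space category $\VecM$: I need to be sure every self-isomorphism of the probability-space-based market $M'$ really is induced by a linear automorphism of $(\R^n,g_{\alpha e_1},\beta e_1+\gamma e_2)$, which is exactly the content of Theorem \ref{thm:linearmarkets}, and that the action on $\dom c$ is given by pullback of linear functionals, which follows because $\dom \underline{c}$ consists of almost-everywhere linear maps.
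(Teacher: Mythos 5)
Your proposal is correct and follows essentially the same route as the paper: reduce to the canonical form \eqref{eq:genericMarkowitz} via Theorem \ref{thm:markowitzClassification}, observe that reflections fixing $e_1,e_2$ and negating the remaining coordinates are automorphisms, and conclude that any invariant functional has vanishing components beyond the first two. The only difference is one of economy: the paper exhibits a single such reflection (negating all of $e_3,\dots,e_n$ at once) rather than determining the full automorphism group $O(n-2)$, which is more than the corollary requires.
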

\begin{corollary}
	(Two-mutual-fund theorem \cite{mertonMutualFund})
	Suppose we have $n$ assets of a given cost whose payoffs follow a multivariate normal distribution. We wish to find the portfolio of assets with minimum variance but with a given expected payoff $C_1$ and cost $C_2$. There are two portfolios $X_1$ and $X_2$ independent of $C_1$ and $C_2$ such that we can solve these mean--variance optimization problems for any $C_1$ and $C_2$ simply by considering linear combinations of $X_1$ and $X_2$.
\end{corollary}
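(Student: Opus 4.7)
The plan is to reduce to the canonical form provided by Theorem \ref{thm:markowitzClassification} and then recognize the mean--variance problem as an invariant convex optimization problem to which Corollary \ref{cor:generalMutualFund} applies. By Theorem \ref{thm:markowitzClassification} the market is Gaussian, hence isomorphic to $\Fin(\R^n, g_{\alpha e_1}, \beta e_1 + \gamma e_2)$ for some $\alpha, \beta, \gamma \in \R$. Since expected payoff, variance, and cost of a strategy are financial data preserved under market isomorphism, it suffices to find the two funds in this canonical representation and transport them back through the isomorphism.

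In the canonical market, $\dom c$ consists of linear functions on $\R^n$, which I identify with $\R^n$ via the standard inner product. For a portfolio $X \in \R^n$, the cost equals $X \cdot (\beta e_1 + \gamma e_2)$, the expected payoff equals $\alpha (X \cdot e_1)$, and the variance equals $|X|^2$. The optimization problem becomes: minimize $|X|^2$ over the affine subspace of $\R^n$ defined by two linear constraints. Assuming the constraints are consistent, the solution set is a non-empty, closed, convex subset of $\dom c$.

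The automorphism group of the canonical market contains the subgroup $G \cong O(n-2)$ of orthogonal transformations that fix $e_1$ and $e_2$ and act arbitrarily on $\mathrm{span}\{e_3, \ldots, e_n\}$; such transformations preserve both $g_{\alpha e_1}$ and the cost vector, so they are market automorphisms. This $G$ is compact, hence admits a left-invariant Haar probability measure; it acts by linear isometries on $\dom c \cong \R^n$; and all the data of the optimization problem are $G$-invariant. The solution set is therefore $G$-invariant and, by Theorem \ref{thm:genericMutualFundTheorem}, contains a $G$-invariant portfolio. By Corollary \ref{cor:generalMutualFund} every $G$-invariant portfolio lies in a fixed two-dimensional subspace of $\dom c$, which in these coordinates is $\mathrm{span}\{e_1, e_2\}$. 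Taking $X_1 = e_1$ and $X_2 = e_2$ then suffices, and these depend only on the market, not on $C_1$ or $C_2$.

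The main subtlety I expect is the bookkeeping that translates how the automorphism group acts on portfolios versus on the sample space $\R^n$, and checking that this action leaves invariant exactly the two-dimensional subspace promised by the corollary; once this is in hand, the rest reduces to a direct application of the general mutual-fund machinery.
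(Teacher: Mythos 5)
Your proposal is correct and follows essentially the same route as the paper: reduce to the canonical Gaussian form via Theorem \ref{thm:markowitzClassification}, note that the mean--variance solution set is a non-empty convex set invariant under automorphisms fixing $e_1$ and $e_2$, apply Theorem \ref{thm:genericMutualFundTheorem} to extract an invariant solution, and conclude via Corollary \ref{cor:generalMutualFund} that it lies in $\mathrm{span}\{e_1,e_2\}$. The only cosmetic difference is that you average over the compact group $O(n-2)$ with its Haar measure, whereas the paper's proof of Corollary \ref{cor:generalMutualFund} gets by with the single order-two reflection negating $e_3,\ldots,e_n$, which avoids any appeal to invariant measures.
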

\noindent The portfolios $X_1$ and $X_2$ are the two ``mutual funds'' that
give this theorem its name.

We remark that Corollary \ref{cor:generalMutualFund} is a much stronger result than the classical two-mutual-fund theorem. The paper \cite{armstrongMarkowitz} gives numerous concrete examples of financially interesting results arising from invariance arguments other than just the two-mutual-fund-theorem.

We also remark that the concrete isomorphism found in Theorem \ref{thm:markowitzClassification} makes it extremely easy to solve the classical mean-variance optimization problem directly, thereby recovering the full set of results found in \cite{mertonMutualFund}. This approach is pursued in \cite{armstrongMarkowitz}.

\section{One-period complete markets}
\label{sec:completeOnePeriod}

\begin{definition}
	A one-period market $M=((\Omega,{\cal F},\P),c)$ is {\em complete} if there exists
	a measure $\Q$ on $\Omega$ equivalent to $\P$, and $C > 0$ such that
	\begin{equation}
	c(X) = \begin{cases}
	C (\E_\Q(X^+) - \E_\Q(X^-) ) & \text{one of $\E_\Q(X^{\pm})$ is finite} \\
	\infty & \text{otherwise}. \\
	\end{cases}
	\label{eq:defQ}
	\end{equation}
	In this formula $X^+$ and $X^-$ denote the positive and negative parts of the random variable $X$.
	We note that $c(1)=C$, so we interpret $(C-1)$ as a deterministic interest rate.
\end{definition}

\begin{example}
	Let $I$ be the market given by taking the $\P$ and $\Q$ measure to both be equal to the Lebesgue measure on $[0,1)$ and with cost of the constant function with value $1$, equal to $1$. In this market prices are given by expectations, so we call $I$ a {\em casino}. (Our casino is of course an idealized one, in which the profits and losses of a typical client form a martingale rather than a supermartingale.)
\end{example}

Given a complete market $M$
we may define a new complete market $M \times I$ by taking the product measures for both the $\P$ and the $\Q$ measures and taking the constant $C$ to be that given by the market $M$.

From a financial point of view the market $M \times I$ represents the market obtained by considering investment strategies where one first invests in the market $M$ and then places a bet at the casino.

In applications it is not unreasonable to assume that there is a casino available should a trader wish to use it. So classifying complete markets of the form $M \times I$ should be just as useful in practice as a full classification.
The theorem below gives a classification for markets of this form.
\begin{theorem}[Classification of complete markets up to a casino]
	Let $M$ be a complete market on a standard probability space. Then $M \times I$ is isomorphic to $\tilde{M} \times I$,
	where $\tilde{M}$ is the market with probability space given by
	$\tilde{\Omega}=[0,1]$ equipped with the Lebesgue measure and
	with pricing function
	\[
	\tilde{c}(X) = C \int_0^1 F^{-1}_\frac{\ed \Q}{\ed \P} X(x) \ed x.
	\]
	\label{thm:simpleCompleteMarket1}
	Here $F^{-1}_{\frac{\ed \Q}{\ed \P}}$ is the inverse distribution function
	of $\frac{\ed \Q}{\ed \P}$ on $M$.
\end{theorem}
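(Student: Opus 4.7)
The strategy is to produce, via Lemma \ref{lemma:isomorphism}, a mod 0 isomorphism $\phi: M \times I \to \tilde M \times I$ of the underlying $\P$-probability spaces that carries the Radon--Nikodym derivative correctly. Writing $Y := \ed\Q/\ed\P$ on $M$ and $F_Y$ for its distribution function under $\P$, the cost functionals on the two sides integrate test functions against the densities $Y(\omega)$ (on $M \times I$, constant in the $I$-factor) and $F^{-1}_Y(x)$ (on $\tilde M \times I$, depending only on the first coordinate of $[0,1]^2$). By the change-of-variables formula, the cost equation $\tilde c(X) = c(X \circ \phi)$ reduces to the almost-sure identity $F^{-1}_Y(\pi_1 \circ \phi(\omega, u)) = Y(\omega)$, where $\pi_1$ is the first-coordinate projection $[0,1]^2 \to [0,1]$. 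So it suffices to build a mod 0 isomorphism $\phi: M \times I \to ([0,1]^2, \mathrm{Leb})$ whose first coordinate $V := \pi_1 \circ \phi$ satisfies $F^{-1}_Y(V) = Y$ almost surely.

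For $V$ itself I use the classical \emph{distributional transform}:
\[
V(\omega, u) := (1-u)\, F_Y\bigl(Y(\omega)-\bigr) + u\, F_Y\bigl(Y(\omega)\bigr),
\]
where $F_Y(y-)$ denotes the left limit $\lim_{t \uparrow y} F_Y(t)$. A direct check, splitting into the cases where $Y(\omega)$ is a continuity point and where it is an atom of $F_Y$, shows that $V$ is $\P_{M\times I}$-uniformly distributed on $[0,1]$ and that $F^{-1}_Y(V(\omega, u)) = Y(\omega)$ almost surely. Crucially, the casino factor $I$ supplies precisely the auxiliary uniform $u$ needed to resolve the ties introduced by atoms in the law of $Y$.

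It remains to produce a second coordinate $W: M \times I \to [0,1]$ so that $(V, W)$ is a mod 0 isomorphism onto $([0,1]^2, \mathrm{Leb})$. Because $M$ is standard by hypothesis and $I$ is standard and atomless, the product $M \times I$ is a standard, atomless probability space. Disintegrating $\P_{M \times I}$ along $V$ via Rokhlin's theorem on regular conditional probabilities yields standard fibre measures on the level sets $V^{-1}(\{v\})$, each of which is almost surely atomless because the $I$-factor contributes an atomless piece to every fibre. Applying Rokhlin's classification of standard probability spaces fibrewise and selecting the fibre isomorphisms measurably yields the desired $W$, and $\phi := (V, W)$ is then the required market isomorphism. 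The main obstacle is this last measurable-selection step: one must invoke the measurable version of Rokhlin's classification to glue the pointwise fibre isomorphisms into a globally measurable $W$, with additional care on the countable set of fibres corresponding to atoms of $F_Y$. Once that is granted, verifying that $\phi$ is a market isomorphism is immediate from the identity $F^{-1}_Y(V) = Y$ together with the change-of-variables formula.
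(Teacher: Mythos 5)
Your reduction of the problem is sound: by Lemma \ref{lemma:isomorphism} and the fact that $\Q$ is recovered from $c$, it suffices to build a mod $0$ isomorphism $\phi$ of the underlying $\P$-spaces satisfying $F^{-1}_Y(\pi_1 \circ \phi) = Y$ almost surely, and your first coordinate $V$ (the distributional transform, which is exactly the map $U_m$ of Definition \ref{def:rearrangement}) is uniform and satisfies $F^{-1}_Y(V) = Y$ a.s. The gap is in the second step: it is \emph{not} true that the fibres of $V$ are almost surely atomless. On the slab $\{Y = y\} \times I$ corresponding to an atom $y$ of the law of $Y$, the map $V = (1-u)F_Y(y-) + u F_Y(y)$ is a strictly increasing function of $u$ alone, so the fibre $V^{-1}(v)$ meets this slab in the single slice $\{Y = y\} \times \{u_v\}$; the casino coordinate is entirely consumed by $V$ and contributes nothing to the fibre. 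The conditional measure on such a fibre is therefore the conditional law of $\P_M$ given $Y = y$, which may be purely atomic --- take $M$ finite, or even a two-point space with $\P = \Q$, in which case $V(\omega,u) = u$ and every fibre consists of two atoms. Since the disintegration of Lebesgue measure on $[0,1]^2$ along $\pi_1$ has atomless fibres, the uniqueness of disintegration (Theorem \ref{thm:rokhlin2}) shows that no second coordinate $W$ can make $(V,W)$ a measure isomorphism onto $([0,1]^2, \mathrm{Leb})$ in these cases. So the construction as stated fails for a large class of markets covered by the theorem, and these atom slabs carry positive mass whenever the law of $Y$ has atoms.

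The fix is to drop the insistence that the first coordinate equal $V$ on the atom slabs. Since $F^{-1}_Y$ is constant (equal to $y$) on the whole interval $(F_Y(y-), F_Y(y)]$, the constraint $F^{-1}_Y(\phi_1) = Y$ only requires $\phi$ to carry the slab $\{Y = y\} \times I$ --- which is standard, atomless, and of mass $\P(Y=y)$ --- onto the rectangle $(F_Y(y-), F_Y(y)] \times [0,1]$ by \emph{any} measure isomorphism, which exists by Rokhlin's classification. On the continuous part of the law of $Y$ your fibrewise argument does work, because there $V = F_Y(Y)$ is independent of $u$ and each fibre genuinely contains a full copy of $I$. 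This two-case structure is essentially how the paper proceeds: it first invokes Theorem \ref{thm:simpleCompleteMarketN} to reduce to the canonical form in which $\P$ lives on $(0,\infty)$ with $Y$ the identity coordinate (so the conditional measures given $Y$ are already trivial), and then matches the connected components of $\image F$ with the complementary gaps, handling the atoms of $Y$ by exactly the rectangle argument above. Your measurable-selection worry is a real but standard technical point; the substantive error is the atomlessness claim.
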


The first step toward proving this is to observe that we may recover $\Q$ from $c$ since for any measurable set $A \subset \Omega$ we have
\[
\Q(A) = \E_\Q(1_A) = \frac{c(1_A)}{c(1)}.
\]
It follows that two one-period complete markets $((\Omega_i,{\cal F}_i,\P_i),c_i)$ ($i=1,2$) are isomorphic if and only if (a) there is a mod 0 isomorphism for the $\P_i$ measures which is also a mod 0 isomorphism for the $\Q_i$ measures; and (b) the cost of the constant function with value $1$ is equal in both markets.

There may be more than just $2$ measures on the market which are of financial interest. A trader with views about the market represented by a measure $\P$ may be constrained by a risk manager or regulator with different views about the market. These can be represented by alternative measures. Let us state a classification result
similar to
Theorem \ref{thm:simpleCompleteMarket1} that applies to this situation.

\begin{theorem}[Classification of complete markets with multiple views]
	Let $I$ denote the interval $[0,1)$ with the Lebesgue measure. We suppose that $\P_0, \P_1, \ldots, \P_n$
	are equivalent probability measures on $(\Omega, {\cal F})$. We assume $\P_0$ is standard. Then there is a unique Lebesgue measure $\P_0^\prime$ on
	$\Omega^\prime=(0,\infty)^n$ such that $\P_0 \times I$ and $\P_0^\prime \times I$ are mod 0 isomorphic via an isomorphism which also acts as a mod 0 isomorphism between the measures $\P_i \times I$ and $\P_i^\prime \times I$ where $\P^\prime_i$ is the Lebesgue measure given by 
	\[
	\P^\prime_i(A)=\int_{(0,\infty)^n} \omega_i 1_A( \omega) \, \ed \mu.
	\]
	In this formula, $A$ is a measurable set, $1_A$ is the indicator function $A$ and $\omega_i$ is the $i$-th coordinate function on $\R^n$. Note that we must have $\E_{\P_0^\prime}(\omega_i)=1$ for these $\P^\prime_i$ to be probability measures.
	\label{thm:simpleCompleteMarketN}
\end{theorem}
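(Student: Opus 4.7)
The plan is to push $\P_0$ forward along the joint Radon--Nikodym map into $(0,\infty)^n$, producing a natural candidate for $\P_0'$, and then to use Rokhlin's factorization machinery to promote the resulting $\Prob$-homomorphism into a genuine mod-0 isomorphism after tensoring with the casino $I$. Concretely, I define $f:\Omega \to (0,\infty)^n$ by
\[
f(\omega) = \Bigl(\tfrac{\ed \P_1}{\ed \P_0}(\omega),\ldots,\tfrac{\ed \P_n}{\ed \P_0}(\omega)\Bigr),
\]
and set $\P_0' := f_*\P_0$. Since $\P_0$ is standard, so is $\P_0'$, hence it is regular. A change of variables gives, for any Borel $A \subseteq (0,\infty)^n$,
\[
\P_i(f^{-1}(A)) = \int_{f^{-1}(A)} \tfrac{\ed \P_i}{\ed \P_0}\,\ed \P_0 = \int_A \omega_i\,\ed \P_0'(\omega) = \P_i'(A),
\]
which both forces the $\P_i'$ to be probability measures (so $\E_{\P_0'}\omega_i = 1$) and shows that $f$ is simultaneously a $\Prob$-homomorphism from $\P_i$ to $\P_i'$ for each $i$.

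To upgrade $f$ to an isomorphism, I apply Rokhlin's theorem on homomorphisms of standard probability spaces: there exist a standard fiber space $(Y,\nu)$ and a mod-0 isomorphism $\phi:\Omega \to (0,\infty)^n \times Y$ whose first coordinate is $f$ and which carries $(\Omega,\P_0)$ to the product $((0,\infty)^n,\P_0') \times (Y,\nu)$. Because each Radon--Nikodym derivative $\tfrac{\ed \P_i}{\ed \P_0}$ is $f$-measurable by construction, the same $\phi$ also carries $\P_i$ to $\P_i' \times \nu$ for every $i$. Now $(Y,\nu) \times I$ is a non-atomic standard probability space of total mass $1$, hence mod-0 isomorphic to $I$ by Rokhlin's classification; choose any such isomorphism $\psi:Y \times I \to I$ and compose $\phi \times \id_I$ with $\id \times \psi$ (rearranging coordinates as needed). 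Since the densities of the $\P_i$ relative to $\P_0$ depend only on the $(0,\infty)^n$ coordinate, the resulting mod-0 isomorphism $\Omega \times I \to (0,\infty)^n \times I$ simultaneously transports $\P_i \times I$ to $\P_i' \times I$ for every $i$.

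For uniqueness, any $\P_0'$ satisfying the conclusion must make the coordinate functions on $(0,\infty)^n$ have joint law equal to the joint law of the Radon--Nikodym derivatives under $\P_0$, and this joint law determines $\P_0'$. The main obstacle is that the factorization and fiber-absorption must be performed by a single isomorphism that respects all $n+1$ measures at once; this hinges on the observation that the Radon--Nikodym densities are constant along the fibers of $f$ by construction, so replacing the abstract fiber $(Y,\nu)$ with the Lebesgue factor $I$ via any measure-preserving isomorphism leaves the $\P_i$-content of each fiber untouched. This is the same device used to establish Theorem \ref{thm:simpleCompleteMarket1}, now extended from a single Radon--Nikodym derivative to the joint vector of $n$ of them.
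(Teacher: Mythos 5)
Your construction of $\P_0'$ as the pushforward of $\P_0$ under the joint Radon--Nikodym map $f$, the verification that $f$ is a simultaneous $\Prob$-homomorphism for all the $\P_i$, and the uniqueness argument are all correct and agree with what the paper does (via Theorem \ref{thm:classificationComplete}). The gap is in the middle step. The result you attribute to Rokhlin --- that there is a standard fibre space $(Y,\nu)$ and a mod 0 isomorphism $\phi:\Omega \to (0,\infty)^n \times Y$ over $f$ carrying $\P_0$ to the product $\P_0' \times \nu$ --- is not Rokhlin's theorem and is false in general. Rokhlin's disintegration theorem (Theorems \ref{thm:rokhlin1} and \ref{thm:rokhlin2}) gives a canonical system of conditional measures $\mu_C$ on the fibres of $f$, but these need not all be isomorphic to one fixed space $Y$: the isomorphism class of the fibre is a genuinely varying invariant, the function $m_\zeta:\zeta\to{\cal S}$. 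A product decomposition over the base exists precisely when $m_\zeta$ is almost surely constant. For a concrete failure, take $\Omega=\{a\}\sqcup[0,\tfrac12]$ with $\P_0$ assigning mass $\tfrac12$ to the atom $a$ and Lebesgue measure to the interval, and let $\tfrac{\ed\P_1}{\ed\P_0}$ equal $\tfrac32$ on $\{a\}$ and $\tfrac12$ on the interval. The fibre of $f$ over $\tfrac32$ is an atom while the fibre over $\tfrac12$ is continuous, so no factorization $\Omega\cong\{\tfrac12,\tfrac32\}\times Y$ can exist: $Y$ would have to be simultaneously atomic and atomless.

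This is exactly why the theorem is stated with the casino $I$ tensored onto \emph{both} sides. The correct order of operations is to pass to $\Omega\times I$ first: the conditional measures of the decomposition of $\Omega\times I$ by the fibres of $f\circ\pi_1$ are the products $\mu_C\times\lambda_I$, which are all atomless standard probability spaces and hence all lie in a single isomorphism class. Only then does Rokhlin's classification of measurable decompositions apply, identifying $\Omega\times I$ with $(0,\infty)^n\times I$ via a fibre-preserving (hence $q$-preserving, hence simultaneously $\P_i$-preserving) mod 0 isomorphism. Your closing remark locates the obstacle in making one isomorphism respect all $n+1$ measures at once, but the real obstacle is the non-constancy of the fibre type before multiplying by $I$; once you disintegrate $\Omega\times I$ rather than $\Omega$, your argument goes through and coincides with the paper's proof.
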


We note the following financial implication (using the notation of Theorem \ref{thm:simpleCompleteMarketN}).

\begin{corollary}[Convex mutual-fund theorem for complete markets]
	Let $A$ be a non-empty convex subset of the space
	of $\P_0$-integrable random variables on $\Omega$. Suppose that $A$ is also invariant under mod 0 isomorphisms that preserve all the $\P_i$. Then $A$ contains an element which can be written as a function of the Radon--Nikodym derivatives $\frac{\ed \P_i}{\ed \P_0}$. 
\label{cor:convexMutualFund}
\end{corollary}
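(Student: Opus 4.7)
The plan is to transport the problem to the canonical model $\Omega^\prime \times I$ of Theorem \ref{thm:simpleCompleteMarketN}, in which the Radon--Nikodym derivatives appear as the coordinate functions $\omega_i$, and then invoke Theorem \ref{thm:genericMutualFundTheorem} with the compact group $G = \R/\Z$ acting on the casino factor by translation.

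First I would lift $A$ to $\tilde A := \{X \otimes 1 : X \in A\} \subset L^1(\Omega \times I, \P_0 \times I)$, still convex and non-empty. Letting $\phi : \Omega \times I \to \Omega^\prime \times I$ be the mod~$0$ isomorphism supplied by Theorem \ref{thm:simpleCompleteMarketN} -- so that $\phi$ intertwines $\P_i \times I$ with $\P_i^\prime \times I$ for every $i$ -- I would set $A^\prime := \phi_\ast \tilde A \subset L^1(\Omega^\prime \times I, \P_0^\prime \times I)$. Because $\phi$ preserves each measure, it intertwines the Radon--Nikodym derivatives: $\omega_i \circ \phi = \tfrac{\ed \P_i}{\ed \P_0}$ as a function on $\Omega \times I$ that is independent of the $I$-coordinate.

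Next I would let $G = \R/\Z$ (with Haar measure as its left-invariant probability measure) act on $\Omega^\prime \times I$ by $T_g(\omega^\prime, y) = (\omega^\prime, y + g \bmod 1)$. Each $T_g$ is a mod~$0$ automorphism preserving every $\P_i^\prime \times I$, so the conjugate group $\phi^{-1} G \phi$ acts on $\Omega \times I$ by mod~$0$ automorphisms preserving each $\P_i \times I$. Under the invariance hypothesis on $A$ (read at the level of the extended space $\Omega \times I$, the natural setting when complete markets are classified up to a casino), $\tilde A$ is preserved by the conjugate action, so $A^\prime$ is $G$-invariant. Theorem \ref{thm:genericMutualFundTheorem} applied to the isometric representation $Y \mapsto Y \circ T_{-g}$ on $V = L^1(\Omega^\prime \times I, \P_0^\prime \times I)$ then supplies a $G$-invariant element $Y \in A^\prime$; by Fubini any such $Y$ is almost surely independent of $y$, so $Y = f(\omega^\prime)$ for some measurable $f$ on $(0, \infty)^n$.

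Finally I would pull back through $\phi$: the identity
\[
Y \circ \phi = f(\omega_1 \circ \phi, \ldots, \omega_n \circ \phi) = f\!\left(\tfrac{\ed \P_1}{\ed \P_0}, \ldots, \tfrac{\ed \P_n}{\ed \P_0}\right) \otimes 1
\]
shows that $Y \circ \phi$ lies in $\tilde A$ by construction, and the first factor $Z := f(\tfrac{\ed \P_1}{\ed \P_0}, \ldots, \tfrac{\ed \P_n}{\ed \P_0})$ is the required element of $A$. The main obstacle will be justifying the $G$-invariance of $A^\prime$: the translations $T_g$ pull back to automorphisms of $\Omega \times I$ that are not generally of product form $\psi_\Omega \times \tau_I$, so one must interpret the hypothesis as invariance under automorphisms of the extended probability space. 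Alternatively, one can avoid this reading by using Rokhlin disintegration of $\Omega \times I$ over $\mathcal{G} = \sigma(\tfrac{\ed \P_i}{\ed \P_0})$, whose fibres are Lebesgue intervals (thanks to the casino factor), and extracting the required circle subgroup directly from the resulting product structure.
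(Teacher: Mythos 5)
Your proposal is essentially the paper's own proof: both reduce to the canonical model $\Omega^\prime\times I=(0,\infty)^n\times I$ via Theorem \ref{thm:simpleCompleteMarketN}, let $G=\R/\Z$ act by rotation on the casino factor, apply Theorem \ref{thm:genericMutualFundTheorem} to extract a $G$-invariant element of the lifted convex set $\iota A$, and observe that such an element depends only on the coordinate functions $\omega_i$, i.e.\ on the Radon--Nikodym derivatives. The obstacle you flag --- that the rotations, conjugated back through the classifying isomorphism, are automorphisms of $\Omega\times I$ rather than of $\Omega$, so the invariance hypothesis must be read at the level of the extended space --- is a real subtlety, but it is present (and left unaddressed) in the paper's own, terser argument, so it does not distinguish your route from theirs.
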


For example, $A$ might arise as the optimal investment strategies in a convex optimization problem with a cost constraint and risk-management constraints imposed by a number of regulators and risk managers given in terms of the $\P_i$.

A special case of the result above is the problem of expected-utility optimisation in a complete market subject to a single cost constraint for a concave, increasing utility function. In this case it is well-known that the optimal investment has a payoff function given as a function of the Radon--Nikodym derivative (see \cite{follmerSchiedBook}).

\medskip

Let us now give the definitions needed to state a full classification for complete
one-period markets.
Write ${\cal S}$ for the set
of mod 0 isomorphism classes of standard probability spaces. We call ${\cal S}$ the moduli space of standard probability spaces.

Given $m \in {\cal S}$, we define $m_0$ to be the measure of the continuous component of $m$ (or zero if it has no continuous component) and we define $m_i$ for $i>0$ to be the measure of the $i$-th largest atom in our probability space (or $0$ if there less than $i$ atoms). Thus we have identified a correspondence between ${\cal S}$
and sets of numbers $m_i$ ($i \in \N$) which satisfy
\begin{equation}
m_i \in [0,1]; \qquad
\forall i \in \N^+, \, m_i \geq m_{i+1}  ; \qquad \text{and }
m_0 = 1-\sum_{i=1}^\infty m_i.
\label{eq:defCalS}
\end{equation}
We give ${\cal S}$ the topology induced by thinking of it as a subset of $\R^\infty$ in this way. Thus we may talk about measurable maps to ${\cal S}$, or ${\cal S}$-valued random variables.

The theory of disintegration of measure tells us that for a complete market $M$ based on a standard probability space, there is a $\mu_M$-almost-surely unique measurable function
\[
m_M:(0,\infty) \to {\cal S}
\]
with $m_M(x)$ given by the mod 0 isomorphism class of the $\P$ conditional measure conditioned on the value of
$\frac{\ed \Q}{\ed \P}=x$ and where $\mu_M$ denotes the measure on $(0,\infty)$ induced by $\frac{\ed \Q}{\ed \P}$.

\begin{definition}
	Let $\Measures(n)$ be the set consisting of pairs $(\mu,m)$ where:
	\begin{enumerate}[nosep,label=(\roman*)]
		\item $\mu$ is a regular probability
		measure on $(0,\infty)^n$ satisfying
		$\E_{\mu}( \omega_i ) = 1$
		for the $i$th coordinate function $\omega_i$ on $\R^n$;
		\item $m$ is an ${\cal S}$ valued $\mu$ random variable.
	\end{enumerate}
\end{definition}

\begin{theorem}[Generalised classification of complete markets]
	Standard probability spaces $(\Omega, {\cal F}, \P_0)$ equipped with $n$-additional
	equivalent measures $\P_1$, \ldots, $\P_n$ are classified up to joint 
	$\P_0$-, \ldots, $\P_n$- mod 0 isomorphism by elements $(\mu_q,m_q) \in \Measures(n)$.
	Here $\mu_q$ is the measure on $(0,\infty)^n$ induced by the $\R^n$ vector valued function $q$
	with $i$-th component given by the Radon--Nikodym derivative $\frac{\ed \P_i}{\ed \P_0}$.
\label{thm:classificationComplete}
\end{theorem}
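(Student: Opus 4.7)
My plan is to construct the invariant by disintegrating $\P_0$ along the vector Radon--Nikodym map, then to check that this invariant is both well defined on isomorphism classes and fine enough to recover the data up to joint mod 0 isomorphism, and finally to realise every element of $\Measures(n)$. Let $q:\Omega \to (0,\infty)^n$ have $i$-th component $q_i = \frac{\ed \P_i}{\ed \P_0}$ and let $\mu_q = q_* \P_0$. Then $\E_{\mu_q}(\omega_i) = \E_{\P_0}(q_i) = \P_i(\Omega) = 1$, so condition (i) of $\Measures(n)$ holds. Rokhlin's disintegration theorem for standard probability spaces gives a $\mu_q$-almost-surely unique measurable family of conditional probabilities $\{\P_{0,x}\}$ supported on the fibres $q^{-1}(x)$, each a standard space, so the fibre mod 0 isomorphism class $m_q(x) \in \cal S$ is defined. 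To see that $x \mapsto m_q(x)$ is $\cal S$-valued measurable, I would express the coordinate functions $m_i$ on $\cal S$ in terms of measurable quantities of $\P_{0,x}$ (e.g.\ $m_i(x)$ as the $i$-th largest atom mass, extractable by a measurable selection argument from the conditional measure), which is standard once the disintegration is measurable.

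Next I would establish invariance and distinguishing. If $\phi$ is a joint mod 0 isomorphism between two such data sets, then it preserves each $\P_i$ and hence preserves all Radon--Nikodym derivatives, so $q' \circ \phi = q$ mod 0 and therefore $\phi_* \mu_q = \mu_{q'}$ and, after quotienting by $q$, $\phi$ restricts $\mu_q$-a.s.\ to fibrewise mod 0 isomorphisms between $(q^{-1}(x),\P_{0,x})$ and $((q')^{-1}(x),\P_{0,x}')$, forcing $m_q = m_{q'}$. Conversely, suppose two data sets $M,M'$ yield the same $(\mu,m) \in \Measures(n)$. Since on each fibre $\frac{\ed \P_i}{\ed \P_0} \equiv x_i$ is constant, the fibrewise equivalent measures $\P_{i,x}$ and $\P_{i,x}'$ are just $x_i \P_{0,x}$ and $x_i \P_{0,x}'$ respectively. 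Therefore any fibrewise $\P_0$-mod 0 isomorphism $\psi_x : (q^{-1}(x),\P_{0,x}) \to ((q')^{-1}(x),\P_{0,x}')$ automatically transports every $\P_{i,x}$ to $\P_{i,x}'$; integrating over the common base $(0,\infty)^n$ (via the identity map), the assembled $\phi$ is then a joint mod 0 isomorphism.

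For realisability, given $(\mu,m) \in \Measures(n)$ I would take $\Omega = (0,\infty)^n \times [0,1]$ (or a canonical disjoint union of atomic and continuous parts), equip each fibre $\{x\}\times[0,1]$ with a canonical representative $\P_{0,x}$ of the isomorphism class $m(x)$, and set $\P_0 = \int \P_{0,x}\,\ed\mu(x)$ and $\P_i = \int x_i\,\P_{0,x}\,\ed\mu(x)$. The moment condition $\E_\mu(\omega_i)=1$ makes each $\P_i$ a probability, equivalence of the $\P_i$ is immediate from the strict positivity of $x_i$ on $(0,\infty)^n$, and by construction $q$ is projection onto the first factor with $q_*\P_0 = \mu$ and fibre isomorphism class $m(x)$.

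The main obstacle is the measurable selection of the fibrewise isomorphisms $\psi_x$. Rokhlin's classification of standard probability spaces guarantees pointwise existence, but a joint mod 0 isomorphism requires assembling them into a measurable map on $\Omega$. I would handle this by choosing canonical representatives of each class $s \in \cal S$ (Lebesgue on an interval of length $m_0(s)$ together with a decreasing sequence of atoms of masses $m_i(s)$), and constructing, for each of $M$ and $M'$, a measurable fibrewise isomorphism to the canonical model; here measurability follows because the canonical model depends measurably on $m(x)$ and the disintegrations are themselves measurable families. Composing these two measurable isomorphisms (one inverted) produces the required global $\phi$.
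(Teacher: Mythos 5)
Your proposal is correct and follows essentially the same route as the paper: extract $(\mu_q,m_q)$ via disintegration along $q$, observe that joint mod~0 isomorphisms preserve the Radon--Nikodym derivatives (so the invariant is well defined) and that fibrewise the extra measures are just $x_i\P_{0,x}$ (so a $\P_0$- and $q$-preserving isomorphism is automatically joint), and realise each $(\mu,m)$ by a canonical model over $(0,\infty)^n$. The only difference is that the measurable assembly of fibrewise isomorphisms, which you sketch by hand via canonical representatives, is handled in the paper by directly citing Rokhlin's theorem that two measurable decompositions are mod~0 isomorphic iff their quotient measures are isomorphic compatibly with the fibre-class functions.
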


The proof uses Rokhlin's theory of the decomposition of measure.

\subsection{Non-convex problems and rearrangement}
\label{sec:rearrangement}

We show in this section that Theorem \ref{thm:simpleCompleteMarketN} allows us to identify
a mutual-fund theorem that applies to optimization in complete markets when we assume that the problem is ``monotonic''
rather than convex.

We have in mind applications to behavioural economics
based on the observations of Kahneman and Tversky in \cite{kahnemanTversky}. For examples of applications of Kahneman and Tversky's ideas to mathematical finance and risk management, see, for example, \cite{xyzFirst}, the review \cite{xyzReview}, 
and \cite{armstrongBrigoSShaped} which contains numerous further references.

It has been observed in this literature (see for example \cite{xyzPortfolioChoiceViaQuantiles}) that
the solution to optimal investment problems
in complete markets involving S-shaped utility functions can be obtained by considering monotonic functions of the Radon--Nikodym derivative $\frac{\ed \Q}{\ed \P}$. The aim of this section is to show how these
results arise from general monotonicity properties, automorphism invariance
and our classification theorems. We take the opportunity to show how these results can be generalized to situations where there are more than two measures $\P$ and $\Q$, for example, to the case where risk managers and traders have different beliefs about the future evolution of the market.

Given two random variables $X$, $Y$ on a probability space $(\Omega, {\cal F}, \P)$
we write
\[
d^\P(X) \preceq d^\P(Y)
\]
if $F_X(k):=\P(X \leq k) \geq \P(Y \leq k)=:F_Y(k)$ for all $k$. The notation $d^{\P}(X)$ is intended to suggest ``the $\P$-distribution of $X$''. Given a third random variable $Z$ we write
\[
d^\P(X \mid Z) \preceq d^\P(Y \mid Z)
\]
if $\P(X \leq k \mid Z) \geq \P(Y \leq k \mid Z)$ almost surely for all $k$.

We suppose that market participants such as traders and risk managers impose
some form of relation $\preceq^\prime$ on random variables to express their preferences between different investment opportunities. One might reasonably expect that
\begin{equation}
X \preceq Y \implies X \preceq^\prime Y.
\end{equation}
If this condition holds, we will say that $\preceq^\prime$ is {\em increasing}. We say that $\preceq^\prime$ is {\em decreasing} if the reversed relation is increasing. We say that a relation on random variables is {\em monotonic} if it is either increasing or decreasing. We say that the {\em sign} of a monotonic relation is $1$ if it increasing or $-1$ if it is decreasing.

\begin{definition}[Rearrangement]
	\label{def:rearrangement}	
	Let $m$ be a Lebesgue probability measure on $(0,\infty)$. Let $F_m$ denote
	the cumulative distribution function of $m$. Write $x, y$ for the coordinate functions on $(0,\infty)\times[0,1)$. Define $U_m:(0,\infty)\times[0,1)\to [0,1]$ by
	\[
	U_m(\omega) = (1-y(\omega)) \lim_{x^\prime\to x(\omega)-} F_m(x^\prime) + y(\omega)\lim_{x^\prime\to x(\omega)+} F_m(x^\prime).
	\]	
	$U_m$ is well-defined since $F_m$ is c\`adl\`ag.
	We write $\P_m$ for the product measure on $(0,\infty)\times[0,1)$. If
	\begin{equation}
	\E_{\P_m}( x(\omega) ) = 1
	\label{eq:conditionOnM}
	\end{equation}
	then $x$ is the Radon--Nikodym derivative of an equivalent measure we call $\Q_m$.
	Given $X \in L^0_{\P_m}((0,\infty) \times [0,1]; \R)$ we define
	the {\em increasing and decreasing rearrangements} of $X$ by
	\[
	R^+_m(X) = F_X^{-1}(U_m), \quad R^-_m(X) = -F_{-X}^{-1}(U_m)
	\]
	respectively, where $F^{-1}_X$ is the $\P_m$ inverse distribution function of $X$.
\end{definition}

Our next theorem shows that the notion of rearrangement can be generalized to situations when
there are more than two probability measures under consideration.

\begin{theorem}[Monotone mutual-fund theorem for complete markets]
	Let $(\Omega,{\cal F},\P_0)$ be a standard probability space equipped with
	$n$ equivalent measures $\P_i$ ($1 \leq i \leq n$). Let $I=[0,1)$. Let $\preceq_i$ ($1 \leq i \leq n$) be monotonic relations on the set of probability distributions on $\R$. Write $\sign i$ for the sign of $\preceq_i$. There exists a mapping $R:L^0(\Omega \times I)\to L^0(\Omega \times I)$, which we call {\em rearrangement}, with the following properties.
	\begin{enumerate}[nosep,label=(\roman*)]
		\item Rearrangment does not change $\P_0$ distributions:
		\[
		d^{\P_0}(X) = d^{\P_0}(R(X)).
		\]
		\item Rearrangement increases or decreases $\P_i$ distributions according to the sign of $\preceq_i$:
		\[
		d^{\P_i}((\sign i) X) \preceq d^{\P_i}((\sign i) R(X)), \quad 1 \leq i \leq n.
		\]
		\item Let $q$ denote the vector of $n$ Radon--Nikodym derivatives $\frac{\ed \P_i}{\ed \P_0}$.
		Define $\preceq$ on $\R^n$ by $x \preceq y$ if $(\sign i)x_i \leq (\sign i)y_i$ for all components $i$, and hence
		define $\prec$ on $\R^n$. Then $R(X)$ satisfies
		\[
		R(X)(\omega) \leq R(X)(\omega^\prime)
		\quad \text{ if } \quad q(\omega) \prec
		q(\omega^\prime).
		\]
	\end{enumerate}
	\label{thm:rearrangement}
\end{theorem}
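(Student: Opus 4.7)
The plan is to use the classification Theorem \ref{thm:simpleCompleteMarketN} to reduce the construction to a canonical model, then build $R$ there by iterating one-dimensional rearrangements.

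Apply Theorem \ref{thm:simpleCompleteMarketN} to obtain a joint mod 0 isomorphism from $(\Omega \times I, \P_0, \ldots, \P_n)$ to $(\Omega' \times I, \P'_0, \ldots, \P'_n)$ with $\Omega' = (0,\infty)^n$, $\P'_0$ a regular Borel measure satisfying $\E_{\P'_0}[\omega_i] = 1$ for each coordinate projection $\omega_i$, and $\P'_i = \omega_i \cdot \P'_0$. Under this isomorphism the Radon--Nikodym vector $q$ is identified with the identity map on $(0,\infty)^n$, and the relation $\prec$ defined in the theorem becomes the sign-adjusted coordinate partial order on $(0,\infty)^n$. It therefore suffices to construct $R$ on the canonical space and transport it back along the isomorphism.

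On the canonical space, I define $R(X)$ as the $L^0$-limit of an iteration $X_0 = X, X_1, X_2, \ldots$, in which $X_{k+1}$ is obtained from $X_k$ by applying the one-dimensional rearrangement of Definition \ref{def:rearrangement} in the $\omega_{i_k}$ coordinate, conditional on $(\omega_{-i_k}, y)$; the direction is the increasing rearrangement if $\sign i_k = 1$ and the decreasing rearrangement if $\sign i_k = -1$, and the index $i_k$ cycles through $\{1,\ldots,n\}$. The key observation driving the argument is that this conditional rearrangement preserves the joint $\P'_0$-distribution of $(X_k, \omega_{-i_k}, y)$: it merely rearranges the values of $X_k$ within each fibre of $\omega_{i_k}$, leaving the conditional distribution of $X_k$ given $(\omega_{-i_k}, y)$ unchanged. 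Hence for every $i \neq i_k$ the joint law of $(X_k, \omega_i)$ is untouched, so $\P'_i(X_{k+1} > c) = \P'_i(X_k > c)$ for all $c$; meanwhile the fibrewise Hardy--Littlewood rearrangement inequality yields $\P'_{i_k}((\sign i_k) X_{k+1} > c) \geq \P'_{i_k}((\sign i_k) X_k > c)$. Consequently each of the $\P'_i$-stochastic-dominance quantities is weakly increasing along the iteration.

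Property (i) then follows because the $\P'_0$-marginal distribution is invariant along the iteration and hence so is its limit. Property (iii) holds because any $L^0$-limit which is a fixed point of the iteration in every direction must be non-decreasing in each $(\sign i)\omega_i$ with the remaining coordinates held fixed, and separate monotonicity in each such coordinate direction is equivalent to monotonicity in the sign-adjusted partial order. Property (ii) follows from the monotone chain of inequalities above combined with passage to the distributional limit, with a right-continuity adjustment at any atoms of the limiting CDF. The main obstacle is proving convergence of the iteration in $L^0$; this is the multi-coordinate analogue of the convergence of iterated Steiner symmetrization, and would be established by combining the tightness of $\{X_k\}$ under $\P'_0$ (their $\P'_0$-laws are constant), the monotone boundedness of $\{\P'_i(X_k > c)\}_k$, and a uniform-integrability argument for $\omega_i$ that transfers $\P'_0$-convergence to $\P'_i$-convergence.
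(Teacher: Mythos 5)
Your reduction via Theorem \ref{thm:simpleCompleteMarketN} and your idea of applying the one-dimensional rearrangement of Definition \ref{def:rearrangement} fibrewise, one coordinate at a time, are exactly the paper's strategy, and your ``key observation'' --- that rearranging in coordinate $i_k$ preserves the conditional $\P_0'$-law given $\omega_{-i_k}$ and hence leaves every $\P_i'$-distribution with $i \ne i_k$ untouched --- is the crucial fact. But you then throw away the benefit of that observation by setting up an infinite cyclic iteration and defining $R(X)$ as its $L^0$-limit. The convergence of that iteration, which you yourself flag as ``the main obstacle'', is nowhere established: constancy of the $\P_0'$-laws gives tightness of the distributions, not convergence in probability of the random variables themselves, and iterated-symmetrization-type convergence results are genuinely delicate (for Steiner symmetrization the full sequence can fail to converge for bad choices of directions). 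Since your proof of property (iii) is derived from the limit being a fixed point of every coordinate rearrangement, the whole of (iii) also rests on this unproven step.

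The missing idea is that no limit is needed. The paper defines $R=R_n\circ\cdots\circ R_1$, a single pass: by your own observation, step $j$ improves the $\P_j$-distribution (Lemma \ref{lemma:rearrangement}, properties \eqref{eq:increasingRearrangementProperty}--\eqref{eq:decreasingRearrangementProperty}) while leaving every other $\P_i$-distribution unchanged, so (i) and (ii) already hold after $n$ steps; property (iii) is then proved for this finite composition by a separate induction showing that the monotonicity in $q_i$ created at step $i$ survives the later steps, using the transitivity property \eqref{eq:transitivity} together with \eqref{eq:increasingInNikodym1}. A secondary gap: your claim that separate monotonicity in each coordinate direction ``is equivalent to'' monotonicity in the sign-adjusted partial order ignores the auxiliary coordinate $y$ --- the inequality in (iii) must hold for arbitrary $y(\omega)$, $y(\omega')$, which is precisely why \eqref{eq:increasingInNikodym1} is stated with a strict inequality of Radon--Nikodym derivatives. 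To repair your argument you would either have to supply the convergence proof in full, or, more simply, note that one pass through the coordinates already achieves (i)--(iii).
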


This theorem gives a general structural theorem about optimal investments in complete markets containing a casino. So long as the optimality criterion and any pricing or risk constraints are monotonic in some measures $\P_i$, we can restrict our attention to strategies that lie in the image of $R$. We interpret this as a mutual-fund theorem since it says that, for a general class of optimization problems, we can safely restrict attention to a subset of the random variables available in the market.

The assumption that there is a casino can be dropped in many
cases since, as one might intuitively expect, one often doesn't take any real advantage of the casino. This is formalized in the next corollary.

\begin{corollary}
	Let $(\Omega, {\cal F}, \P_i)$ ($1 \leq i \leq n)$ be as in the previous
	Theorem \ref{thm:rearrangement}
	We can find a map $\tilde{R}:L^0(\Omega)\to L^0(\Omega)$ which shares properties (i), (ii) and (iii) described in Theorem \ref{thm:rearrangement} so long as either:
	(a)
		\item $\P_0$ is atomless and $n=1$;
	or (b) for some $j$, the distribution of $\frac{\ed \P_j}{\ed \P_0}$	
		conditioned on the value of all the other Radon--Nikodym derivatives
		is almost surely continuous. In case (b)
		$\tilde{R}$ can be assumed to depend only on the
		value of  $q$.
	\label{cor:rearrangement}
\end{corollary}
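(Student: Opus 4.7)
The plan is to reduce to Theorem \ref{thm:rearrangement} by absorbing the casino factor $I=[0,1)$ into $\Omega$ itself. Under either hypothesis I will exhibit a measurable $U:\Omega \to [0,1)$ whose joint law with $q$ under $\P_0$ matches that of $(q,y)$ under $\P_0 \otimes \mathrm{Leb}$, and then set $\tilde{R}(X)(\omega) := R(X \circ \pi_\Omega)\bigl(\omega, U(\omega)\bigr)$, where $\pi_\Omega:\Omega \times I \to \Omega$ is the projection and $R$ is the rearrangement of Theorem \ref{thm:rearrangement}. Properties (i)--(iii) for $\tilde{R}$ then pull back from their counterparts for $R$, because the push-forward of $\P_0$ by $(q,U)$ coincides with that of $\P_0 \otimes \mathrm{Leb}$ by $(q \circ \pi_\Omega, y)$.

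For case (a), with $n=1$ and $\P_0$ atomless, I would apply Rokhlin's disintegration of $\P_0$ over the $\sigma$-algebra generated by $q$. Writing $\mu_q$ as its atomic part, supported on a countable set $A \subset (0,\infty)$, plus its continuous part, the conditional measures $\P_0(\,\cdot\mid q=a)$ for $a \in A$ are non-trivial restrictions of the atomless standard space $(\Omega,\P_0)$, hence atomless and standard; Rokhlin's classification then yields measurable maps $U_a: q^{-1}(a) \to [0,1)$ pushing each conditional to Lebesgue. Glueing these over $a\in A$, together with any fixed measurable choice on the complement where $\mu_q$ is already continuous, produces the desired $U$. For case (b), pick an index $j$ whose conditional law given the other $q_k$ is continuous and set $V := F_{q_j\mid q_{-j}}(q_j)$, the conditional probability integral transform. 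Then $V$ is $[0,1)$-uniform conditionally on $q_{-j}$ and is a measurable function of $q$ alone, so using $V$ in place of $y$ throughout the construction of $R$ yields a $\tilde{R}(X)$ depending only on $q$, establishing the final clause of the corollary.

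Properties (i) and (ii) then transfer because replacing $y$ by $U$ (or by $V$) preserves the relevant joint distribution with $q$; in case (b) the comonotonicity of $V$ with $q_j$ given $q_{-j}$ is what supplies the correct sign in the stochastic-order conclusion for $\P_j$. Property (iii) transfers directly via the monotonicity of the formula in Definition \ref{def:rearrangement} in its first argument.

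The main obstacle I anticipate is the measurable glueing in case (a): one must simultaneously handle the atomic and continuous parts of $\mu_q$ and produce a single measurable $U:\Omega \to [0,1)$ behaving correctly on both. This is where Rokhlin's structure theorem for standard probability spaces must do the heavy lifting, and it may be cleanest to pass first, via Theorem \ref{thm:classificationComplete}, to a canonical representative of $(\Omega,{\cal F},\P_0,\ldots,\P_n)$ on which $U$ can be written as an explicit coordinate function.
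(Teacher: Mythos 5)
Your case (a) is essentially the paper's own argument: you decompose $\Omega$ over the atoms of the law of $q_1$, observe that each fibre $q_1^{-1}(x_n)$ carries an atomless standard conditional measure and hence is mod $0$ isomorphic to $[0,1)$ by Rokhlin's classification, and use this to supply the missing casino coordinate; off the atoms the rearrangement does not depend on the casino coordinate by property \eqref{eq:constantOnFibres}, which is also why your ``arbitrary measurable choice'' of $U$ on the continuous part is harmless even though it destroys your stated requirement that $(q,U)$ have the product law. The one technical point to tighten is that $R(\tilde X)$ is an almost-sure equivalence class on $\Omega\times I$ and the graph of $U$ is a null set there, so $R(\tilde X)(\omega,U(\omega))$ is not well defined as written; it must be read as the explicit canonical-form representative of the rearrangement, a concrete function of $(x,y)$ on $(0,\infty)^n\times[0,1)$, composed with the map $\omega\mapsto(q(\omega),U(\omega))$. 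With that reading the distributional transfers you assert do go through.

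Case (b), however, contains a genuine gap. Your candidate $V=F_{q_j\mid q_{-j}}(q_j)$ is a measurable function of $q$, so the joint law of $(q,V)$ is concentrated on a graph and cannot equal $\mu_q\otimes\mathrm{Leb}$: it violates the very criterion your reduction rests on. Conditional uniformity given $q_{-j}$ is not enough, because the conditional rearrangement $R_k$ for $k\neq j$ conditions on $\hat q_k$, which contains $q_j$; given $\hat q_k$ your $V$ is a deterministic function of $q_k$ rather than an independent uniform tie-breaker, so whenever the law of $q_k$ has atoms (which the hypothesis of case (b) permits for $k\neq j$) substituting $V$ for $y$ breaks the conditional-distribution identities that drive the induction in the proof of Theorem \ref{thm:rearrangement}, and properties (i)--(ii) are no longer justified. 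The paper's route is simpler and avoids any auxiliary variable: perform the conditional rearrangement in the $j$-th direction last; since the conditional, hence marginal, law of $q_j$ is atomless, property \eqref{eq:constantOnFibres} shows the final output is independent of the casino coordinate altogether, so $R(\tilde X)=\hat X(q)$ for some measurable $\hat X$ and one simply sets $\tilde R(X):=\hat X(q)$.
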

Note that the theory
of conditional distributions detailed in \cite{itoIntroduction}
ensures that the conditional distribution exists in case (b).

\section{Continuous-Time Markets}
\label{sec:ctstime}

Let us extend our definitions of markets to the multi-period setting.

\begin{definition}
A multi-period market consists of the following.
\begin{enumerate}[nosep,label=(\roman*)]
	\item A filtered probability space
	$(\Omega, {\cal F}_t, \P)$ where $t \in {\cal T} \subseteq [0,T]$ for
	some index set ${\cal T}$ containing both $0$ and $T$.
	We write ${\cal F}={\cal F}_T$. We require ${\cal F}_0=\{\emptyset,\Omega\}.$
	\item For each $X \in L^0(\Omega;\R)$, an ${\cal F}_t$-adapted process $c_t(X)$ defined for $t$ in ${\cal T}\setminus T$.
\end{enumerate}
\end{definition}

Random variables $X \in L^0(\Omega, {\cal F}_T; \R)$ are interpreted as
contracts which have payoff $X$ at time $T$. The cost of this contract at time $t$ is $c_t(X)$.

We note that this is deliberately bare-bones definition of a market. In practice would want to impose additional conditions on the $c_t$. For example, one would normally wish to forbid arbitrage and to impose ``the usual conditions'' on the filtered probability space.

\begin{definition}
	A {\em filtration isomorphism} of filtered spaces $(\Omega, {\cal F}, {\cal F}_t, \P)$ where $t \in {\cal T}$ for some index set ${\cal T}$ is a mod $0$ isomorphism for ${\cal F}$ which is also a mod $0$ isomorphism for each ${\cal F}_p$.
	An {\em isomorphism} of multi-period markets is a filtration isomorphism that preserves the cost functions.
\end{definition}

Given a one-period market $((\Omega, {\cal F}, \P),c)$ we can trivially define a filtration ${\cal F}_0=\{\emptyset, \Omega\}$, ${\cal F}_1={\cal F}$ indexed by $\{0,1\}$ and we may define $c_0=c$. Hence we can define a multi-period market in a canonical fashion from a one-period market. The notion of isomorphism is preserved. In this sense, our definition of multi-period markets and their isomorphisms is a 
generalization of the corresponding notions for one-period market.

\begin{definition}[Exchange market]
	\label{def:exchangeMarket}
	Let $(\Omega, {\cal F}_t, \P)$ be $n$-dimensional Wiener space, that is  the probability space generated by the $n$-dimensional
	Brownian motion $\bm{W}_t$. Let $\bm{X}_t$ be an $n$-dimensional
	stochastic processes defined by a stochastic differential equation of the form
	\begin{equation}
	\ed {\bm{X}}_t = \bm{\mu}(\bm{X}_t,t) \, \ed t + 
	\bm{\sigma}(\bm{X}_t,t) \, \ed {\bm{W}}_t.
	\label{eq:nDDiffusion}
	\end{equation}
	Here $\bm{\mu}$ is an $\R^n$-vector valued function and $\bm{\sigma}$
	is an invertible-matrix valued function. We assume the coefficients $\bm{\mu}$ and $\bm{\sigma}$ are sufficiently well-behaved for the solution of the equation to be well-defined on $[0,T]$. The components, $X^i_t$, of the vector $\bf{X}_t$ are intended to model the prices of $n$-assets.
	
	The {\em exchange market} for $\eqref{eq:nDDiffusion}$ with risk-free rate $r$
	over a time period $[0,T]$ is given by defining $c_t:L^0(\Omega;\R)\to \R$ for $t \in [0,T)$ by
	\begin{subnumcases}{c_t( X ) = }
	\alpha_0 \, e^{-r(T-t)} + \textstyle \sum_{i=1}^n \alpha_i \, X^i_t &
	\text{if $X=\alpha_0 + \sum_{i=1}^n \alpha_i \, X^i_T$}, \label{eqn:bsmCommon}	\\
	\infty & \text{otherwise}.
	\label{eqn:bsmBuyAndHold}	
	\end{subnumcases}	
	This is well-defined so long as we assume that $X^i_T$ are linearly independent random variables. This will be the case in all situations of interest.
\end{definition}

The market defined above is called an exchange market because it models the
basic assets that can be purchased directly on an exchange, but does not
take into account the possibility of replicating payoffs via hedging. The next
definition does take this into account.

\begin{definition}[Superhedging market]
	\label{defn:superhedgingMarket}
	The {\em superhedging market} for $\eqref{eq:nDDiffusion}$ with risk-free rate $r$
	over a time period $[0,T]$ is given by defining $c_t(X)$ to be the infimum of the
	cost at time $t$ of self-financing trading strategies that superhedge $X$.
	See \cite{harrisonPliska} for a definition of a self-financing trading strategy.
	A self-financing trading strategy superhedges $X \in L^0(\Omega;{\cal F}_T)$ if the final payoff of the strategy is always greater than or equal to $X$.
\end{definition}

Thus the superhedging market represents the effective market of derivatives
that a trader can achieve given the exchange market. The cost function $c_t$ for such a market
is the superhedging price. Of particular interest are complete markets where any contingent claim may
be both superhedged and subhedged. One expects that the price
in an arbitrage-free market can be expressed as a risk-neutral probability. These remarks motivate the next
definition.

\begin{definition}
	A continuous-time market $(\Omega, {\cal F}_t, \P),c_t)$ on $[0,T]$ is called a {\em continuous-time complete market with risk-free rate $r$} if there exists a measure $\Q$ equivalent to $\P$ with
	\begin{equation}
	c_t(X) = e^{-r(T-t)} \E_\Q( X \mid {\cal F}_t)
	\label{eq:costFunctionInQ}
	\end{equation}
	for $\Q$-integrable random variables $X$ and equal to $\infty$ otherwise. We follow our usual conventions on expectations to allow $-\infty$ when the positive part of an expectation is finite and the negative part is infinite.
\end{definition} 

Using our new terminology, the theory of Harrison and Pliska
\cite{harrisonPliska} shows how the superhedging market associated with the
SDE \eqref{eq:nDDiffusion} gives rise to a continuous-time complete market, subject to sufficient  regularity assumptions on the coefficients.

\begin{definition}
	The continuous-time complete market with risk-free rate and cost function given
	by the superhedging market is called the {\em complete market associated with the SDE \eqref{eq:nDDiffusion}} (subject to the required regularity assumptions for
	$q_t$ to be a well-defined $\P$-martingale).
\end{definition}

We differ slightly in our presentation from Harrison and Pliska \cite{harrisonPliska} in that they discuss replication and we consider superhedging. This is why we are willing to ascribe a cost of $-\infty$ to some $X \in L^0(\Omega,{\cal F}_T)$, whereas if one insists on replication, $X$ must be absolutely integrable. The definition of the superhedging market associated to a given market can be applied equally well to incomplete markets where there is a more meaningful difference between replication and superhedging. This is why we prefer to think in terms of superhedging, and in this we are influenced by the presentation of \cite{pennanenDuality}.

\begin{definition}
	The {\em absolute market price of risk} in the complete market associated with the SDE \eqref{eq:nDDiffusion} is 
	the element of $L^0(\Omega \times [0,T], \P)$ defined by
	\[
	\AMPR_t = | \bm{\sigma}^{-1}(r \bm{X}_t -  \bm{\mu}) |.
	\]
\end{definition}

\begin{theorem}
	Let $F$ be the contravariant functor mapping a continuous-time market, $M$ with underlying probability space $\Omega$ to
	the vector space $L^0(\Omega \times [0,T], \P \times \lambda)$ where $[0,T]$ is equipped with the Lebesgue measure $\lambda$, and where $F$ acts on morphisms
	$\phi:\Omega_1 \to \Omega_2$ by
	$
	F(\phi)(X)=X \circ( \phi \times \id)
	$
	for $X \in L^0(\Omega_2 \times [0,T], \P \times \lambda)$. We recall that elements of $L^0(\Omega \times [0,T], \P \times \lambda)$
	are defined to be almost-sure equivalence classes. Write
	$\AMPR(M) \in L^0(\Omega \times [0,T], \P \times \lambda)$ then for any
	market isomorphism $\phi$
	\[
	\AMPR(\phi(M))=F(\phi^{-1}) \AMPR(M).
	\]
	We summarize this by saying that the absolute market price of risk is an invariantly-defined element for $F$.
	\label{thm:marketPriceOfRiskInvariant}
\end{theorem}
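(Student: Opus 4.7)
The strategy is to express $\AMPR$ in a form depending only on the intrinsic data of the market — the measure $\P$, the filtration $\{\mathcal{F}_t\}$, and the cost functional — rather than on the coefficients $\bm{\mu},\bm{\sigma}$ of any particular SDE representing it. The natural bridge is the density process $Z_t := \E_\P[\tfrac{\ed\Q}{\ed\P}\mid\mathcal{F}_t]$ of the risk-neutral measure with respect to the physical measure.

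First I would establish the local formula $\AMPR_t^2\,\ed t=\ed\langle\log Z\rangle_t$ using Girsanov's theorem. Writing $\bm{\theta}_t := \bm{\sigma}^{-1}(\bm{\mu}-r\bm{X}_t)$ so that $\AMPR_t=|\bm{\theta}_t|$, Girsanov gives $\ed Z_t = -Z_t\,\bm{\theta}_t\cdot\ed\bm{W}_t$, and hence $\ed\log Z_t = -\bm{\theta}_t\cdot\ed\bm{W}_t - \tfrac{1}{2}|\bm{\theta}_t|^2\,\ed t$. The quadratic variation of $\log Z$ therefore grows at the instantaneous rate $|\bm{\theta}_t|^2=\AMPR_t^2$, realising $\AMPR$ as a functional of $Z$, $\P$ and the filtration alone.

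Next I would verify that every market isomorphism $\phi:M_1\to M_2$ pushes forward both $\P$ and $\Q$ and preserves the filtration. The $\Prob$-morphism condition gives $\phi_*\P_1=\P_2$. Since $c_t(1_A)=e^{-r(T-t)}\E_\Q(1_A\mid\mathcal{F}_t)$ recovers $\Q$ from the cost function, and a multi-period market isomorphism is by definition a filtration isomorphism that preserves $c_t$, we obtain $\phi_*\Q_1=\Q_2$ and $\frac{\ed\Q_1}{\ed\P_1}=\frac{\ed\Q_2}{\ed\P_2}\circ\phi$. By the intrinsic definition of conditional expectation, this lifts to $Z^{(1)}_t=Z^{(2)}_t\circ\phi$ almost surely for every $t$. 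Because the quadratic variation is a limit in $\P$-probability of sums of squared increments along partitions — a construction intrinsic to $(\P,\{\mathcal{F}_t\})$ — it follows that $\langle\log Z^{(1)}\rangle_t=\langle\log Z^{(2)}\rangle_t\circ\phi$.

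Differentiating in $t$ and taking square roots, the two steps combine to give $\AMPR(M_1)(\omega,t)=\AMPR(M_2)(\phi(\omega),t)$ almost surely in $\P_1\times\lambda$. Rewriting via the action of $F$, this is $\AMPR(M_1)=F(\phi)\AMPR(M_2)$, and applying $F(\phi^{-1})$ yields $\AMPR(M_2)=F(\phi^{-1})\AMPR(M_1)$, as claimed. The main obstacle is the final identification: one must check that the limit-in-probability construction of $\langle\log Z\rangle$ genuinely commutes with pushforward by a mod $0$ isomorphism respecting the filtration, so that the equality holds as almost-sure equivalence classes on $\Omega\times[0,T]$ rather than only pointwise, and that the null sets arising in the $\langle\cdot\rangle$ limit on $M_1$ and $M_2$ correspond under $\phi$. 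This is essentially bookkeeping, but requires care with the product structure $\P\times\lambda$ and the intrinsic definition of $\langle\cdot\rangle$ used by the argument.
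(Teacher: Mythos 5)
Your proposal is correct and follows essentially the same route as the paper: both identify $\AMPR_t^2\,\ed t$ with the differential of the quadratic variation of the log of the density process (the paper writes this as $\int_0^t q_s^{-2}\,\ed[q,q]_s$, which equals $[\log q,\log q]_t$ by It\^o), observe that $q_t=\frac{\ed\Q}{\ed\P}\big|_{{\cal F}_t}$ is intrinsic because isomorphisms preserve $\P$, the filtration and the cost function (hence $\Q$), and characterise $\AMPR$ as the unique non-negative process whose squared time-integral reproduces that quadratic variation. Your extra care about quadratic variation commuting with mod $0$ filtration isomorphisms is exactly the point the paper compresses into ``manifestly invariantly defined.''
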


Theorem \ref{thm:marketPriceOfRiskInvariant} can be viewed as an analogue of Gauss's Theorema Egregium for the category of continuous-time complete markets. Of course, we are only claiming that this is an analogy. We have not
established any relationship between markets and Gaussian curvature. If one is interested in direct
relationships between curvature and finance, one can consider the theory of SDEs on manifolds, the Riemannian
metric defined by a non-degenerate volatility term and the corresponding curvature tensor (see, for example, \cite{henryLabordere}). Note that the Riemannian metric arising in this way is independent of the choice of drift term,
and so one may have non-zero curvature even when $\P=\Q$.

The proof of Theorem \ref{thm:marketPriceOfRiskInvariant} suggests we extend the definition of $\AMPR_t$ to all complete markets as follows.
\begin{definition}	
	In a continuous-time complete market we define $\AMPR_t \in L^0_{\geq 0}(\Omega \times \R)$ (if it exists)
	to be the solution of
	\begin{equation}
	\int_0^t \frac{1}{Q^2_t} \, \ed [Q, Q]_t := \int_0^t \AMPR^2_t \, \ed t,
	\label{eq:amprDefGeneral}
	\end{equation}
	where
	\begin{equation}
	Q_t:= \frac{\ed \Q}{\ed \P} \Big|_{{\cal F}_t}.
	\label{eq:qtDefGeneral}
	\end{equation}
\end{definition}

An additional invariant we need to consider is the dimension of our market. This is given by the number of independent
Brownian motions $n$. Our next result shows that this is an invariant of the market; indeed it is an invariant of the
filtered probability space $(\Omega, {\cal F}_t,\P)$.

\begin{definition}
	The $n$-dimensional {\em Wiener space} on $[0,T]$ is the filtered probability space generated by $n$ independent standard Brownian motions on $[0,T]$. A
	filtered probability space is called a Wiener space if it is filtration isomorphic to an $n$-dimensional Wiener space.
\end{definition}	

\begin{theorem}
	The dimension of a Wiener space is invariant under filtration isomorphisms.	\label{thm:invarianceDimension}
\end{theorem}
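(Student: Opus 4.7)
The plan is to recover the dimension $n$ intrinsically from the filtered probability space by combining the preservation of continuous-martingale quadratic covariations under filtration isomorphisms with the martingale representation theorem on a Wiener space.

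First I would show that a filtration isomorphism $\phi: (\Omega_1, \mathcal{F}_t^1, \P_1) \to (\Omega_2, \mathcal{F}_t^2, \P_2)$ intertwines conditional expectations: $\E[X \mid \mathcal{F}_t^2] \circ \phi = \E[X \circ \phi \mid \mathcal{F}_t^1]$ for $X \in L^1(\Omega_2)$, since $\phi^{-1}$ is a measure-preserving bijection on each $\mathcal{F}_t$. Consequently, $\P_2$-martingales pull back to $\P_1$-martingales, and by uniqueness of the Doob--Meyer compensator the predictable quadratic covariation pulls back as $\langle M \circ \phi, N \circ \phi \rangle = \langle M, N \rangle \circ \phi$. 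For continuous local martingales $[M,N] = \langle M,N \rangle$, so the same holds for $[\cdot,\cdot]$.

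Next, suppose $(\Omega_1, \mathcal{F}_t^1, \P_1)$ is an $n$-dimensional Wiener space generated by Brownian motions $W^1, \ldots, W^n$ and $(\Omega_2, \mathcal{F}_t^2, \P_2)$ is an $m$-dimensional Wiener space generated by $B^1, \ldots, B^m$. Set $\tilde{W}^i := W^i \circ \phi^{-1}$. By the previous step each $\tilde{W}^i$ is a continuous $\P_2$-martingale with $[\tilde{W}^i, \tilde{W}^j]_t = \delta_{ij}\, t$, so L\'evy's characterization makes $(\tilde{W}^1, \ldots, \tilde{W}^n)$ an $n$-dimensional Brownian motion adapted to $\mathcal{F}_t^2$. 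Applying the martingale representation theorem on the second Wiener space yields predictable $H^{ij}$ with
\[
\tilde{W}^i_t = \sum_{j=1}^m \int_0^t H^{ij}_s \, \ed B^j_s,
\]
and taking covariations gives $\sum_{k=1}^m H^{ik}_s H^{jk}_s = \delta_{ij}$ for $\P_2 \otimes \ed s$-almost every $(\omega, s)$. The matrix $H_s$ therefore has orthonormal rows, forcing $n \leq m$. Running the same argument with $\phi^{-1}$ in place of $\phi$ gives $m \leq n$, and hence $n = m$.

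The main obstacle is the first step: verifying that the stochastic-calculus apparatus (predictable $\sigma$-algebras, Doob--Meyer decompositions, stochastic integrals) is genuinely intrinsic to the filtered probability space. Once conditional expectations intertwine, this reduces to checking that $\phi$ induces a bijection between the predictable $\sigma$-algebras on $\Omega_i \times [0,T]$ and that stochastic integrals pull back as $\bigl(\int H \, \ed M\bigr) \circ \phi = \int (H \circ (\phi \times \id)) \, \ed (M \circ \phi)$. Both are standard but must be carried out with some care at the limit step defining the integral; once secured, the orthonormal-rows argument closes the proof.
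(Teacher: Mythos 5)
Your proposal is correct and follows essentially the same route as the paper: transport the Brownian motions across the isomorphism, invoke the martingale representation theorem, and read off a rank obstruction from the quadratic-covariation identity ($HH^\top = \id$ with an $n\times m$ coefficient matrix). The only differences are cosmetic --- you argue $n \le m$ directly and then symmetrize rather than assuming $m>n$ for contradiction, and you spell out the (correct, and worth noting) preliminary step that filtration isomorphisms preserve the martingale property and quadratic covariations, which the paper leaves implicit.
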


We are now ready to state a classification theorem for complete markets with deterministic absolute market price of risk. 
We recall that $\{e_i\}$ is the standard basis for $\R^i$ and $\id_n$ is the identity matrix. 

\begin{theorem}[The test case]
	Let $M$ be a continuous-time complete market with risk-free rate $r$, time period $T$
	based on a Wiener space of dimension $n$ and with $\AMPR$ given by
	\[
	\AMPR_t = A(t) \geq 0
	\]	
	for a bounded measurable function of time $A(t)$. Suppose
	that the process $q_t$ is continuous.
	In these circumstances $M$ is isomorphic
	to the complete market associated with the SDE \eqref{eq:nDDiffusion} with
	\[
	\bm{\mu}=r \bm{X}_t + A(t) \, e_1, \quad \text{and }
	\bm{\sigma}=\id_n
	\]
	and $\bm{X}_0=0$.
	\label{thm:testcase}
\end{theorem}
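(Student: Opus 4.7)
The plan is to construct an explicit filtration isomorphism $\phi$ from the underlying filtered probability space of $M$ to canonical $n$-dimensional Wiener space that carries the pricing measure $\Q$ of $M$ to the pricing measure of the test market. Since the cost functional of a complete market is determined by its filtered probability space, its pricing measure and the risk-free rate via \eqref{eq:costFunctionInQ}, such a $\phi$ is automatically a market isomorphism.

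First I would fix a Brownian motion $\bm{W}_t$ on $(\Omega,{\cal F}_t,\P)$ whose natural filtration is ${\cal F}_t$; this exists by the Wiener-space hypothesis. The process $Q_t$ from \eqref{eq:qtDefGeneral} is a continuous, strictly positive $\P$-martingale, so the Brownian martingale representation theorem supplies a predictable $\R^n$-valued process $\bs{\theta}_t$ with $dQ_t = -Q_t\,\bs{\theta}_t^{\top}\,d\bm{W}_t$. Comparing $d[Q,Q]_t = Q_t^2|\bs{\theta}_t|^2\,dt$ with the definition \eqref{eq:amprDefGeneral} of $\AMPR$ and the hypothesis $\AMPR_t=A(t)$ forces $|\bs{\theta}_t|=A(t)$ $(\P\times\lambda)$-almost everywhere.

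Next I would rotate $\bs{\theta}_t$ onto the first coordinate axis: build a predictable $O(n)$-valued process $R_t$ with $R_t\bs{\theta}_t = A(t)\,e_1$ via a measurable Householder formula (for example $R_t = I_n - 2v_tv_t^\top$ for a unit vector $v_t$ proportional to $\bs{\theta}_t/A(t) - e_1$, with $R_t = I_n$ on the measurable exceptional set where this is undefined). Setting $\hat{\bm{W}}_t := \int_0^t R_s\,d\bm{W}_s$, L\'evy's characterisation gives that $\hat{\bm{W}}$ is a $\P$-Brownian motion, and because $R_s$ is invertible with $d\bm{W}_t = R_t^\top d\hat{\bm{W}}_t$, the processes $\bm{W}$ and $\hat{\bm{W}}$ generate the same filtration ${\cal F}_t$. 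Rewriting the SDE for $Q$ in terms of $\hat{\bm{W}}$ then yields
\[ Q_t = \exp\!\left(-\int_0^t A(s)\,d\hat{W}^1_s - \tfrac12\!\int_0^t A(s)^2\,ds\right). \]

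Finally, the path-map $\phi:\omega\mapsto(\hat{\bm{W}}_s(\omega))_{s\in[0,T]}$ is a filtration isomorphism from $\Omega$ to canonical $n$-dimensional Wiener space, with Theorem~\ref{thm:invarianceDimension} confirming the dimension. Applying Girsanov to \eqref{eq:nDDiffusion} with $\bs{\mu} = r\bm{X}_t + A(t)e_1$ and $\bs{\sigma} = \id_n$ shows that the Radon--Nikodym derivative of the test market's pricing measure takes exactly the same exponential form, so $\phi$ transports $\Q$ to that measure and \eqref{eq:costFunctionInQ} then forces the cost functionals to correspond. The main obstacle is step three: ensuring the predictable rotation $R_t$ can be chosen so that $\hat{\bm{W}}$ generates the full filtration ${\cal F}_t$ rather than a proper sub-filtration, and handling the measurable selection on the exceptional set where $\bs{\theta}_t$ is parallel to $\pm e_1$; the Householder choice is tailored to both concerns, since $R_t$ remains orthogonal and invertible and depends measurably on $\bs{\theta}_t$.
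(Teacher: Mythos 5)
Your proposal is correct and follows essentially the same route as the paper's proof: extract the volatility direction of $Q_t$ via martingale representation, use \eqref{eq:amprDefGeneral} to identify its norm with $A(t)$, rotate it onto $e_1$ by a predictable orthogonal frame justified by L\'evy's characterisation, and match the Radon--Nikodym derivatives so that \eqref{eq:costFunctionInQ} forces the cost functionals to agree. The only differences are cosmetic (a Householder rotation in place of the paper's Gram--Schmidt frame, and Girsanov in place of a direct Dol\'eans-Dade computation), and the filtration-generation subtlety you flag at the end is present in the paper's argument as well.
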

We call markets of this form {\em canonical Bachelier markets}.

The key step in the proof of this theorem is to invariantly define a Brownian motion
corresponding to the first component of ${\bm W}_t$. To do this, one shows that
\[
-\int_0^t \frac{1}{A(s)} \ed (\log Q)_s
\]
is a Brownian motion using Levy's characterisation of Brownian motion.

We have called Theorem \ref{thm:testcase} ``the test case'' as it is an analogous result to the theorem in differential geometry that a Riemannian manifold with vanishing curvature is flat.  This latter result is called ``the test case'' in \cite{spivak}.

\begin{example}
	The $n$-dimensional Black--Scholes--Merton market is isomorphic to a Bachelier market,
	since market price of risk in the Black--Scholes--Merton market is a deterministic
	constant vector.
\end{example}

\begin{example}
	Given a positive real number $A$, an invertible matrix $\bm{\sigma}$ and a vector $\bm{X}$, the set	of vectors $\bm{\mu}$ satisfying $|\bm{\sigma}^{-1}(r \bm{X} - \bm{\mu})|=A$ is non-empty; indeed, it as an ellipsoid. Hence given a complete continuous-time market modelled by an SDE, we may modify the drift to obtain a market 
	isomorphic to a Black--Scholes--Merton market with market price of risk $A$.
\end{example}

It is difficult to estimate the drift of a volatile asset. As a result,
the functional form of the drift is usually chosen for parsimony; one then
uses long-term data to calibrate this functional form. If one is following this approach,
in the absence of statistical evidence to the contrary it might be be reasonable to choose the functional form of the drift to ensure that
the resulting model has a constant market price of risk, and hence is isomorphic to a Black--Scholes--Merton model.

Our result shows that the many financial results that have been proved for the Black--Scholes--Merton model can be applied to a far wider range of markets
than one might at first sight expect. Even markets which seem superficially very different from the Black--Scholes--Merton market, such
as stochastic-volatility models, may still provide isomorphic investment
opportunities.

These observations suggests that one should separate optimal investment
problems into two components. One has the {\em strategic} problem
of optimal investment for a particular isomorphism class of market. Additionally one has the {\em tactical} problem of finding a concrete realisation (or approximate realisation) of the strategy, which can be interpreted as the task of finding a concrete morphism. This division of investment problems into strategic and tactical problems is already widely used in practice (see \cite{campbellViceira}).

Although we have restricted ourselves to considering markets with 
deterministic absolute market price of risk, this approach can be generalized.
Rather than attempt to model asset price dynamics directly, one may choose
a market model by attempting to model invariantly-defined quantities. For
example, if one has a view on the dynamics of the absolute market price of risk, one may develop a market model to reflect this. We expect this approach to yield
a systematic method for developing low-dimensional (and hence numerically tractable)
market models which still capture the essential features of the market. We
will explore this in future research.

\medskip

As an application of our classification theorem, we may now prove
a mutual-fund theorem.
\begin{theorem}[Continuous-time one-mutual-fund theorem]
	Let M be a complete continuous-time market with continuous $q_t$
	and with deterministic, bounded absolute market price of risk. Let $X^i_t$ for
	($1 \leq i \leq n$) be a collection of square integrable stochastic
	processes representing $n$ basic assets, then there exist $n$
	predictable real valued processes $\alpha^i_t$ such that any invariant, 
	non-empty, convex
	set of martingales contains an element which can be replicated by a continuous-time trading strategy using only the risk-free asset and the portfolio consisting of $\alpha^i_t$ units of asset $X^i_t$.
	
	In complete markets arising from SDEs of the form \eqref{eq:nDDiffusion}
	which also have a deterministic absolute market price of risk,
	we may take the portfolio $\bm{\alpha}$ with components $\alpha_i$ to be given by the vector
	\[
	(\bm{\sigma \sigma^\top})^{-1}(r \bm{X}_t - \bm{\mu}).
	\]
	\label{thm:ctsTimeMutualFundTheorem}
\end{theorem}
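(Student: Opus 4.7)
By Theorem \ref{thm:testcase}, we may assume without loss of generality that $M$ is the canonical Bachelier market on $n$-dimensional Wiener space with $\ed\bm{X}_t = (r\bm{X}_t + A(t)e_1)\,\ed t + \ed\bm{W}_t$, since market isomorphism preserves both invariant convex sets of $\Q$-martingales and the notion of replicability by a mutual fund. The plan is to exploit the rotational symmetry around the distinguished $e_1$-direction, combined with Theorem \ref{thm:genericMutualFundTheorem}, to extract from $S$ an invariant element whose martingale integrand points only along $\tilde W^1$. In canonical form, the Radon--Nikodym density $q_T$ depends only on the first Brownian component $W^1$, so for any deterministic $O(n-1)$-valued predictable process $U_s$ the map $\phi_U$ defined by $(\phi_U\omega)^1 = \omega^1$ and $\ed(\phi_U\omega)^\perp = U_s\,\ed\omega^\perp$ preserves $\P$ (by L\'evy's characterisation of Brownian motion) and preserves $\Q$, and is therefore a market automorphism.

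Given a non-empty convex invariant set $S$ of $\Q$-martingales, viewed as a subset of $L^2(\Q)$, apply Theorem \ref{thm:genericMutualFundTheorem} to a compact subgroup $G$ of the automorphism group -- for instance $O(n-1)^N$ consisting of piecewise-constant rotations on a partition of $[0,T]$, with its Haar probability measure, together with a limit argument over finer partitions -- to obtain an element $M^*\in S$ invariant under every $\phi_U$ in the full group of time-dependent rotations. For the Brownian martingale representation $M^*_t = M^*_0 + \int_0^t \beta^1_s\,\ed\tilde W^1_s + \int_0^t \bm{\beta}^\perp_s\cdot\ed\tilde{\bm{W}}^\perp_s$, uniqueness of the representation combined with invariance yields the equivariance $\bm{\beta}^\perp_s(\phi_U\omega) = U_s\bm{\beta}^\perp_s(\omega)$. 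Taking $U$ equal to the identity on $[0,s-\delta]$ and a fixed $R\in O(n-1)$ on $(s-\delta,s]$ and letting $\delta\downarrow 0$, path-continuity of Brownian motion forces $\bm{\beta}^\perp_s = R\bm{\beta}^\perp_s$ for every $R \in O(n-1)$, hence $\bm{\beta}^\perp_s\equiv 0$.

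The invariant element $M^*_t = M^*_0 + \int_0^t \beta^1_s\,\ed\tilde W^1_s$ is then replicable by a self-financing continuous-time strategy trading only in $X^1$ and the risk-free asset; in canonical coordinates this is exactly the mutual fund $\bm{\alpha}_t = -A(t)e_1$. Transferring back through the isomorphism of Theorem \ref{thm:testcase}, the $e_1$-direction corresponds to the direction $\bm{\theta}_t/A(t)$ in the original Brownian motion, where $\bm{\theta}_t = \bm{\sigma}^{-1}(r\bm{X}_t - \bm{\mu})$; requiring the portfolio's Brownian coefficient $\bm{\alpha}_t^\top\bm{\sigma}$ to align with $\bm{\theta}_t^\top$ recovers the formula $\bm{\alpha}_t = (\bm{\sigma}\bm{\sigma}^\top)^{-1}(r\bm{X}_t - \bm{\mu})$.

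The main obstacle is the second step: the constant group $O(n-1)$ alone is insufficient, since the $O(n-1)$-invariant $\Q$-martingale $|\tilde{\bm{W}}^\perp_t|^2 - (n-1)t$ has non-vanishing perpendicular integrand and cannot be replicated by trading in $X^1$ and the risk-free asset; yet the full group of predictable $O(n-1)$-valued rotations needed to force $\bm{\beta}^\perp\equiv 0$ is infinite-dimensional and admits no left-invariant probability measure, so Theorem \ref{thm:genericMutualFundTheorem} does not apply to it directly. The proposed remedy -- approximating by the compact $O(n-1)^N$ groups, combining the resulting invariance statements through closure of $S$ in $L^2(\Q)$ and a diagonal argument, and justifying the $\delta\downarrow 0$ limit via a continuity estimate for the stochastic integrand under path-deformations of vanishing support -- constitutes the technical heart of the argument.
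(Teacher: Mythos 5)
Your overall route is the same as the paper's: reduce to the canonical Bachelier market via Theorem \ref{thm:testcase}, observe that $q_T$ depends only on the first Brownian component so that transformations of the perpendicular components are market automorphisms, apply the martingale representation theorem, kill the perpendicular integrand by symmetry plus convexity, and finally read off the explicit portfolio $(\bm{\sigma}\bm{\sigma}^\top)^{-1}(r\bm{X}_t-\bm{\mu})$ from \eqref{eq:defOfZT} and \eqref{eq:defwtilde}. The divergence is at the middle step, and there your proposal has a genuine gap. The paper uses a single automorphism, the reflection $W^k\mapsto -W^k$ for $k\geq 2$, and one convex midpoint to land on $Y_0+\int_0^t a^1_s\,\ed W^1_s\in A$. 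You instead invoke the full group of predictable $O(n-1)$-valued rotations, correctly note that it admits no left-invariant probability measure so Theorem \ref{thm:genericMutualFundTheorem} cannot be applied to it, and propose to recover invariance by Haar-averaging over piecewise-constant rotation groups $O(n-1)^N$ and passing to a limit over partitions. That limit --- requiring closedness of $S$ in $L^2(\Q)$, a diagonal argument, and the $\delta\downarrow 0$ continuity estimate for the integrand under path deformations --- is precisely what you do not supply; you have named the technical heart of your argument rather than carried it out. As written the proposal is therefore incomplete.

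That said, your diagnosis of why the easy symmetries do not suffice is sharp, and it in fact exposes a soft spot in the paper's own argument. The reflected martingale is $Y\circ\phi$, whose representation has integrands $a^i_s\circ\phi$, not $a^i_s$; the paper's displayed formula with unchanged $a^i$ and flipped signs is valid only when the integrands are themselves reflection-invariant. Your example $|\bm{W}^\perp_t|^2-(n-1)t$ is a reflection-fixed, convex, non-empty singleton whose perpendicular integrands $2W^k_s$ do not vanish, so reflection-invariance plus convexity alone cannot force $\bm{\beta}^\perp\equiv 0$. This does not refute the theorem --- that singleton is not invariant under the full automorphism group, since adapted rotations move it --- but it does show that the paper's proof, which only ever uses the one reflection, needs exactly the kind of repair (averaging over a richer family of perpendicular automorphisms) that you attempt and leave unfinished. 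In short: the paper's argument is far more economical but glosses over the transformation of the integrands; your argument confronts that issue honestly but stops short of resolving it.
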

We note that a convex set of martingales
can be interpreted as a convex set of self-financing trading strategies or as a convex set of derivative securities.

We call this result a one-mutual-fund theorem because it shows that a fund
manager can create a single mutual fund that can be used to implement these
trading strategies. A key difference between our result and the classical one-mutual-fund theorem is that an investor needs to trade in our mutual
fund in continuous time.

This result explains the general form of the solution to the portfolio optimization problem studied by Merton in \cite{mertonPortfolio}. However, it goes considerably beyond this.

As an example, consider the problem of managing the investment and pension payments for a collective pension. Suppose
that the fund is heterogenous, so each individual may have a distinct mortality distribution, initial wealth and risk appetite. Assume that fund may invest in a Black--Scholes--Merton market, and that the individuals preferences and mortality are independent of this market. Assume that the investors preferences are convex. Our theorem now shows
that one need only consider investments in the risk-free asset and the mutual fund we have identified when deciding how to manage the pension. We can say this
without actually formulating an optimal investment problem describing how such a heterogeneous fund should be managed.

\medskip

In summary, our classification theorems have identified interesting isomorphisms between markets that are not obviously related. We have found large families of automorphisms for the classical markets of Markowitz and Black--Scholes--Merton. We have shown that considering these automorphisms allows one to prove very general mutual-fund theorems.

\section{Acknowledgements}

I would like to thank both the anonymous referees and Andrei Ionescu for their useful comments and corrections, and to thank Markus Riedle and Nick Bingham for their valuable advice.

\section{Funding}

I received no funding for this study.

\bibliographystyle{plain}
\bibliography{classifyingmarkets}

\begin{appendices}

\appendix

\section{Proofs}

\renewcommand{\thesubsection}{\thesection.\arabic{subsection}}

\subsection{Proofs for Section \ref{sec:markowitz}}

\begin{proof}[Proof of Lemma \ref{lemma:mod0Lemma}]
	Let $f:\Omega_1 \to \Omega_2$ be a $\Prob$ isomorphism. We can then find a homomorphism $g:\Omega_2 \to \Omega_1$ such
	that $g \circ f = \id_1$ almost surely and $f \circ g = \id_2$ almost surely. Define $\Omega_1^\prime$ to be the
	set of points where $fg(x)=x$ and $\Omega_2^\prime$ to be the set of points where $gf(y) = y$. $\Omega_1^\prime$ and $\Omega_2^\prime$
	will be of full measure.
	If $x_1, x_2 \in \Omega_1$ and $f(x_1)=f(x_2)$, then
	$gf(x_1)=gf(x_2)$, hence $x_1=x_2$. Thus $f$ is injective on $\Omega^\prime_1$. If $y \in \Omega^\prime_2$ then $fg(y)=y$, so
	$gfg(y)=g(y)$ and hence $g(y)\in \Omega_1^\prime$ with $y=fg(y)$. Thus $f$ maps $\Omega_1^\prime$ onto $\Omega_2^\prime$. Hence $f$
	is a mod 0 isomorphism.
	
	The converse follows trivially from the definitions.
\end{proof}

\begin{proof}[Proof of Lemma \ref{lemma:isomorphism}]
	Let $\phi:\Omega_1 \to \Omega_2$ be a $\Prob$ morphism with two-sided inverse $\phi^{-1}$.
	Suppose $\phi$ is, moreover, a market isomorphism.
	Using the fact that $\phi$ and $\phi^{-1}$ are both market morphisms, we have that for any $X\in L^0(\Omega_2; \R)$ we have
	\[
	c_2(X) = c_2(X \circ \phi \circ \phi^{-1})  \leq c_1(X \circ \phi) \leq c_2(X).
	\]
	Hence we must have equality throughout. Hence $c_2(X)=c_1(X\circ \phi)$.
	
	The result now follows from Lemma \ref{lemma:mod0Lemma}.
\end{proof}

\begin{proof}[Proof of Theorem \ref{thm:genericMutualFundTheorem}]
	Given $h \in G$, define $\phi_h:G \to G$ by left multiplication, so $\phi_h(g)=hg$. Let $A$ be a measurable set and let $1_A$ denote
	the indicator function of $A$ then
	\[
	1_A \circ \phi_h = 1_{h^{-1} A}.
	\]
	We deduce that
	\begin{equation}
	\E(X \circ \phi_h)=\E(X)
	\label{eq:rvinvariance}
	\end{equation}
	if $X$ is an indicator function of a set, and hence this holds for all integrable random variables $X$.
	
	By assumption $S$ is non-empty, so we may choose an element $s^\prime \in S$.	We define a random variable $X:G \to V$ by
	\begin{equation}
	X(g)=\rho(g) s^\prime.
	\label{eq:defnOfX}
	\end{equation}
	Because $G$ acts by isometries on $V$, $\|X(g)\|=\|\rho(g) s^\prime\|=\|s^\prime\|$ for all $g$. Hence by the dominated convergence theorem
	we may define an element $s$ by
	\begin{equation}
	s:=\E_\G( X ).
	\label{eq:defnOfS}
	\end{equation}
	By the convexity of $S$, $s \in S$. Given $h\in G$, we now compute that
	\begin{equation*}
	s = \E_{\G}(X)
	= \E_{\G}(X \circ \phi_{h})
	= \E_{\G}(\rho( h g) s^\prime)
	= \E_{\G}(\rho( h) \rho(g) s^\prime)
	= \rho( h) \E_{\G}( \rho(g) s^\prime)
	= \rho( h) s,
	\end{equation*}
	using \eqref{eq:rvinvariance}, \eqref{eq:defnOfX}, that $\rho$ is a
	homomorphism, linearity of expectation, and finally \eqref{eq:defnOfX}
	and \eqref{eq:defnOfS}.
	So $s$ is invariant under $G$.
	
	If $G$ is finite, the expectation is a finite sum, so we do not need the dominated convergence theorem.
\end{proof}

\begin{proof}[Proof of Lemma \ref{lemma:regularity}]
	We recall that a {\em perfect} probability measure
	is a complete probability measure, $\mu$ on a set $S$ such that for every measurable map $f:S \to \R$
	the image measure is a regular measure on $\R$. Lemma 2.4.3.\ of \cite{itoIntroduction} proves
	that all standard probability spaces are perfect. Let $S$ be a perfect probability space and
	let $V$ be a finite-dimensional real vector space, then Exercise
	3.1(iii) of \cite{itoIntroduction} shows that any measurable map $f:S \to V$ induces
	a regular measure on $V$. Thus it suffices to show that $\pi$ defined by \eqref{eq:definitionOfPi} is
	measurable.
	
	Choose a basis $\{X_i\}$ for $\dom c$. Define a map $X:\Omega \to \R^n$ by requiring that the $i$-th component of $X(\omega)$
	is given by $X(\omega)_i = X_i(\omega)$. This map is measurable since each $X_i$ is measurable. Define a map $X^{**}:(\dom c)^* \to \R^n$
	by requiring that the $i$-th component of $X^{**}(f)$ is given by $X^{**}(f)_i = f(X_i)$. $(X^{**})^{-1}$ is a linear isomorphism and so
	is measurable by the definition of the topology on $\dom c$. Since $\pi=(X^{**})^{-1}\circ X$, $\pi$ is measurable.
\end{proof}

\begin{proof}[Proof of Theorem \ref{thm:linearmarkets}]
	We first show that $\Vec(M)$ lies in $\VecM$.	
	
	We have already seen in Lemma \ref{lemma:regularity} that $d_M$ is regular.	
	
	We must also show that $d_M$ is non-degenerate. Given $X \in \dom c$
	we may define a linear functional $X^{**} \in (\dom c)^{**}$ by $X^{**}(f)=f(X)$. Double duality is an isomorphism, so given distinct $\tilde{X}, \tilde{Y} \in (\dom c)^{**}$ we may find distinct $X, Y \in (\dom c)$ with $X^{**}=\tilde{X}$ and $Y^{**}=\tilde{Y}$.
	For any $Z \in (\dom c)$, $Z^{**}\circ \pi=Z$. $M$ is separated, so $\pi$ is a mod 0 isomorphism. Since $X$ and $Y$ are not equal, it then follows that $\tilde{X}=X^{**}$ and $\tilde{Y}=Y^{**}$ are not equal almost everywhere. So $d_M$ is non-degenerate, as claimed.
	
	This completes the proof that $\Vec(M)$ lies in $\VecM$.
	
	We now define an additional map, also denoted $\Vec$, which sends morphisms of $\FinM$ to morphisms of $\VecM$. Given a market morphism $T$ between two such markets $M_i=((\Omega_i, {\cal F}_i,\P_i),c_i)\in \FinM$ ($i=1,2$) we define $T^*:\dom c_2 \to \dom c_1$ by
	$
	T^*(f)=f\circ T.
	$
	We define $\Vec(T)=T^{**}:(\dom c_1)^* \to (\dom c_2)^*$ to be the ordinary vector space dual of $T^*$.
	We wish to show that $\Vec(T)$ is a morphism in $\VecM$. Since
	$T$ is a market morphism we compute that for any $v \in (\dom c_1)^*$
	\[
	(\Vec(T) c_1) (v) = T^{**}(c_1)(v) = c_1 (T^* v) = c_1( v \circ T ) \leq c_2(v).
	\]
	Applying the same calculation to $-v$ and using linearity, we also have
	$
	(\Vec(T) c_1) (v) \leq -c_2(v).
	$
	Hence
	\begin{equation}
	(\Vec(T) c_1) (v) = c_2(v).
	\label{eq:morphismCondition1}
	\end{equation}
	We note that
	\begin{equation*}
	T^{**}(v) = w \iff \forall f \in V_2^*, \, T^*f(v) = f(w)
	\iff \forall f \in V_2^*, \, f T(v) = f(w) 
	\iff T(v) = w.
	\end{equation*}
	It follows that given a set $A \subseteq V_2$
	\[ 
	(T^{**})^{-1}(A) = T^{-1}(A).
	\]
	So if $A$ is Borel measurable we have
	\begin{equation}
	d_1(\Vec(T)^{-1} A) = d_1((T^{**})^{-1}(A)) = d_1(T^{-1}(A))= d_2(A).
	\label{eq:morphismCondition2}
	\end{equation}
	Together \eqref{eq:morphismCondition1}
	and \eqref{eq:morphismCondition2} show that $\Vec(T)$ is a morphism in $\VecM$, as claimed.
	
	We must show that $\Fin((V,d,c))$ is
	an element of $\FinM$.
	We first note that the probability space underlying $\Fin((V,d,c))$ is standard,
	since a regular distribution on a real vector space always defines a standard probability distribution.
	Since all elements of $\VecM$ have non-degenerate distributions, $\dom \underline{c} \subset L^0(V;\R)$ is equal to $V^*$ (rather than a non-trivial quotient space of $V^*$ by equivalence almost everywhere). The dual space of a finite-dimensional vector space separates the points of the vector space, so $\Fin((V,d,c))$ is separated. It is now clear that $\Fin((V,d,c))$ lies in $\FinM$.
	
	We define a mapping on morphisms, also called $\Fin$, by $\Fin(T)=T$ for any morphism $T$ of $\VecM$. We must show that a $\VecM$ morphism
	is automatically a market morphism. Equation \eqref{eq:distributionConditionForVecM} 
	shows that $\Fin(T)$ is a $\Prob$ morphism.
	
	Next observe that a
	$\VecM$ morphism is automatically surjective. Suppose for contradiction
	that $T$ is not surjective, then we can find a non-zero linear functional $X$ which annihilates $\Image(T)$.
	Since $\Image(T)$ is of full measure, $X$ is almost-surely zero, and hence $d_2$ is degenerate, yielding the desired
	contradiction.
	
	Now let $T:V_1 \to V_2$ be a morphism
	in $\VecM$ and $X \in L^0(V_2; \R)$.
	First suppose $X$ is linear, then equation \eqref{eq:costConditionForVecM} shows that $\underline{c}_1(X\circ T)=\underline{c}_2(X)$. Next suppose $X$ is not linear, so we may find $v,w \in V_2$
	and $\alpha \in \R$ with $X(\alpha v + w) \neq \alpha X(v) + X(w)$. Since 
	$T$ is surjective we may find $v^\prime, w^\prime \in V_1$ with $T v^\prime = v$
	and $T w^\prime=w$. Then $XT(\alpha v^\prime + w^\prime) \neq \alpha XT(v^\prime) + XT(w^\prime)$. So $XT$ is also non-linear, and hence $\underline{c}_1(X \circ T) = \infty = c_2(X)$. Thus $c_1(X\circ T)=c_2(X)$ for all $X \in L^0(V_2; \R)$. So
	$\Fin(T)$ is a market morphism as claimed.
	
	Since $\dom \underline{c}=V^*$, we have $(\dom \underline{c})^*=V^{**}$. Hence the composition $\Vec \circ\Fin$ is given by double duality of vector spaces. In particular $\Vec \circ \Fin(V,d,c)$ is naturally isomorphic to $(V,d,c)$.
	
	We note that $\Fin \circ \Vec (M)$ is naturally isomorphic to $M$ with the isomorphism given by $\pi$ defined in \eqref{eq:definitionOfPi}.
	
	We have now shown that $\Vec$ and $\Fin$ define an equivalence of the categories $\FinM$ and $\VecM$. It is trivial to check that vector-space duality defines
	a duality of the categories $\VecM$ and $\DualM$. The statement that $\Vec$
	and $\Dual$ define bijections follows by elementary category theory \cite{eilenbergmaclane}.
\end{proof}

\begin{proof}[Proof of Theorem \ref{thm:markowitzClassification}]
	Let $\Cov:\dom c \times \dom c \to \R$ be given by the covariance. This is a non-degenerate symmetric bilinear form and hence defines an inner product on $\dom c$. All real inner-product spaces of dimension $n$ are isomorphic to the standard Euclidean space $\R^n$, hence we can find a second basis $\{Y_i\}$ for $\dom c$ with 
	covariance matrix $\id_n$. The distribution of these assets will still be a multivariate normal distribution, but now with covariance matrix $\id_n$. This shows that the market is Gaussian.
\end{proof}

\begin{proof}[Proof of Corollary \ref{cor:generalMutualFund}]
	It suffices to prove the result for markets of the form \eqref{eq:genericMarkowitz}. Let $\phi:\R^n \to \R^n$ be the linear transformation given by the matrix
	\[
	\phi_{ij}=\begin{cases}
	1 & \text{$i=j$ and $i, j \leq 2$}, \\ 
	-1 & \text{$i=j$ and $i, j > 2$}, \\
	0 & \text{otherwise}. 
	\end{cases}
	\]	
	$\phi$ defines an automorphism of any market of the form \eqref{eq:genericMarkowitz}.
	Any invariant investment strategy must be invariant under $\phi^*$. $\phi^*$ has the same matrix representation as $\phi$ when written with respect to the standard dual basis $\{e_i^*\}$ for $(\R^n)^*$. If $X$ is an invariant investment strategy, its components $(X)_i$ written with respect to this basis satisfy $X_i=0$ for $i > 2$.
\end{proof}

\subsection{Proofs for Section \ref{sec:completeOnePeriod}}

We review the features
of the theory of disintegration of measures we will need.

\begin{definition}
	Let $\{S_\alpha\}$ be a countable collection of subsets of a set $S$. We write $\zeta(\{S_\alpha\})$ for the collection of
	sets of the form
	\[
	\bigcap_{i=1}^\infty S^\prime_\alpha, \quad (S^\prime_\alpha = S_\alpha\text{ or }
	S^\prime_\alpha = S \setminus S_\alpha).
	\]
	These sets are disjoint and cover $S$ so they define a decomposition of $S$ called the {\em decomposition generated by $\{S_\alpha\}$}. A decomposition of a measurable set $S$ generated by a countable
	collection of measurable sets is called a {\em measurable decomposition}. Here we are using the terminology of \cite{rokhlin} p5 and p26. These decompositions are called {\em separable decompositions} in \cite{itoIntroduction}.
	We say that two measurable decompositions $\zeta$ and $\zeta^\prime$ of probability spaces $\Omega$ and $\Omega^\prime$ are {\em mod 0 isomorphic} if there is a mod 0 isomorphism of $\Omega$ mapping the elements of $\zeta$ to the elements of $\zeta^\prime$.
\end{definition}

Given a decomposition $\zeta$ of a probability space $\Omega$ we may define a  projection map, $\pi_\zeta:\Omega \to \zeta$ by sending a point $\omega$ to the element of $\zeta$ containing $\omega$. This projection map induces a measure $\mu_\zeta$ on $\zeta$.
Rokhlin refers to the resulting measurable space as the quotient space $\Omega/\zeta$ (see p4 of \cite{rokhlin}).

\begin{definition}
	\label{def:canonical}
	Let $\zeta$ be a decomposition of a standard probability
	space $\Omega$.
	Let $\mu_C$ be a set of measures defined indexed by $C \in \zeta$. We say that $\mu_C$ is {\em canonical with respect to $\zeta$} if the following hold.
	\begin{enumerate}[nosep,label=(\roman*)]
		\item $\mu_C$ is a standard probability
		space for $\mu_\zeta$-almost-all $C \in \zeta$.
		\item If $A$ is a measurable subset of $\Omega$ then:
		\begin{enumerate}[nosep,label=(\alph*)]
			\item the set
			$A \cap C$ is $\mu_C$ measurable for $\mu_\zeta$-almost-all $C$;
			\item $\mu_C(A \cap C)$ defines a $\mu_\zeta$-measurable function acting on $C \in \zeta$;
			\item the measure $A$ can be recovered by integrating over $\zeta$, i.e.
			\[
			\mu(A) = \int_{\zeta} \mu_C(A \cap C) \, \ed \mu_\zeta.
			\]
		\end{enumerate}
	\end{enumerate}
\end{definition}

This definition is simply a translation of the definition on p25 of \cite{rokhlin} into our notation.
We note that what we call a standard probability space, Rohklin calls a Lebesgue space. The equivalence of these notions is given on p20 of \cite{rokhlin}.

We may now state two theorems, both due to Rohklin.
\begin{theorem}
	\label{thm:rokhlin1}
	Let $\Omega$ be a standard probability space.
	There exists a set of measures $\mu_C$ canonical with respect to $\zeta$ if and only 
	$\zeta$ is a measurable decomposition (\cite{rokhlin} p26). Moreover, $\mu_C$ is defined
	essentially uniquely: if $\mu_C$ and $\mu_{C^\prime}$ are both canonical for $\zeta$ then $\mu_C$ is mod 0 isomorphic to $\mu_{C^\prime}$ for $\mu_\zeta$-almost-all $C$ (\cite{rokhlin} p25).
\end{theorem}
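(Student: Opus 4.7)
The plan is to derive the theorem as an instance of the existence and essential uniqueness of regular conditional probabilities on standard probability spaces, supplemented by an argument that these conditional probabilities are carried by the fibres of $\zeta$.

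First I would set up the measurable structure on $\zeta$. Using the countable generating family $\{S_\alpha\}$, I define $\sigma : \Omega \to \{0,1\}^{\mathbb N}$ by $\omega \mapsto (1_{S_\alpha}(\omega))_\alpha$, which factors as $\sigma = \iota \circ \pi_\zeta$ for a natural injection $\iota : \zeta \hookrightarrow \{0,1\}^{\mathbb N}$. Pushing forward $\mu$ along $\sigma$ equips $\zeta$ with a probability measure $\mu_\zeta$, and because $\{0,1\}^{\mathbb N}$ is a standard Borel space, $(\zeta, \mu_\zeta)$ is itself standard (mod $0$). Thus the quotient map $\pi_\zeta$ is a measurable map between two standard probability spaces.

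Next I would construct the disintegration. Since $\Omega$ is standard and $\pi_\zeta$ is a measurable map into a standard space, the classical theorem on regular conditional probabilities (for which Rokhlin's reduction to $[0,1]$ plus atoms is exactly the input) produces a $\mu_\zeta$-a.e.\ uniquely determined family $\{\mu_C\}_{C\in\zeta}$ of probability measures on $\Omega$ satisfying the measurability and integration clauses (ii)(a)--(c) of Definition \ref{def:canonical}. It remains to verify that, for $\mu_\zeta$-a.e.\ $C$, the measure $\mu_C$ is concentrated on $C$ and is itself standard. For each fixed $\alpha$, integrating over elements $C\subseteq S_\alpha$ and over $C\subseteq \Omega\setminus S_\alpha$ forces $\mu_C(S_\alpha)\in\{0,1\}$ to match the indicator of $C\subseteq S_\alpha$ off a $\mu_\zeta$-null set $N_\alpha$; off the countable union $\bigcup_\alpha N_\alpha$ this holds for every $\alpha$ simultaneously, so $\mu_C$ is supported on $\bigcap_{\alpha} S'_\alpha = C$. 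Standardness of $\mu_C$ follows because $\mu_C$ is a Borel probability measure supported on a measurable subset of the standard space $\Omega$, and such subsets are standard with respect to the conditional measure.

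For the converse direction, if some canonical $\{\mu_C\}$ exists, then $\mu_\zeta$ makes $\zeta$ a probability space whose $\sigma$-algebra (generated by measurable unions of classes) is countably generated mod $0$ by virtue of being standard; pulling a countable generator back through $\pi_\zeta$ yields a countable family of measurable sets in $\Omega$ whose induced decomposition agrees with $\zeta$ mod $0$, showing $\zeta$ is a measurable decomposition. For the essential uniqueness clause I would let $\{\mu_C\}$ and $\{\mu'_C\}$ be two canonical systems; for each fixed measurable $A\subseteq\Omega$ the two functions $C\mapsto \mu_C(A\cap C)$ and $C\mapsto \mu'_C(A\cap C)$ have equal integrals over every measurable $B\subseteq\zeta$ by the integration clause, so they agree $\mu_\zeta$-a.e. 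Applying this to a countable algebra generating the $\sigma$-algebra on $\Omega$ (available since $\Omega$ is standard) and invoking a monotone-class argument produces a single $\mu_\zeta$-null set outside which $\mu_C$ and $\mu'_C$ agree on all measurable sets, i.e.\ are mod $0$ isomorphic (indeed equal as measures).

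The main obstacle is the passage from the generic regular-conditional-probability output to a system that is genuinely carried by the fibres of $\zeta$; the abstract theorem gives only the integration identity, and it is the \emph{countable} generation of the decomposition that lets one avoid an uncountable union of exceptional null sets when enforcing fibre support. The reduction to the standard model $[0,1]$ plus atoms is what makes the existence of regular conditional probabilities available in the first place, so this step implicitly uses the full force of the standard-space hypothesis.
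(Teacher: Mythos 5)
You should first note that the paper does not prove Theorem \ref{thm:rokhlin1} at all: it is stated as a quotation of Rokhlin's results, with page references to \cite{rokhlin}, and is used downstream as a black box. Your proposal is therefore not competing with an argument in the paper but reconstructing Rokhlin's theorem via modern regular-conditional-probability machinery. Two of your three pieces are sound in outline. For existence, encoding $\Omega$ into $\{0,1\}^{\N}$ via the countable generating family, disintegrating over the standard image, and then using the \emph{countability} of the family to absorb the exceptional null sets when forcing $\mu_C$ to be carried by $C$ is exactly the right mechanism. For essential uniqueness, localizing the integration identity of Definition \ref{def:canonical}(ii)(c) to sets of the form $A\cap\pi_\zeta^{-1}(B)$ (a step you elide but which is immediate) and then running a countable generating algebra plus a monotone-class argument is correct, and in fact yields the stronger conclusion that $\mu_C=\mu_{C^\prime}$ a.e.\ as measures, not merely mod $0$ isomorphism.

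The genuine gap is in the ``only if'' direction. You assert that if a canonical system exists then $(\zeta,\mu_\zeta)$ is ``countably generated mod $0$ by virtue of being standard''; but the standardness of the quotient is precisely what is in question in this direction and cannot be assumed. For a non-measurable decomposition --- say the partition of the circle into orbits of an irrational rotation --- the quotient $\sigma$-algebra is trivial mod $0$ and the quotient is badly non-standard, and nothing in your argument rules this out a priori once a canonical family is merely hypothesized; as written the step is circular. The repair is to use the canonical system itself to manufacture the countable generating family: fix a countable separating basis $\{B_n\}$ of $\Omega$, observe that $C\mapsto\mu_C(B_n\cap C)$ is $\mu_\zeta$-measurable by clause (ii)(b), so the rational sublevel sets of these functions pull back to countably many measurable unions of fibres; if two fibres $C\neq C^\prime$ gave the same values for all $n$ then $\mu_C(\,\cdot\,\cap C)$ and $\mu_{C^\prime}(\,\cdot\,\cap C^\prime)$ would agree on a generating algebra and hence be equal, contradicting that they are concentrated on the disjoint sets $C$ and $C^\prime$. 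Hence this countable family generates $\zeta$ mod $0$, which is the required measurability. With that substitution your reconstruction is complete; the forward and uniqueness halves, which are the only parts the paper actually invokes, already stand.
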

\begin{theorem}
	\label{thm:rokhlin2}
	Let $\Omega$ be a standard probability space and $\zeta$ a measurable decomposition.
	Let $m_\zeta:\zeta \to {\cal S}$ be given by mapping the measure $\mu_C$ to the element of ${\cal S}$
	corresponding to its isomorphism class. Then $m_\zeta$ is $\mu_\zeta$ measurable. Two decompositions $\zeta$ and $\zeta^\prime$ are mod 0 isomorphic if and only $\mu_\zeta$ and $\mu_{\zeta^\prime}$
	are mod 0 isomorphic via a map sending $m_\zeta$ to $m_\zeta^\prime$ (\cite{rokhlin} p40).
\end{theorem}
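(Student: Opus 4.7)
The plan is to address the two assertions of the theorem in turn: first the measurability of $m_\zeta$, then the classification of decompositions up to mod 0 isomorphism.

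For measurability, I would work through the explicit coordinates on $\cal S$ given by \eqref{eq:defCalS}. Since $\cal S$ inherits its Borel structure from $\R^\infty$ via the sequence $(m_0,m_1,\ldots)$, it suffices to show that for each $i$ the map $C \mapsto m_i(\mu_C)$ is $\mu_\zeta$-measurable. Fix a countable algebra $\{A_j\}$ generating $\cal F$. For each finite tuple of disjoint sets $A_{j_1},\ldots,A_{j_i}$ from this algebra, the quantity $\min_k \mu_C(A_{j_k}\cap C)$ is measurable in $C$ by Definition \ref{def:canonical}; the $i$-th largest atom mass of $\mu_C$ is a countable supremum of such minima (obtained by refining the partitions so that each chosen set shrinks to a single atom), hence measurable. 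Then $m_0 = 1 - \sum_{i\geq 1} m_i$ is measurable as well.

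For the classification, the forward direction is short: a mod 0 isomorphism $\phi:\Omega\to\Omega'$ sending elements of $\zeta$ to elements of $\zeta'$ descends to a mod 0 isomorphism $\bar\phi:\zeta\to\zeta'$ of quotient measure spaces; and for $\mu_\zeta$-almost every $C$, the restriction $\phi|_C:(C,\mu_C)\to(\bar\phi(C),\mu_{\bar\phi(C)})$ is a mod 0 isomorphism, by uniqueness in Theorem \ref{thm:rokhlin1}. So the isomorphism classes $m_\zeta(C)$ and $m_{\zeta'}(\bar\phi(C))$ agree almost surely.

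The converse direction is the technical heart. Given a mod 0 isomorphism $\bar\phi:\zeta\to\zeta'$ with $m_{\zeta'}\circ\bar\phi = m_\zeta$ almost surely, one must assemble measurable families of fibrewise mod 0 isomorphisms $\phi_C:(C,\mu_C)\to(\bar\phi(C),\mu_{\bar\phi(C)})$ into a single mod 0 isomorphism $\phi:\Omega\to\Omega'$. The standard strategy is to reduce to a concrete normal form: each fibre $(C,\mu_C)$ is mod 0 isomorphic to the model space determined by $m_\zeta(C)\in\cal S$, namely a Lebesgue interval of length $m_0(C)$ with atoms of masses $m_1(C),m_2(C),\ldots$ attached. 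Measurability of $m_\zeta$, combined with Rokhlin's classification of standard probability spaces, allows one to construct such isomorphisms to the model space jointly measurably in $C$; doing this for both decompositions and composing through $\bar\phi$ yields the required $\phi$.

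The main obstacle is the measurable selection in the converse direction. The atomic part has varying cardinality and atom masses as $C$ varies, and one must keep the whole construction jointly measurable. This is handled by enumerating atoms in decreasing order of mass (measurable, by the argument for part one) and handling the continuous parts via a measurable cross-section into $[0,m_0(C)]$; the details are precisely those carried out on pp.\ 25--40 of \cite{rokhlin}, which is why, in keeping with the spirit of this paper, I would cite the result rather than reprove it.
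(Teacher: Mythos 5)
The paper does not prove this theorem at all: it is introduced with the words ``two theorems, both due to Rohklin'' and is simply cited to \cite{rokhlin} p40, so there is no in-paper argument to compare yours against. Your decision to sketch the structure and then defer the measurable-selection machinery to Rokhlin pp.\ 25--40 is therefore entirely consistent with the paper's treatment, and your outline is the standard one: the forward implication follows from the essential uniqueness of canonical systems (Theorem \ref{thm:rokhlin1}), and the converse reduces each fibre to the model space in ${\cal S}$ determined by $m_\zeta(C)$, carried out measurably in $C$.

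One concrete slip in the part you do spell out: your recipe for the $i$-th largest atom mass is wrong as stated. The supremum over $i$-tuples of disjoint sets from a generating algebra of $\min_k \mu_C(A_{j_k}\cap C)$ overestimates $m_i(\mu_C)$ whenever the fibre has a continuous component --- for an atomless $\mu_C$ it approaches $1/i$ while $m_i=0$. What is needed is a decreasing limit along a refining sequence of countable measurable partitions generating ${\cal F}$ mod 0: for each fixed partition the $i$-th largest cell mass is indeed $\max_{|S|=i}\min_{j\in S}\mu_C(P_j\cap C)$, and these quantities decrease to $m_i(\mu_C)$ as the partition refines to points, giving measurability as a pointwise limit of measurable functions. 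Your parenthetical about refining partitions shows you are aware of this, but the outer limit must appear explicitly; as written, the displayed expression computes the wrong function of $C$.
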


Finally, Theorem 3.3.1 of \cite{itoIntroduction}
tells us that if $X$ is a real random variable, and if we define $\zeta$ to be the set of sets of the form $X^{-1}(x)$ then
$\zeta$ is a measurable decomposition.
When we apply Theorem \cite{rokhlin} to the level sets of a random variable $\zeta$, the measure $\mu_{X^{-1}(x)}$ on the level set $X^{-1}(x)$ for $x \in \R$ is called the {\em conditional probability measure}, conditioned on $X=x$ (see \cite{itoIntroduction} Section 3.5). Note that in this case the projection map sending the level set $X^{-1}(x)$ to $x$ defines a mod 0 isomorphism between $\zeta$ with measure $\mu_\zeta$ and the probability measure on $\R$ induced by $X$.

\begin{proof}[Proof of Theorem \ref{thm:classificationComplete}]
	First note that $(\mu_q,m_q) \in \Measures(n)$ is manifestly an invariant of $\Omega$.
	
	Given a pair $M=(\mu,m) \in \Measures(n)$, let us see how to define $\Omega(M)$ with $(\mu_q,m_q)=M$.
	
	Let $a_0$ be the probability space $[0,1]$. For $i> 0$, let $a_i$ be a probability space consisting of a single atom. We take as probability space
	\[
	\Omega(M)=(0,\infty)^n
	\times \left(\sqcup_{i=0}^\infty a_i \right).
	\]
	This has a measure we denote by $(\mu \times  \lambda)$ induced by taking the standard construction of product measures and measures on disjoint unions and then obtaining the Lebesgue extension.
	Using our concrete realisation of ${\cal S}$, given in \eqref{eq:defCalS}, we define the components $m_{i}$ of the function $m$ for $i \in \N \cup \{ \infty \}$. Let $\pi_1:\Omega_M \to (0,\infty)^n$ denote the projection onto the $(0,\infty)^n$ component. We then obtain measurable functions $m_i \circ \pi_1$ defined on $\Omega$. Given a Lebesgue measurable subset $A$ of $\Omega$, we define a measure
	$\P_{0}(A)$ by
	\begin{align}
	\P_{0}(A) &:=  \int_\Omega \sum_{i=0}^\infty (m_i\circ \pi_1) \cdot 1_{A \cap ((0,\infty)^n\times a_i)} \, \ed (\mu \times \lambda) \nonumber \\
	&=  \int_{(0,\infty)^n}  \sum_{i=0}^\infty (m_i\circ \pi_1) \int_{a_i} 1_{A \cap ((0,\infty)^n\times a_i)} \, \ed (\mu \times \lambda|_{a_i}) \nonumber \\
	&=  \int_{(0,\infty)^n}  \sum_{i=0}^\infty m_i \, \P_{a_i}({A \cap {\pi_1}^{-1}(\omega)} \cap a_i) \, \ed \mu
	=  \int_{(0,\infty)^n}  \P_m({A \cap {\pi_1}^{-1}(\omega)}) \, \ed \mu. \label{eq:definingFulfilled}
	\end{align}
	Let $\zeta$ be the decomposition of $\Omega(M)$ given by the pre-images $\pi_1^{-1}(\omega)$ for $\omega \in (0,\infty)^n$. For $\omega \in (0,\infty)^n$, let $\mu_{\pi_1^{-1}(\omega)}$ be the measure $m(\omega)$. We observe
	that $m(\omega)$ is canonical with respect to $\zeta$.
	We explicitly check the requirements given in Definition \ref{def:canonical}. Property (i) follows since $\pi_1^{-1}(\omega)$ is always standard. Similarly property (ii) (a) follows since $A \cap \pi_1^{-1}(\omega)$ is always measurable. Property (ii) (b) follows from Fubini's theorem, as used in the derivation of equation \eqref{eq:definingFulfilled} above. Property (ii) (c) is given by \eqref{eq:definingFulfilled} itself.

	For $1\leq i \leq n$, we define measures $\P_{i,M}$ by
	\begin{equation}
	\P_{i,M}(A) = \omega_i \E_\mu( \pi_1 \cdot 1_A)
	\label{eq:defPi}
	\end{equation}
	where $\omega_i$ is the $i$th coordinate function on $(0,\infty)^n$ as before.
	This is an equivalent probability measure to $\P_0$ since $\omega_i$ is positive
	and has $\P_0$ expectation of 1.
	
	We see that $\Omega(M)$ equipped with these measures satisfies $(\mu_q,m_q)=M$.
	
	Suppose $\Omega$ is a probability space with $n$ additional equivalent measures $\P_i$.
	Let $M=(\mu_q,m_q)$.
	By Theorem \ref{thm:rokhlin2} we can find a mod 0 isomorphism, $\phi$, from $\Omega$ to $\Omega_{M}$ equipped with measure $\P_0$ which also sends $q$ to $\pi_1$ for each $i$. It follows from \eqref{eq:defPi} that $\phi$ must be a $\P_i$-isomorphism too.
\end{proof}

\begin{proof}[Proof of Theorem \ref{thm:simpleCompleteMarketN}]
	Let $S$ be a standard probability space and $\zeta$ a decomposition of $S$. Let $T$ be another standard probability space. We write $\zeta \star T$ for the decomposition of $S \times T$ given by taking the product of elements of $\zeta$ with $T$. Given a set of measures $\mu_C$ on $\zeta$ we write $\mu_C \times \mu_T$ for the product measures. It is clear that if $\mu_C$ is canonical with respect to $\zeta$ then $\mu_C$ is canonical with respect to $\zeta \star T$. Thus the conditional measures of $\frac{\ed \P_i}{\ed \P_0}$ on $\Omega \times I$ are all given by products with the standard measure on $I$. Hence $(m_\Omega \times I)_0=1$ $\mu_{\Omega \times I}$-almost-everywhere.
	
	On the other hand, taking the product
	of a $\Omega$ with $I$ does not
	affect the distribution $\mu_\Omega$. So if
	we take $\Omega^\prime$ to be the space defined in the 
	statement of Theorem \ref{thm:simpleCompleteMarketN}, we will have that the invariants
	of $\Omega^\prime \times I$ are equal to
	the invariants of $\Omega \times I$.
	The result now follows from Theorem \ref{thm:classificationComplete}.
\end{proof}

\begin{proof}[Proof of Theorem \ref{thm:simpleCompleteMarket1}]
	Pick a Lebesgue measure $\P$ on $(0,\infty)$ with $\E_{\P}(\omega_1) = 1$.
	The coordinate function $\omega_1$ on $(0,\infty)$ is just the identity.
	Define a measure $\Q$ by requiring that the Radon--Nikodym derivative is $\frac{\ed \Q}{\ed \P}=\omega_1=\id$.
	
	Let $F$
	be the distribution function of this measure and $F^{-1}:[0,1]\to (0,\infty)$ its inverse
	distribution function. We equip the interval $[0,1]$ with the Lebesgue measure $\P^\prime$ and a measure $\Q^\prime$ given by requiring that the Radon--Nikodym
	derivative $\frac{\ed \Q^\prime}{\ed \P^\prime}=F^{-1}$.
	
	If we can find a simultaneous mod 0 isomorphism between the measures $(\P,\Q)$ on $M \times I$
	and $(\P^\prime, \Q^\prime)$ on $\tilde{M} \times I$ we see that Theorem \ref{thm:simpleCompleteMarket1}
	follows from Theorem \ref{thm:simpleCompleteMarketN}. We take $\P_0=\P$ and $\P_1=\Q$ when applying Theorem \ref{thm:simpleCompleteMarketN}.
	
	We will now find the required isomorphism.
	In what follows, if  $X$ is a set with measure $\mu$ we will write $X_\mu$ to emphasize the measure on $X$.
	
	Let $0\leq p_1 \leq p_2 \leq 1$.
	
	Suppose that $p_1$ and $p_2$ are the two ends of a connected component of $\image F$ then
	$F$ is continuous between $p_1$ and $p_2$ and so $F$ defines a mod $0$ isomorphism between 
	$[F^{-1}(p_1),F^{-1}(p_2))_{\P}$ and $[p_1,p_2)_{\P^\prime}$. 
	So $(F^{-1}[p_1,p_2))_{\P} \times I$ is mod 0 isomorphic to $(p_1,p_2)_{\P^\prime}\times I$ via $F \times \id$. This isomorphism maps the random variable $\omega_1 = \id$ to $F^{-1}_X$. Hence it is also
	a mod 0 isomorphism for the measures $\Q$ and $\Q^\prime$.
	
	Suppose that $p_1$ and $p_2$ are the two ends of a connected component of $[0,1]\setminus \image F$. $(F^{-1}[p_1,p_2))_\P$ is mod 0 isomorphic to the atom
	$\{F^{-1}(p_1)\}_\P$ with mass $(p_2-p_1)$.
	So $(F^{-1}[p_1,p_2))_{\P_0} \times I$ is mod 0 isomorphic to $[p_1,p_2)_{\P^\prime}$
	which in turn is mod 0 isomorphic to $[p_1,p_2)_{\P^\prime}\times I$.
	The $\Q$-measure on the atom $\{F^{-1}(p_1)\}$ is equal to
	$F^{-1}(p_1)$, which is equal to $F^{-1}(p)$ for all $p_1 \leq p \leq p_2$.
	Hence $(F^{-1}[p_1,p_2))_{\Q_0} \times I$ is simultaneously
	mod 0 isomorphic to $[p_1,p_2)_{\Q^\prime}\times I$.
	
	We may therefore cover $[0,1)\times I$ with
	a countable set of
	disjoint intervals of the form $[p_1,p_2)\times I$
	which are simultaneously $\P$/$\Q$ mod 0 isomorphic to
	$(F^{-1}[p_1,p_2)) \times I$.	
	
	We may therefore combine these mod 0 isomorphisms on intervals to obtain
	the desired mod 0 isomorphism for the $\P$ and $\Q$ measures.
\end{proof}

\begin{proof}[Proof of Corollary \ref{cor:convexMutualFund}]
	We have the obvious inclusion $\iota: L^1_{\P_0}(\Omega) \to L^1_{\P_0}(\Omega \times I)$. Any element of $L^1_{\P_0}(\Omega \times I)$ which can be written as a function of the Radon--Nikodym derivatives $\frac{\ed \P_i}{\ed \P_0}$ must lie in the image of $\iota$. Hence
	it suffices to prove that $\iota A$ contains
	an element which can be written as a function
	of these Radon--Nikodym derivatives.

	By Theorem \ref{thm:simpleCompleteMarketN} we
	may assume without loss of generality that the market $\Omega \times I$ is given by  $\Omega^\prime \times I=(0,\infty)^n \times I$ and $\P^\prime_i$ as described in Theorem \ref{thm:simpleCompleteMarketN}. In this case the Radon--Nikodym derivatives are given by the coordinate functions $\omega_i$. 
	
	Let $G = S^1 \cong \R / \Z$ with measure given by the quotient measure. Since each element of $\R / \Z$ has a unique representative on $[0,1)$, $G$ is strictly isomorphic to $[0,1)$ as a probability space. Hence we may define an action of $G$ on any product space $X \times I$ by using the action on the right-hand side of the product. We can apply Theorem \ref{thm:genericMutualFundTheorem} with this choice of $G$ and taking as $\iota A$ as the convex set. The result now follows.
\end{proof}

We collect together the key properties of rearrangement in a single lemma.

\begin{lemma}
	\label{lemma:rearrangement}
	Let $m$ be a Lebesgue measure on $(0,\infty)$ satisfying condition \eqref{eq:conditionOnM}. Then
	$U_m$ is a uniformly-distributed random variable. Let $X$ be a random variable in	$X \in L^0_{\P_m}((0,\infty) \times [0,1); \R)$.
	
	The $\P_m$ distribution is left fixed by rearrangement of $X$. The $\Q_m$ distributions are increased or decreased according to whether one applies the increasing
	or decreasing rearrangement. Symbolically:
	\begin{equation}
	d^{\P_m}(X)=d^{\P_m}(R^\pm_m(X))
	\label{eq:rearrangementFixingProperty},
	\end{equation}
	\begin{equation}
	d^{\Q_m}(X) \preceq d^{\Q_m}(R^+_m(X))
	\label{eq:increasingRearrangementProperty},
	\end{equation}
	\begin{equation}
	d^{\Q_m}(X) \succeq d^{\Q_m}(R^-_m(X))
	\label{eq:decreasingRearrangementProperty}.
	\end{equation}
	
	In addition:
	\begin{equation}
	\frac{\ed \Q_m}{\ed \P_m}(\omega)<\frac{\ed \Q_m}{\ed \P_m}(\omega^\prime) \implies
	R^\pm_m(\pm X(\omega)) \leq  R^\pm_m(\pm X(\omega^\prime))
	\label{eq:increasingInNikodym1},
	\end{equation}
	\begin{equation}
	d^{\P_m}(X) \preceq d^{\P_m}(Y) \implies d^{\Q_m}(R^+_m(X)) \preceq d^{\Q_m}(R^+_m(Y))
	\label{eq:transitivity},
	\end{equation}
	\begin{equation}
	F_{\frac{\ed \Q_m}{\ed \P_m}}\text{ is continuous at }x \implies R^\pm_m(X)(x,y_1)=R^\pm(X)_m(x,y_2) \quad \forall  y_1, y_2
	\label{eq:constantOnFibres}.
	\end{equation}
\end{lemma}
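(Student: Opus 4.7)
The plan is to reduce every conclusion of the lemma to two structural facts about $U_m$: that it is $\P_m$-uniformly distributed on $[0,1]$, and that it is weakly non-decreasing in the coordinate $x$. The second fact is immediate from the defining formula since $F_m$ is monotone. For the first, I would split $F_m$ into its continuous and pure-jump parts. Wherever $F_m$ is continuous at $x$, the formula collapses to $U_m = F_m(x)$, whose pushforward of the continuous part of $m$ is uniform on the range of that part by the probability integral transform. At each atom $\{x_k\}$ of mass $p_k$, the formula gives $U_m(x_k,y) = F_m(x_k-) + p_k\, y$, and since $y$ is independent of $x$ and uniform on $[0,1)$ under $\P_m = m \times \lambda$, this spreads the atomic mass uniformly across the gap $[F_m(x_k-),F_m(x_k+)]$. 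Assembling the continuous and atomic pieces covers $[0,1]$ exactly once and yields $U_m \sim \mathrm{Uniform}[0,1]$.

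With $U_m$ uniform, \eqref{eq:rearrangementFixingProperty} is the classical inverse-CDF identity: $F_X^{-1}(U)$ has distribution $F_X$ whenever $U$ is uniform. The $\Q_m$-inequalities \eqref{eq:increasingRearrangementProperty} and \eqref{eq:decreasingRearrangementProperty} are where the real work sits. The crucial observation is that $R^+_m(X) = F_X^{-1}(U_m)$ is comonotonic with $x = \frac{\ed \Q_m}{\ed \P_m}$, being a composition of two non-decreasing maps. Hence for each level $k$, the set $A = \{R^+_m(X) > k\}$ coincides, up to a boundary atom, with an upper level set of $x$. Since $R^+_m(X)$ and $X$ have the same $\P_m$-distribution, $\P_m(A) = \P_m(\{X > k\})$, and the bathtub principle says that among all measurable sets $B$ of fixed $\P_m$-measure, $\Q_m(B) = \E_{\P_m}(x \cdot 1_B)$ is maximised when $B$ is an upper level set of $x$. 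This gives $\Q_m(R^+_m(X) > k) \geq \Q_m(X > k)$ for every $k$, which is exactly \eqref{eq:increasingRearrangementProperty}. The decreasing version \eqref{eq:decreasingRearrangementProperty} follows via $R^-_m(X) = -R^+_m(-X)$ by applying the increasing case to $-X$ and negating.

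The remaining properties are now routine. Property \eqref{eq:increasingInNikodym1} is the observation that $x(\omega) < x(\omega^\prime)$ forces $U_m(\omega) \leq U_m(\omega^\prime)$, and then composing with the monotone $F_X^{-1}$ (and using $R^-_m(X) = -R^+_m(-X)$ for the other sign) gives the required monotonicity in $x$. Transitivity \eqref{eq:transitivity} follows because $d^{\P_m}(X) \preceq d^{\P_m}(Y)$ is precisely $F_X \geq F_Y$ pointwise, hence $F_X^{-1} \leq F_Y^{-1}$, hence $R^+_m(X) \leq R^+_m(Y)$ $\P_m$-almost surely; this pointwise inequality is strictly stronger than the claimed $\Q_m$-stochastic ordering. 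For \eqref{eq:constantOnFibres}, continuity of $F_m$ at $x$ means $F_m(x-) = F_m(x+)$, so the defining formula makes $U_m(x,y)$ independent of $y$, and hence so is $R^\pm_m(X)(x,y)$.

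The main obstacle is the atomic bookkeeping underlying the first two paragraphs: verifying uniformity of $U_m$ requires tracking how the $y$-coordinate redistributes each atomic mass of $m$ to fill in exactly the corresponding jump of $F_m$, and the bathtub argument for the $\Q_m$-inequality must allow the level set $\{R^+_m(X) > k\}$ to cut through an atom of $x$. Once this interpolation is handled carefully, every item in the lemma reduces either to the inverse-CDF identity or to the classical rearrangement inequality applied to the comonotone pair $(R^+_m(X),x)$.
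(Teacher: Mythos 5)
Your proposal is correct, and for most of the claims it runs along the same lines as the paper: uniformity of $U_m$ is verified by tracking how the $y$-coordinate fills in each jump of $F_m$ (the paper does this by computing $\P_m(U_m\leq z)$ directly rather than by splitting $m$ into continuous and atomic parts, but the bookkeeping is the same); \eqref{eq:rearrangementFixingProperty} is the inverse-CDF identity; \eqref{eq:increasingInNikodym1} and \eqref{eq:constantOnFibres} follow from monotonicity of $U_m$ in $x$ and its $y$-independence at continuity points; and \eqref{eq:transitivity} comes from $F_X\geq F_Y\Rightarrow F_X^{-1}\leq F_Y^{-1}$, with your observation that this already gives a pointwise (hence stronger) comparison. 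The genuine divergence is in \eqref{eq:increasingRearrangementProperty}--\eqref{eq:decreasingRearrangementProperty}. The paper's printed argument is a one-line manipulation of $\P_m(R^+_m(X)\leq k)$ that, as written, only re-derives a bound on the $\P_m$-law of $R^+_m(X)$; the conversion to the $\Q_m$-ordering is left implicit. You instead observe that $\{R^+_m(X)>k\}=\{U_m>F_X(k)\}$ is an upper level set of $x=\frac{\ed\Q_m}{\ed\P_m}$ (possibly cutting through one atom, where $x$ is constant so the cut is harmless) with the same $\P_m$-measure as $\{X>k\}$, and then apply the Hardy--Littlewood/bathtub inequality to conclude $\Q_m(R^+_m(X)>k)\geq\Q_m(X>k)$; the decreasing case follows from $R^-_m(X)=-R^+_m(-X)$, which is just the definition. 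This buys a self-contained and more transparent proof of the $\Q_m$-inequalities, and it makes explicit the mechanism ($\Q_m(B)=\E_{\P_m}(x\,1_B)$ maximised over sets of fixed $\P_m$-measure by upper level sets of $x$) that the paper only exploits explicitly later, in its proof of \eqref{eq:transitivity}. No gaps; the atomic interpolation you flag as the main obstacle is exactly the point that needs, and in your sketch receives, the care it requires.
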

\begin{proof}
	Pick $z \in (0,1)$. Since $F_m$ is an increasing function, we can find $x_0 \in (0,\infty)$ with
	$\lim_{x^\prime \to x_0-} F_m(x) \leq z \leq \lim_{x^\prime \to x_0+} F_m(x)$. Hence we can find $y_0$ with $U_m(x_0,y_0)=z$. Since $F_m$ is increasing,
	we deduce that
	\begin{align}
	\P_m( U_m(\omega) \leq z)
	&= \P_m( x(\omega)<x_0 \text{ or } (x(\omega)=x_0 \text{ and }y(\omega)\leq y_0) ) \nonumber \\
	&= \P_m( x(\omega)<x_0) + \P_m(x(\omega)=x_0)\P_m(y(\omega)\leq y_0) \nonumber \\
	&= \lim_{x \to x_0-} F_m(x) + y (\lim_{x \to x_0+} F_m(x)
	- \lim_{x \to x_0-} F_m(x)) = z. \label{eq:uniformProperty}
	\end{align}
	We deduce first that $U_m$ is measurable since its sublevel sets are measurable. We then deduce that $U_m$ is a uniform random variable as \eqref{eq:uniformProperty} is the defining property of uniform random variables.
	
	Property \eqref{eq:rearrangementFixingProperty} of rearrangement follows immediately from the fact that $U_m$ is uniform and from the definition of rearrangement.
	
	We note that for $\alpha \in (0,1)$,
	\begin{align*}
	\inf \{ x \in \R \mid F_X(x) \geq \alpha \} \leq k 
	&\implies
	F_X(k) \geq \alpha.
	\end{align*}
	So from the definition of rearrangement
	\begin{align*}
	\P(R^+_m(X)(\omega) \leq k)
	&=\P(F^{-1}_X(U_m(\omega)) \leq k)
	=\P(\inf \{ x \in \R \mid F_X(x) \geq U_m(\omega) \} \leq k ) \\
	&\leq \P(F_X(k) \geq U_m(\omega) )
	= F_X(k).
	\end{align*}
	The last step uses \eqref{eq:uniformProperty}. We have established \eqref{eq:increasingRearrangementProperty}. Property \eqref{eq:decreasingRearrangementProperty}
	is now obvious.
	
	From the definition of $U_m$, if $x(\omega)\leq x(\omega^\prime)$ then $U_m(\omega)\leq U_m(\omega^\prime)$. $F_X$ is increasing and $x$ is equal to the Radon--Nikodym derivative $\frac{\ed \Q_m}{\ed \P_m}$. Hence \eqref{eq:increasingInNikodym1} follows.
	
	From the definition of $U_m$, $U_m(x,y)$ is independent of $y$ when $F_m$ is continuous at $x$. Hence $R^\pm_m(X)(x,y)$ is also independent of $y$.  Note that $F_m=F_\frac{\ed \Q_m}{\ed \P_m}$. This establishes \eqref{eq:constantOnFibres}.
	
	To establish \eqref{eq:transitivity} let us suppose $d^{\P_m}(X) \preceq d^{\P_m}(Y)$. This means that
	\[
	F_X(k) \geq F_Y(k) \quad \forall k \in \R
	\]
	where $F_X$ and $F_Y$ are the $\P_m$-measure distribution functions of $X$ and $Y$. 
	Hence
	\begin{equation}
	F_X^{-1}(p) \leq F_Y^{-1}(p) \quad \forall p \in [0,1].
	\label{eq:finversecomp}
	\end{equation}
	We then find
	\begin{align*}
	\Q(R^+_m(X) \leq k) = \E_m( x 1_{(R^+_m(X)\leq k)} )
	= \E_m( x 1_{(F^{-1}_X \circ U_m \leq k)} )
	&\geq \E_m( x 1_{(F^{-1}_Y \circ U_m \leq k)} ) \quad \text{by }\eqref{eq:finversecomp} \\
	&= \Q(R^+_m(Y) \leq k).
	\end{align*}
	So $d^{\Q_m}(R^+_m(X)) \preceq d^{\Q_m}(R^+_m(Y))$ as claimed.
\end{proof}

\begin{lemma}
	If $(\Omega, {\cal F}, \P)$ is a probability space, $X$ and $Y$
	are real random variables and $Z$ is an $\R^k$ random variable
	satisfying
	\[
	d^\P(X \mid Z) \preceq d^\P( Y \mid Z)
	\]	
	then $d^\P(X) \preceq d^\P( Y )$.
	\label{lemma:conditionalExpectation}
\end{lemma}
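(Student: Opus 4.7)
The plan is to use the tower property of conditional expectation, applied to indicator functions of the sublevel sets of $X$ and $Y$. Fix an arbitrary threshold $k \in \R$. The hypothesis $d^\P(X \mid Z) \preceq d^\P(Y \mid Z)$ unpacks to the almost-sure inequality
\[
\P(X \leq k \mid Z) \geq \P(Y \leq k \mid Z),
\]
where these conditional probabilities are understood as $\sigma(Z)$-measurable random variables, well-defined almost surely. Because $Z$ takes values in $\R^k$, a standard existence result for regular conditional distributions (as reviewed in Ito~\cite{itoIntroduction}) guarantees that these conditional probabilities exist and behave as genuine probability measures almost surely, so there are no measurability pathologies to worry about.

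With this in hand, take the $\P$-expectation of both sides. Monotonicity of expectation gives
\[
\E\bigl[\P(X \leq k \mid Z)\bigr] \geq \E\bigl[\P(Y \leq k \mid Z)\bigr],
\]
and the tower property identifies the left-hand side with $\E[\E[1_{\{X \leq k\}} \mid Z]] = \P(X \leq k)$ and the right-hand side with $\P(Y \leq k)$. Since $k$ was arbitrary, this gives $\P(X \leq k) \geq \P(Y \leq k)$ for every $k$, which is exactly $d^\P(X) \preceq d^\P(Y)$.

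There is no substantive obstacle: the only thing to be careful about is the universal-versus-almost-sure quantifier ordering on $k$, since the hypothesis provides a $\P$-null set $N_k$ outside which the conditional inequality holds, and $N_k$ may depend on $k$. This is harmless because we only use one value of $k$ at a time when extracting the unconditional inequality; the conclusion itself is a pointwise statement over $k$ of inequalities between numbers, so no uniform control is required. Hence the argument is complete in a few lines.
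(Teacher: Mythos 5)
Your proof is correct and is essentially the same as the paper's: the paper's one-line argument integrates the conditional inequality $\P(X \leq k \mid Z) \geq \P(Y \leq k \mid Z)$ over the law of $Z$, which is exactly your tower-property step. Your added remarks on measurability and the $k$-dependence of the null set are fine but not needed beyond what the paper already assumes.
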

\begin{proof}
$\P(X \leq k)=\int_{\R^k} \P(X \leq k \mid Z ) \, \ed Z
\leq\int_{\R^k} \P(Y \leq k \mid Z ) \, \ed Z = \P(Y \leq k).$
\end{proof}

\begin{proof}[Proof of Theorem \ref{thm:rearrangement}]
	By Theorem \ref{thm:simpleCompleteMarketN}, we only need consider the case when
	$\Omega=(0,\infty)^n$ equipped with a measure $\mu$ satisfying
	$
	\E_\mu(x_i)=1
	$
	for each coordinate function $x_i$.
	
	Given an integer $j$, $1 \leq j \leq n$, we
	define a random $n-1$ vector $\hat{q}_j(\omega)$ consisting of all the components of $q$ except the $j$th. We write $\hat{\mu}_j$ for the measure induced on $(0,\infty)^{n-1}$ by $q_{\hat{j}}$. We write $q_j$ for the $j$th component of $q$, and write $\mu_j$ for the measure on $(0,\infty)$ induced by $q_j$.
	
	Given a random variable $X$ on $(0,\infty)^n \times [0,1)$ and a value $Q \in (0,\infty)^{n-1}$
	we may define $X_{j,Q}:(0,\infty)\times[0,1) \to \R$ by
	\[
	X_{j,Q}(x,y)=X(Q \oplus_j x, y),
	\]
	where $Q \oplus_j x$ is the vector obtained by inserting a new component with value $x$ at the $j$th index of the vector $Q$. $X_{j,Q}$ is $\hat{\mu}_j$-almost-surely measurable.
	
	Let $y$ denote the final coordinate function on $(0,\infty)^n \times [0,1)$. We define {\em conditional rearrangements} $R^+_j$ and $R^-_j$ as follows
	\[
	R^{\pm}_j(X)(\omega) := R^\pm_{\mu_j}(X_{j,\hat{q}_j(\omega)}) \left( q_j(\omega), y(\omega) \right).
	\]
	We define $R_j=R^+_j$ if $\sign j=1$, and $R_j=R^-_j$ otherwise.
	Since $X_{j,Q}$ is $\hat{\mu}_j$-almost-surely measurable, $R^{\pm}_j$ is well-defined mod $0$.
	
	We need to check that
	$R^{\pm}_j$ is measurable. We note that
	\begin{equation*}
	F^{-1}_{X_{j,\hat{q}_j(\omega)}}(p)
	= \inf \{ z \in \R \mid F_{X_{j,\hat{q}_j(\omega)}}(z) \geq p \} 
	= \inf \{ z \in \Q \mid F_{X_{j,\hat{q}_j(\omega)}}(z) \geq p \}
	\end{equation*}
	using the monotonicity of distribution functions. Define
	\[
	f(z,\omega,p)=\begin{cases}
	z & F_{X_{j,\hat{q}_j(\omega)}}(z) \geq p, \\
	\infty & \text{otherwise}.
	\end{cases}
	\]
	It is obvious from chasing through the definitions that $f$ is measurable.
	The infimum of a countable sequence of measurable functions is measurable. Hence
	$F^{-1}_{X_{j,\hat{q}_j(\omega)}}(p)$ is measurable as a function of the pair $(\omega,p)$.
	By definition
	\[
	R^+_{\mu_j}(X_{j,\hat{q}_j(\omega)})(x,y)=F^{-1}_{X_{j,\hat{q}_j(\omega)}}(U_{\mu_j}(x,y)),
	\]
	so this quantity is measurable as a function of $(\omega,x,y)$. The measurability of $R^{\pm}_j(X)$ is now immediate.
	
	We inductively define $R^*_0(X)=X$ and $R^*_j(X)=R_j(R^*_{j-1}(X))$ for $1\leq j \leq n$. We define $R(X)=R^*_n(X)$.
	
	Let us suppose as induction hypothesis that we have established for some $j<n$ that
	\begin{equation}
	\begin{split}
	d^{\P_i}(X) &= d^{\P_i}(R_j^*(X)) \quad \text{if } i=0 \text{ or } i > j,  \\
	d^{\P_i}((\sign j)X) &\preceq d^{\P_i}(R_j^*((\sign j) X)) \quad \text{otherwise}.
	\end{split}
	\label{eq:inductionHypothesis}
	\end{equation}
	We may then apply equations \eqref{eq:rearrangementFixingProperty}, \eqref{eq:increasingRearrangementProperty}, \eqref{eq:decreasingRearrangementProperty} and \eqref{eq:transitivity}
	to find
	\begin{equation}
	\begin{split}
	d^{\P_i}(X \mid \hat{q}_{j+1}) &= d^{\P_i}(R^*_{j+1}(X)) \mid \hat{q}_{j+1}) \quad \text{if } i=0 \text{ or } i > j+1,  \\
	d^{\P_i}(R^*_{j+1}((\sign j) X) \mid \hat{q}_{j+1}) &\preceq d^{\P_i}( R_{j+1}^*((\sign j) X) \mid \hat{q}_{j+1}) \text{ otherwise}.
	\label{eq:inductionDeduction}
	\end{split}
	\end{equation}
	Applying Lemma \ref{lemma:conditionalExpectation} below, we may deduce from equations \eqref{eq:inductionDeduction} that our induction hypothesis \eqref{eq:inductionHypothesis} will also hold when $j\to j+1$.
	We deduce that \eqref{eq:inductionHypothesis} holds for $0 \leq j \leq n$. This establishes properties (i) and (ii) of $R(X)$.

	For each $i$ ($0\leq i \leq n$), define a partial order $\preceq_i$ on $\R^n$
	by
	\[
	x \preceq_i y \iff \begin{cases}
	(\sign j)x_j \leq (\sign j)y_j  & 1 \leq j \leq i \\
	x_j = y_j  & i < j \leq n.
	\end{cases}
	\]
	We suppose as induction hypothesis that for some $1\leq i\leq n-1$,
	\begin{equation}
	R_{i-1}^*(X)(\omega) \leq R_{i-1}^*(X)(\omega^\prime) \quad \text{if} \quad q(\omega) \prec_i q(\omega^\prime).
	\label{eq:inductionHypothesis2}
	\end{equation}
	Write $q^a(\omega)$ for the vector containing the first $(i-1)$ components
	of $q(\omega)$, $q^b(\omega)$ for the $i$th component of $q(\omega)$ and $q^c(\omega)$ for the remaining components. So $q(\omega)=q^a(\omega)\oplus q^b(\omega) \oplus q^c(\omega)$.
	
	Suppose that $q(\omega) \prec_{i+1} q(\omega^\prime)$ then $q^a(\omega)\preceq q^a(\omega^\prime)$, $q^b(\omega) \leq q^b(\omega^\prime)$,
	$q^c(\omega) = q^c(\omega^\prime)$. We also have either: (a) $q^a(\omega)\prec q^a(\omega^\prime)$ and
		$q^b(\omega)= q^b(\omega^\prime)$;
	(b) $q^a(\omega)= q^a(\omega^\prime)$ and $q^b(\omega)< q^b(\omega^\prime)$;
	or (c) $q^a(\omega)\prec q^a(\omega^\prime)$ and $q^b(\omega)< q^b(\omega^\prime)$.

	In case (a), our induction hypothesis \eqref{eq:inductionHypothesis2} tells us that 
	\[
	R_{i-1}^*(X)(\omega) \leq R_{i-1}^*(X)(\omega^\prime). 
	\]
	Hence by property \eqref{eq:transitivity} of rearrangement
	\[
	R_i^*(X)(\omega)=R_i(R_{i-1}^*(X))(\omega) \leq R_i(R_{i-1}^*(X))(\omega^\prime) 
	= R_i^*(X)(\omega^\prime).
	\]
	
	In case (b), we may apply \eqref{eq:increasingInNikodym1} to
	the rearrangement $R_i$ of the random variable $R_{i-1}^*(X)$ to find
	that $R_i^*(X)(\omega)\leq R_i^*(X)(\omega^\prime)$. In case (c) we apply our results for case (a) and case (b) in succession and use the transitivity of $\leq$ to again find that $R_i^*(X)(\omega)\leq R_i^*(X)(\omega^\prime)$. Thus \eqref{eq:inductionHypothesis2} remains true when we change $(i-1)\to i$.
	
	The induction hypothesis \eqref{eq:inductionHypothesis2} is trivially true when $i=1$, so claim (iii) follows.
\end{proof}

\begin{proof}[Proof of Corollary \ref{cor:rearrangement}]
	Let $X \in L^0(\Omega)$. We define $\tilde{X} \in L^0(\Omega \times [0,1)$ by $\tilde{X}(\omega,y)=X(\omega)$.
	This will satisfy $d^{\P_i}(X)=d^{\P_i}(\tilde{X})$ for all $i$.
	
	Consider case (b) of our claim. By property \eqref{eq:constantOnFibres} of rearrangement, $R_j$, and hence $R$, only depends upon $q$. So we may
	write $R(\tilde{X})=\hat{X}(q)$ for some $\hat{X}$. We define $\tilde{R}(X)=\hat{X}(q)$, and it will satisfy all the desired properties.
	
	Now consider case (a) of our claim. Let us write $\{x_n\}$
	for the countable set of discontinuities of $F_{q_1}$.
	We define a set $
	\Delta_n := \left(q_1\right)^{-1} (x_n)$.
	Since the probability space is standard and atomless, there is a mod $0$
	isomorphism $\phi_n$ from the set $\Delta_n$
	to the set
	$
	\{ x_n \} \times I.
	$
	We write $\Delta=\bigcup \Delta_n$. Property \eqref{eq:constantOnFibres}
	tells us that the rearrangement $R(\tilde{X})(\omega,y)$ only depends upon
	$y$ if $x \in \Omega \setminus \Delta$. So we may define a function $\hat{X}$
	on $(0,\infty) \setminus \{ x_n \}$ by $\hat{X}(q_1)=R(\tilde{X})$ on $\Omega \setminus \Delta$. We now define
	\[
	\tilde{R}(X)(\omega)=\begin{cases}
	\hat{X}(q_1(\omega)) & \omega \in \Omega \setminus \Delta, \\
	R(\tilde{X})(\phi(X)) & \text{otherwise}.
	\end{cases}
	\]
	Since each $\phi_n$ is a mod 0 isomorphism on $\Delta_n$ and preserves the
	Radon--Nikodym derivatives, we see that
	\[
	d^{\P_i}(\tilde{R}(X))=d^{\P_i}(R(\tilde{X}))
	\]
	for $i=0,1$.  The result follows.
\end{proof}

\subsection{Proofs for Section \ref{sec:ctstime}}

Let us briefly review how the measure $\Q$ is constructed. Suppose that further to the assumptions of Definition \ref{def:exchangeMarket},
we may define a process $Z_t$ by
\begin{equation}
Z_t = \int_0^t (\bm{\sigma}^{-1}(r \bm{X}_s -  \bm{\mu})) \cdot \ed \bm{W}_s
\label{eq:defOfZT}
\end{equation}
where $\cdot$ denotes the usual inner product
of vectors. We have suppressed the parameters $(\bm{X}_s,s)$ of the functions $\bm{\sigma}$ and $\bm{\mu}$ to keep our expressions readable, and will do
this throughout this section.
We then define $q_t$ to be the Dol\'eans-Dade exponential of $Z_t$,
\begin{equation}
q_t = \exp\left( Z_t - \frac{1}{2} [Z,Z]_t \right),
\label{eq:qFirstDef}
\end{equation}
so that $q$ is a positive process and a local $\P$-martingale. If $q_t$ is a $\P$-martingale,
then the measure $\Q$ can be defined by
\begin{equation}
\Q(A) = \E_\P(q_T A)
\label{eq:defofQMeasure}
\end{equation}
for a measurable set $A \subset \Omega$.

\begin{proof}[Proof of Theorem \ref{thm:marketPriceOfRiskInvariant}]
	Applying It\^o's Lemma to the defining equation for the Dol\'eans-Dade exponential we compute that
	\[
	\int_0^t \frac{1}{q^2_s} \, \ed [q, q]_s = [Z,Z]_t.
	\]
	Hence by \eqref{eq:defOfZT}	
	\begin{equation}
	\int_0^t \frac{1}{q^2_s} \, \ed [q, q]_s = \int_0^t |\bm{\sigma}^{-1}(r \bm{X}_s -  \bm{\mu})|^2 \ed s.
	\label{eq:amprInvariant}
	\end{equation}
	Since
	\[
	q_t = \frac{\ed \Q}{\ed \P}\Big|_{{\cal F}_t},
	\]
	$q_t$ is manifestly an invariantly-defined stochastic process (for the obvious choice of functor). Hence the left-hand side of equation \eqref{eq:amprInvariant} is manifestly an invariantly-defined stochastic process. We can
	characterise the process $|\bm{\sigma}^{-1}(r \bm{X}_t -  \bm{\mu})|$ as the unique non-negative element in $A_t \in L^0(\Omega \times [0,T], \P \times \lambda)$
	satisfying
	\[
	\int_0^t \frac{1}{q^2_s} \, \ed [q, q]_s = \int_0^t A^2_s \, \ed s.
	\]
	$A_t$ defined in this way is manifestly invariantly defined, so the absolute market
	price of risk is also invariantly defined.
\end{proof}

\begin{proof}[Proof of Theorem \ref{thm:invarianceDimension}]
	Suppose for a contradiction that $n$-dimensional Wiener space, $\Omega_n$, is isomorphic to $m$-dimensional Wiener space with $m>n$. Using this isomorphism we may find $m$ independent standard Brownian motions on $\Omega_n$, $\tilde{W}^j_t$ ($1 \leq j \leq m$). By the martingale
	representation theorem, there are unique, predictable processes $\alpha^{ij}_t$ ($1 \leq i \leq n$, $1 \leq i \leq m$)
	such that
	\[
	\tilde{W}^j_t = \int_0^t \sum_{a=1}^n \alpha^{aj}_s \ed W^a_s.
	\]
	Let $\alpha_t$ be the $n \times m$ matrix with components $\alpha^{ij}$ and let 
	$\id_m$ denote the identity matrix of dimension $m$. We compute the quadratic-covariation matrix of each side in the above expression to obtain
	$\id_m = (\alpha_t) (\alpha_t)^\top$.
	Since $\alpha_t$ has rank less than or equal to $n$, and $\id_m$
	has rank $m$ we obtain the desired contradiction.
\end{proof}

\begin{proof}[Proof of Theorem \ref{thm:testcase}]
	Given such a complete market, let $Q_t$ be defined as in \eqref{eq:qtDefGeneral} and let
	\begin{equation}
	\tilde{Z}_t = \log Q_t + \frac{1}{2} \int_0^t A(s)^2 \ed s.
	\label{eq:defztilde}
	\end{equation}
	We compute
	\begin{align*}
	\ed \tilde{Z}_t &= \ed(\log Q_t) + \frac{1}{2} A(t)^2 \, \ed t
	= \frac{1}{Q_t} \, \ed Q_t - \frac{1}{2 Q^2} \, \ed [Q,Q]_t + \frac{1}{2} A(t)^2 \, \ed t
	= \frac{1}{Q_t} \, \ed Q_t.
	\end{align*}
	Hence $\tilde{Z}_t$ is a continuous local martingale. 
	We now define
	\begin{equation}
	\tilde{W}^1_t = -\int_0^t \frac{1}{A(s)} \, \ed \tilde{Z}_s.
	\label{eq:defwtilde}
	\end{equation}
	$W^1_t$ is a continuous local martingale by our assumptions on $A(t)$. We compute its
	quadratic variation.
	\begin{equation*}
	[\tilde{W}^1,\tilde{W}^1]_t =
	\int_0^t \frac{1}{A(s)^2} \ed [\tilde{Z},\tilde{Z}]_s
	= \int_0^t \frac{1}{A(s)^2} \ed [ \log Q, \log Q]_s
	= \int_0^t \frac{1}{Q_s^2 A(s)^2}\ed [Q,Q]_s = \int_0^t \ed s = t
	\end{equation*}
	by \eqref{eq:defwtilde}, \eqref{eq:defztilde}, It\^o's Lemma and 
	\eqref{eq:amprDefGeneral}.
	It follows by L\'evy's characterisation of Brownian motion that $\tilde{W}^1_t$ is Brownian motion.
	
	We may now find additional Brownian motions, $\tilde{W}^i_t$ for $2 \leq i \leq n$, such that the vector process $\bm{\tilde{W}}_t$ with components $\tilde{W}^i_t$ is a standard $n$-dimensional Brownian motion.
	
	To see this, we use the fact that $\Omega$ is assumed to be an $n$-dimensional Wiener space, so admits an $n$-dimensional standard Brownian motion $\bm{\hat{W}_t}$. Using the martingale representation theorem, we may write $\tilde{W}^1_t=\int_0^t \bm{\alpha}_s \cdot \ed \bm{\hat{W}}_s$ for a predictable vector process $\bm{\alpha}_t$ of norm 1. Given a vector $v \in \R^n$ of norm 1, we define a number $i_k$ for each $2 \leq k \leq n$ by $i_k = \inf\{i \mid \dim \langle v, e_1, e_2, \ldots, e_{i} \rangle \geq k\}$. Then $\{v, e_{i_2}, e_{i_3}, \ldots, e_{i_n}\}$ is a basis of $\R^n$. Applying the Gram--Schmidt process to this basis yields an orthonormal basis $\{v_i \}$ for $\R^n$ with $v_1=v$ and which is determined entirely by $v$. Applying this construction with $v=\bm{\alpha}_s$ we obtain a predictable orthonormal basis $\{ \bm{\alpha}^i_t \}$.
	We now define
	\[
	\tilde{W}^i_t = \int_0^t \bm{\alpha}^i_s  \cdot \ed \bm{W}_s.
	\]
	The process $\bm{\tilde{W}}_t$ is a continuous semi-martingale
	and its quadratic-covariation matrix has components
	\[
	[\tilde{W}^i, \tilde{W}^j]_t= \int_0^t \bm{\alpha}^i_s \cdot \bm{\alpha}^j_s \, \ed s = t\, \delta^{ij}.
	\]
	Hence by L\'evy's characterisation this is indeed $n$-dimensional Brownian motion.
	
	We now define a stochastic process ${\bm X}_t$ by
	\begin{equation}
	\ed {\bm{X}}_t = (r \bm{X}_t + A(t) e_1) \ed t + \ed \bm{\tilde{W}}_t.
	\label{eq:dyanmicsForX}
	\end{equation}
	Here we use the boundedness and measurability of $A$ to ensure existence
	and uniqueness of the solution to this SDE.
	The continuous-time market associated to \eqref{eq:dyanmicsForX} has $Z_t$ given by formula
	\eqref{eq:defOfZT}, so
	\begin{equation}
	\ed Z_t = - A(t)\, \ed \tilde{W}^1_t.
	\label{eq:Zsde}
	\end{equation}
	In particular $
	\ed [Z,Z]_t=A(t)^2 \ed t$, so equation \eqref{eq:qFirstDef} becomes
	\[
	\log( q_t ) = Z_T - \frac{1}{2} \int_0^t A(s)^2 \, \ed s.
	\]
	So we find
	\begin{equation*}
	\ed (\log q_t) = \ed Z_T - \frac{1}{2} A(t)^2 \, \ed t
	= -A(t) \ed \tilde{W}^1_t - \frac{1}{2} A(t)^2 \, \ed t,  \quad \text{ by \eqref{eq:Zsde}}.
	\end{equation*}
	On the other hand we compute from \eqref{eq:defztilde} and \eqref{eq:defwtilde} that
	\begin{equation*}
	\ed (\log Q_t) = \ed \tilde{Z}_t - \frac{1}{2}A(t)^2\, \ed t 
	= -A(t) \ed \tilde{W}^1_t - \frac{1}{2} A(t)^2\, \ed t.
	\end{equation*}
	Since we also have $q_0=Q_0=1$, we see that $Q_t=q_t$.
	
	Prices in $M$ are, by definition, given by
	\[
	c_t(X) = \E( e^{-r(T-t)} Q X \mid {\cal F}_t ) = \E( e^{-r(T-t)} Q_t X_t ).
	\]
	Prices in the complete market associated with \eqref{eq:dyanmicsForX}
	are given by the same formulae with $Q$ replaced by $q$. Hence the costs are the
	same in both markets, showing that we have identified a market isomorphism.
\end{proof}
\begin{proof}[Proof of Theorem \ref{thm:ctsTimeMutualFundTheorem}]
	Without loss of generality our market is a canonical Bachelier market.
	Let $A$ be an invariant convex set of martingales. Let $Y$ be an element of $A$. By the martingale 
	representation theorem
	\[
	Y_t =Y_0 + \sum_{i=1}^n \int_0^t a^i_s \, \ed W^i_s
	\]
	for some predictable processes $a^i_s$. By invariance of $A$, we see that
	\[
	Y_t =Y_0 + \int_0^t a^1_s \, \ed W^1_s  - \sum_{i=2}^n \int_0^t a^i_s \, \ed W^i_s
	\]	
	is also in $A$, as flipping the signs of the Brownian motions $W^k_t$ for
	$2\leq k\leq n$ induces an isomorphism of the canonical Bachelier model.
	
	By the convexity of $A$,
	\[
	Y_t = Y_0 + \int_0^t a^1_s \, \ed W^i_s
	\]	
	lies in $A$. Hence by the theory of \cite{harrisonPliska}, the martingale $Y_t$ can be replicated using a predictable
	self-financing trading strategy using only the asset $W^1_t$ and the
	risk-free asset. A second application of the martingale representation
	theorem shows that the asset $W^1_t$ may itself be replicated by
	a trading strategy using only the assets $X^i_t$. The hedging portfolio
	obtained in this way gives rise to the portfolio referred to in the
	statement of the theorem.
	
	We wish to compute this portfolio explicitly in the case
	of markets of the form \eqref{eq:nDDiffusion}.
	
	We may read off from \eqref{eq:defOfZT} and \eqref{eq:defwtilde} that
	\[
	\ed \tilde{W}^1_t = - \frac{1}{A(s)} \bm{\sigma}^{-1}(r\bm{X}_s-\bm{\mu}) \cdot \ed \bm{W}_s.
	\]
	From \eqref{eq:nDDiffusion} we may write
	\begin{equation*}
	\ed \tilde{W}^1_t = -\tfrac{1}{A(t)} \bm{\sigma}^{-1}(r\bm{X}_t-\bm{\mu}) \cdot ( \bm{\sigma}^{-1} (\ed \bm{X}_t - \bm{\mu} \,\ed t) )
	= -\tfrac{1}{A(t)} (\bm{\sigma} \bm{\sigma}^\top)^{-1}(r\bm{X}_t-\bm{\mu}) \cdot (\ed \bm{X}_t - \bm{\mu} \,\ed t).
	\end{equation*}
	We can now read off that the portfolio of risky assets one should
	hold in order to replicate $W^1_t$ is proportional to 
	$
	(\bm{\sigma} \bm{\sigma}^\top)^{-1}(r\bm{X}_s-\bm{\mu}).
	$
\end{proof}

\section{Basic concepts of category theory}
\label{sec:category}

In this section, we review the concepts from category theory required for this paper.

\begin{definition}
	A {\em category} $C$ consists of the following data:
	\begin{enumerate}[label=(\roman*)]
		\item a class $\ob(C)$ of {\em objects}.
		\item a class $\hom(C)$ of {\em morphisms}. To each morphism $f$
		are associated a source $a \in \ob(C)$ and target $b \in \ob(C)$. We write $f:a \to b$. $\hom(a,b)$ is the class of all morphisms from $a$ to $b$.
		\item for all $a, b, c \in \ob{C}$ a binary operation $\hom(a,b) \times \hom(b,c) \to \hom(a,c)$ called composition. If $f:a \to b$, $g:b \to c$ we write $g \circ f$ or just $g f$ for the composition.		  
	\end{enumerate}
	The composition satisfies
	\begin{enumerate}[label=(\roman*)]
		\item Associativity: If  $f:a\to b$, $g:b\to c$, $h:c\to d$
		\[	f \circ (g \circ h) = (f \circ g) \circ h \]
		\item Identity: For all $x \in \ob(C)$ there exists a morphism ${\mathbf 1}_x:x \to x$
		with the property that if $f:a \to x$, ${\mathbf 1}_x \circ f=f$ and if $g:x \to a$, $g \circ {\mathbf 1}_x = g$.
	\end{enumerate}
\end{definition}

A basic example is the category $\text{Set}$ of all ``small sets''. To define this, one first chooses a large set which contains all the sets you will be interested in. A small set is then defined to be a subset of this large set. We define the morphisms between small sets to be given by functions. One has to consider small sets rather than the category of all possible sets in order to avoid Russell's paradox. In all our definitions of categories below, the objects will be restricted to those given by small sets.

With this technicalities out of the way, we can list various familiar categories: the category $\Group$ of groups with morphisms given by homomorphism; the category $\Vec$ of vector spaces with morphisms given by linear transformations; the category $\Top$ of topological
spaces with morphisms given by continuous functions.

An isomorphism is defined to be a morphism $f$ which admits a two-sided inverse. 
An automorphism is an isomorphism whose source and target coincide. 

A basic technique in proving classification theorems is to identify invariants of the objects one is studying. 
Category theory allows us to formalize this concept.

A covariant functor is a mapping between categories and their morphisms that respects composition and identities.
\begin{definition}
	A {\em covariant functor} $F$ from a category $C$ to a category $D$ is a mapping which
	\begin{enumerate}[nosep,label=(\roman*)]
		\item associates to each object $x \in \ob(C)$ an object in $F(x) \in \ob(D)$.
		\item associates to a morphism $f:x \to y$ in $\hom(C)$ a morphism $F(f):F(x)\to F(y)$ in $\hom(D)$.
	\end{enumerate}
	and which satisfies
	\begin{enumerate}[nosep,label=(\roman*)]
		\item For all $x \in \ob(C)$, $F({\mathbf 1}_x) = {\mathbf 1}_{F(x)}$
		\item If $f:a \to b$ and $g:b \to c$ then $F(g \circ f)=F(g) \circ F(f)$.
	\end{enumerate}
\end{definition}	

A contravariant functor is a mapping between categories and their morphisms that reverses composition and identities.
\begin{definition}
	A {\em contravariant functor} $F$ from a category $C$ to a category $D$ is a mapping which
	\begin{enumerate}[nosep,label=(\roman*)]
		\item associates to each object $x \in \ob(C)$ an object in $F(x) \in \ob(D)$.
		\item associates to a morphism $f:x \to y$ in $\hom(C)$ a morphism $F(f):F(y)\to F(x)$ in $\hom(D)$.
	\end{enumerate}
	and which satisfies
	\begin{enumerate}[nosep,label=(\roman*)]
		\item For all $x \in \ob(C)$, $F({\mathbf 1}_x) = {\mathbf 1}_{F(x)}$
		\item If $f:a \to b$ and $g:b \to c$ then $F(g \circ f)=F(f) \circ F(g)$.
	\end{enumerate}
\end{definition}	

We note that in all our examples the objects are sets and the morphisms are maps between these sets. But the definitions
of category theory allow other types of object and morphism. In particular to any category one can define the {\em opposite category} by reversing the direction of morphisms. This allows one to alternatively define a contravariant functor
as a covariant functor to the opposite category.

The mapping that sends a vector space $V$ to its dual and a linear transformation to its dual is an example of a contravariant functor. The mapping that sends a vector space to its double dual is an example of a covariant functor.

We may now give a formal definition of an invariant (taken from \cite{armstrongMarkowitz}).

\begin{definition}
	Let $C$ be a category and let $F$ be a covariant functor from $C$ to $\Set$. Then an {\em invariantly-defined
	element} for $F$ is a map
	\[
	\phi: \ob(C) \to \Set
	\]
	such that $\phi(c) \in F(c)$ and $\phi(f c)=F(f)\phi(c)$ for all isomorphisms $f$ (recall that in set theory the elements of sets are themselves sets which is why the codomain of $\phi$ is $\Set$ even
	though we think of the values of $\phi$ primarily as elements rather than as sets).
	
	If $F$ is a contravariant functor, an invariantly-defined element is defined in the same way except we instead require that $\phi(f c)=F(f^{-1}) \phi(c)$
	for all isomorphisms $f$.	
	
	If $F$ is a functor from category $C$ to category $D$ and if $D$ is a category whose morphisms are in fact functions, we say that $\phi$ is an invariantly-defined element for $F$ if it is an invariantly-defined element for $U\circ F$ where $U$ is the forgetful functor.
\end{definition}

For example consider the category of smooth surfaces with morphisms given by isometries. Gauss's Theorema Egregium says that the Gaussian curvature is an invariantly-defined element for the contravariant functor
$C^\infty$ which maps a surface to the set of smooth functions on that surface.

In general, if one performs a mathematical construction which does not involve arbitrary choices on invariantly-defined
input, one will obtain an invariantly-defined output. To justify this statement rigorously one needs to show how to mirror the basic constructions of mathematics using category theory. This is discussed in more detail in \cite{armstrongMarkowitz}. As
a result we say that a mathematical object is {\em manifestly invariantly defined} if it is constructed from invariantly-defined inputs without arbitrary choices. For example the square of the Gaussian curvature on a manifold is manifestly invariantly defined once one knows that the Gaussian curvature itself is invariantly defined. What makes the Theorema Egregium remarkable, is that the Gaussian curvature is not manifestly invariantly defined.

The notion of invariantly-defined elements is closely connected to the notion of invariance under the action of a group. Given a category of groups, we may write $\Aut c$ for the group of automorphisms of an object $c$. Let $D$ be a category whose morphisms are in fact functions. Given a functor $F:C\to D$ we define an action on the set $F(c)$ by
\[
f(s)=F(f)f(s)
\]
for $f \in \Aut c$ and $s \in F(c)$. It is easy to show that if $\phi$ is invariantly defined for $F$ then $\phi(c)$
is invariant under $\Aut c$ (see \cite{armstrongMarkowitz}).

\medskip

When we come to define categories for markets, we will choose the objects and morphisms to ensure that financially
interesting questions are manifestly invariantly defined. For example, the solutions sets for portfolio optimization problems in our markets will be invariantly defined. It follows that the solution sets for portfolio optimization problems will be invariant under automorphisms of the markets. For markets with large automorphism groups, this implies significant restrictions on the possible solutions of {\em any} financially interesting question in such a market.

\medskip

One further notion that we will use from category theory is the notion of an equivalence of categories. Let us give
the necessary definitions.

\begin{definition}
	A {\em natural transformation}, $\eta$, from a functor $F:C\to D$ to a functor $G:C \to D$ is a family of morphisms
	satisfying
	\begin{enumerate}[nosep,label=(\roman*)]
		\item For each $X \in \ob(C)$ we have a morphism $\eta_X:F(X)\to G(X)$ in $D$.
		\item For every morphism $f:X \to Y$ in $C$ we have
		\[
		\eta_Y \circ F(f)=F(f) \circ \eta_X.
		\]
	\end{enumerate}
	If each $\eta_X$ is an isomorphism, $\eta$ is called a {\em natural isomorphism}.
\end{definition}

\begin{definition}
	An {\em equivalence of categories} $C$ and $D$ consists of two covariant functors $F:C \to D$ and $G:D \to C$,
	a natural isomorphism $\epsilon$ from $F\circ G$ to $\id_C$ and a natural isomorphism $\eta$ from $G \circ F$ to $\id_D$.
	Here $\id_X$ denotes the identity functor acting on a category $X$.
\end{definition}

We say that two categories are equivalent if an equivalence between the categories exists. We say that two categories $C$ and $D$ are {\em in duality} if $C$ is equivalent to the opposite category of $D$.

\end{appendices}

\end{document}